\documentclass[10pt]{article}
\usepackage{hyperref}
\usepackage{bbm}
\usepackage{amssymb,amsmath,amsthm, fullpage}
\usepackage{color}
\usepackage{tikz}
\usepackage[mathscr]{euscript}
\usepackage{enumerate}
\usepackage[conditional,light,first,bottomafter]{draftcopy}
\draftcopyName{DRAFT\space\today}{130}
\draftcopySetScale{65}
\usepackage[letterpaper,hmargin=3.3cm,vmargin={2.6cm,3.2cm}]{geometry}
\geometry{foot=0.9cm}

\geometry{pdftex}
\usepackage{setspace}
\singlespacing

\makeatletter
\renewcommand{\section}{\@startsection%
{section}%
{1}%
{0em}%
{1.7em}%
{1.2em}%
{\normalfont\large\centering\bfseries}}
\renewcommand{\@seccntformat}[1]%
{\csname the#1\endcsname.\hspace{0.5em}}
\makeatother
\renewcommand{\thesection}{\arabic{section}}


\numberwithin{equation}{section}
\renewcommand\appendix{\par
\setcounter{section}{0}%
\setcounter{subsection}{0}%
\setcounter{theorem}{0}
\setcounter{table}{0}
\setcounter{figure}{0}
\gdef\thetable{\Alph{table}}
\gdef\thefigure{\Alph{figure}}
\section*{Appendix}
\gdef\thesection{\Alph{section}}
\setcounter{section}{1}}
\newtheorem{theorem}{Theorem}[section]
\newtheorem{proposition}[theorem]{Proposition}
\newtheorem{lemma}[theorem]{Lemma}
\newtheorem{corollary}[theorem]{Corollary}
\theoremstyle{definition}
\newtheorem{definition}{Definition}
\newtheorem{remark}{Remark}
\newtheorem{notation}{Notation}

\newcommand{\reals}{\mathbb{R}}
\newcommand{\nats}{\mathbb{N}}

\newcommand{\complex}{\mathbb{C}}
\newcommand{\abs}[1]{\left|#1\right|}
\newcommand{\norm}[1]{\left\|#1\right\|}
\newcommand{\I}{{\rm i}}
\newcommand{\inner}[2]{\left\langle#1,#2\right\rangle}

\newcommand{\cH}{{\cal H}}
\newcommand{\cc}[1]{\overline{#1}}
\newcommand{\eval}[1]{\upharpoonright_{#1}}
\newcommand{\convergesto}[2]{\xrightarrow[#1\to #2]{}}


\DeclareMathOperator{\im}{Im}
\DeclareMathOperator{\dom}{dom}

\DeclareMathOperator{\ran}{ran}

\DeclareMathOperator{\clos}{clos}
\DeclareMathOperator*{\slim}{s-lim}
\begin{document}
\title{\sc Scattering theory for a class of non-selfadjoint  extensions of symmetric operators
\footnotetext{%
Mathematics Subject Classification (2010):
47A45 
34L25 
81Q35 
}
\footnotetext{%
Keywords:
Functional model;
Extensions of symmetric operators; Boundary triples;
Scattering theory
}
}
\author{
\textbf{Kirill D. Cherednichenko}
\\
\small Department of Mathematical Sciences,\\[-1.1mm]
\small University of Bath,\\[-1.1mm]
\small
Claverton Down, Bath BA2 7AY, U.K.\\[-1.1mm]
\small
\texttt{K.Cherednichenko@bath.ac.uk}
\\[2mm]
\textbf{Alexander V. Kiselev}
\\
\small
Departamento de F\'{i}sica Matem\'{a}tica, 
\\[-1.1mm]
\small
Instituto de Investigaciones en Matem\'aticas Aplicadas y en Sistemas, 
\\[-1.1mm]
\small
Universidad Nacional Aut\'onoma de M\'exico,
\\[-1.1mm]
\small
C.P. 04510, M\'exico D.F.
\\[-1.1mm]
\small
and
\\[-1.1mm]
\small
 International Research Laboratory ``Multiscale Model Reduction",
\\[-1.1mm]
\small
Ammosov North-Eastern Federal University,
\\[-1.1mm]
\small
Yakutsk, Russia
\\[-1.1mm]
\small
\texttt{alexander.v.kiselev@gmail.com}
\\[2mm]
\textbf{Luis O. Silva}
\\
\small
Departamento de F\'{i}sica Matem\'{a}tica,\\[-1.2mm]
\small
Instituto de Investigaciones en Matem\'aticas Aplicadas y en Sistemas,\\[-1.1mm]
\small
Universidad Nacional Aut\'onoma de M\'exico,\\[-1.1mm]
\small
C.P. 04510, M\'exico D.F.\\[-1.1mm]
\small
\texttt{silva@iimas.unam.mx}
}
\date{}
\maketitle
\vspace{-4mm}
\begin{center}
  \textsl{To the fond memory of Professor Boris Pavlov}
\end{center}
\begin{center}
\begin{minipage}{5in}
  \centerline{{\bf Abstract}} \bigskip This work deals with the
  functional model for a class of extensions of symmetric operators and its
  applications to the theory of wave scattering. In terms of Boris
  Pavlov's spectral form of this model, we find explicit formulae for
  the action of the unitary group of exponentials corresponding to
  {\it almost solvable} extensions of a given closed symmetric operator with
  equal deficiency indices. On the basis of these formulae, we are
  able to construct wave operators and derive a new representation for the scattering matrix for
  pairs of such extensions in both self-adjoint and non-self-adjoint situations.  
\end{minipage}
\end{center}
\section{Introduction}
\label{sec:introduction}

Over the last eighty years or so, the subject of the mathematical
analysis of waves interacting with obstacles and structures
(``scattering theory'') has served as one of the most impressive
examples of bridging abstract mathematics and applications to physics,
which in turn motivated the development of new mathematical
techniques. The pioneering works of von Neumann \cite{MR1503053},
\cite{MR0066944} and his contemporaries during 1930--1950, on the mathematical foundations of quantum mechanics, fuelled the interest of
mathematical analysts to formulating and addressing the problems of
direct and inverse wave scattering in a rigorous way.

The foundations of the modern mathematical scattering theory were laid
by Friedrichs, Kato and
Rosenblum \cite{MR0407617,
  MR0090028, Friedrichs} and subsequently by Birman and Kre\u\i n
\cite{MR0139007},  Birman \cite{Birman_1963}, Kato and Kuroda \cite{MR0385604} and Pearson
\cite{MR0328674}. For a detailed exposition of this subject, see
\cite{MR529429, MR1180965}.

The direct and inverse scattering  on the
infinite and semi-infinite line was extensively studied using the classical
integral-operator techniques by Borg \cite{MR0015185, MR0058063},
Levinson \cite{MR0032067}, Krein \cite{MR0039895, MR0058072,
  MR0078543}, Gel'fand and Levitan \cite{MR0045281}, Marchenko
\cite{MR0075402}, Faddeev \cite{MR0149843, Faddeev_additional},  Deift and Trubowitz
\cite{MR0622619}. In this body of work, the crucial role is played by
the classical Weyl-Titchmarsh $m$-coefficient.

In the general operator-theoretic context, the $m$-coefficient is
generalised to both the classical Dirichlet-to-Neumann map (in the PDE
setting; cf. also \cite{Amrein}), and to the so-called $M$-operator, which takes the form of
the
Weyl-Titchmarsh $M$-matrix in the case of symmetric operators with equal deficiency
indices.  This has been exploited extensively in the study of
operators, self-adjoint and non-selfadjoint alike, through the works 
in Ukraine (brought about by the influence of M.Kre\u\i n) on the theory of boundary triples and the
associated $M$-operators (Gorbachuk and Gorbachuk \cite{MR1294813},
Ko\v cube\u\i\ \cite{MR0365218, MR0592863}, Derkach and Malamud
\cite{MR1087947} and further developments) and of the students of Pavlov in
St.\,Petersburg
(see {\it e.g.}  \cite{MR2330831,Kiselev,MR2418300}).

A parallel approach, which provides a
connection to the theory of dissipative operators, was developed by
Lax and Phillips \cite{MR0217440}, who analysed the direct scattering
problem for a wide class of linear operators in the Hilbert space,
including those associated with the multi-dimensional acoustic problem
outside an obstacle, using the language of group theory (and, indeed,
thereby developing the semigroup methods in operator theory). The
associated techniques were also termed ``resonance scattering'' by Lax
and Phillips.

By virtue of the underlying dissipative framework, the above activity
set the stage for the applications of non-selfadjoint techniques, in
particular for the functional model for contractions and dissipative
operators by Sz\"{o}kefalvi-Nagy and Foia\c{s} \cite{MR2760647}, which
has shown the special r\^{o}le in it of the characteristic function of Liv\v{s}ic
\cite{MR0020719} and allowed Pavlov \cite{MR0510053} to construct a
spectral form of the functional model for dissipative operators.  The
connection between this work and the concepts of scattering theory was
uncovered by the famous theorem of Adamyan and Arov
\cite{MR0206711}. In a closely related development, Adamyan and Pavlov \cite{AdamyanPavlov}
established a description for the scattering matrix of a pair of self-adjoint extensions of a symmetric operator (densely or non-densely defined) with finite equal deficiency indices.

Further, Naboko \cite{MR573902} advanced the
research initiated by Pavlov, Adamyan and Arov in two
directions. Firstly, he generalised Pavlov's construction to the case
of non-dissipative operators, and secondly, he 
provided
explicit formulae for the wave operators and scattering matrices of a
pair of (in general, non-selfadjoint) operators in the functional
model setting. It is remarkable that in this work of Naboko the
difference between the so-called stationary and non-stationary
scattering approaches disappears.

There exists a wide body of work, carried out in the last thirty years or so, dedicated to the analysis of the scattering theory for general non-selfadjoint operators \cite{Mak_Vas, Solomyak, Tikhonov, Ryzh_ac_sing, Ryzhov_equipped, Ryzhov_closed}. These works make a substantial use of functional model techniques in the non-selfadjoint case and provide the most general results, without taking into account the specific features of any particular subclass of operators under consideration. In particular, the paper \cite{Ryzhov_equipped} essentially generalises to the non-selfadjoint case the classical stationary approach to the construction of wave operators \cite{MR1180965}. On the other hand, as pointed out above, the study of non-selfadjoint extensions of symmetric operators naturally lends itself to the use of the theory of boundary triples and associated $M$-operators, thus taking advantage of the concrete properties of this subclass. This has been exploited in \cite{MR2330831}, where a functional model for dissipative and non-dissipative almost solvable extensions of symmetric operators was developed in terms of the theory of boundary triples. This work, however, stops short of 
the characterisation of the absolutely continuous subspace of the operator considered in the ``natural'' terms associated with boundary triples and $M$-operators ({\it cf.} \cite{Ryzh_ac_sing, Romanov}, where the concept of the absolutely continuous subspace of a self-adjoint operator is discussed in the most general case). If one bridges this (in fact, very narrow) gap, as we do in Sections \ref{sec:resolvent-extensions}, \ref{sec:functional-model}, this opens up a possibility to directly apply Naboko's argument \cite{MR573902}, which then yields both the explicit expression for wave operators and concise, easily checked sufficient conditions for the existence and completeness of wave operators, formulated in natural terms. What is more, it also yields an explicit expression for the scattering matrix of the problem, formulated in terms of the $M$-operator and parameters fixing the extension.

Our aim in the present work is therefore twofold: first, it is to expose the methodology of functional model in application to the development of scattering theory for non-selfadjoint operators and, second, to apply this methodology to the case of almost solvable extensions of symmetric operators, yielding new, concise and explicit, results in the special and important in applications case. With this aim in mind, we endeavour to extend the approach of Naboko
\cite{MR573902}, which was formulated for additive perturbations of
self-adjoint operators, to the case of \emph{both self-adjoint and non-self-adjoint} extensions of symmetric
operators, under the only additional assumption that this extension is almost solvable, see Section \ref{sec:boundary-triples} below for precise definitions. Unfortunately, the named assumption is rather restrictive in nature, see Remark \ref{rem:almost-solvable} below. Still, already the framework of almost solvable extensions allows us to consider direct and inverse scattering problems on quantum graphs, see
\cite{CherednichenkoKiselevSilva} for an application of abstract results of this paper in the mentioned setting. We also point out that the case we consider proves to be sufficiently generic to allow for a treatment of the scattering problem for models of double porosity in homogenisation, see \cite{CEK, CEKN}.


The paper is organised as follows. In Section
\ref{sec:boundary-triples} we recall the key points of the theory of
boundary triples for extensions of symmetric operators with equal
deficiency indices and introduce the associated $M$-operators,
following mainly \cite{MR1087947} and \cite{MR2330831}. In Section
\ref{sec:resolvent-extensions} we derive formulae for the resolvents
of the family of extensions $A_\varkappa$ parametrised by operators
$\varkappa$ in the boundary space, in terms of the so-called
characteristic function of a fixed element of the family. These
formulae are then employed in Section \ref{sec:functional-model} to
derive the functional model for the above family of extensions. The
material of Sections \ref{sec:resolvent-extensions} and
\ref{sec:functional-model} closely follows the approach of
\cite{MR2330831} and is based on the much more general facts of {\it e.g.} \cite{MR573902, Mak_Vas, Ryzh_ac_sing, Ryzhov_closed}, and references therein. Moreover, although this functional model can be seen as a particular case of more general results of the above papers, it proves however much more convenient for our purposes, due to the fact that it is explicitly formulated in the natural, from the point of view of the operator considered, terms. In Section \ref{sec:inclusion-smooth} we
characterise the absolutely continuous subspace of $A_\varkappa$ as
the closure of the set of ``smooth'' vectors in the model Hilbert
space introduced in Section~\ref{sec:functional-model}. In doing so, we follow the general framework of \cite{Ryzh_ac_sing}, but, again, the fact that we use the specifics of a particular class of non-selfadjoint operators allows us to obtain this characterisation in a concise, easily usable form. On this basis, in Section~\ref{sec:wave-operators} we
define the wave operators for a pair from the family $\{A_\varkappa\}$
and demonstrate their completeness property under natural, easily verifiable assumptions. This, in combination with the functional model, allows us to obtain formulae for the scattering
operator of the pair. In Section~\ref{sec:spectral-repr-ac} we describe the representation of the
scattering operator as the scattering
matrix, which is explicitly written in terms of the $M$-operator.


\section{Extension theory and boundary triples}
\label{sec:boundary-triples}
Let $\cH$ be a separable Hilbert space and denote by
$\inner{\cdot}{\cdot}$ the inner product in this space. 

Let $A$ be a closed symmetric operator densely defined in $\cH,$ {\it
  i.e.} $A\subset A^*,$ with domain $\dom(A)\subset\cH.$
The deficiency indices $ n_+(A), n_-(A)$ are defined as follows:
\begin{equation*}
  n_\pm(A):=\dim(\cH\ominus\ran(A-z I))=\dim(\ker(A^*-\cc{z}I))\,,\quad z\in\complex_\pm\,.
\end{equation*}
A closed
operator $L$ is said to be \emph{completely non-selfadjoint} if there
is no subspace reducing $L$ such that the part of $L$ in this subspace is self-adjoint. A completely non-selfadjoint symmetric operator
is often referred to as \emph{simple}.

As shown in \cite[Sec.\,1.3]{MR0048704}(see also
\cite[Thm.\,1.2.1]{MR1466698}), the maximal invariant subspace for the
closed symmetric operator $A$ in which it is self-adjoint is
  $\bigcap_{z\in\complex\setminus\reals}\ran(A-zI)\,.$
Thus, a necessary and sufficient condition for the closed symmetric
operator $A$ to be completely non-selfadjoint (or simple) is that
\begin{equation}
\label{eq:completely-non-selfadjoint-criterion}
  \bigcap_{z\in\complex\setminus\reals}\ran(A-zI)=\{0\}\,.
\end{equation}

In this work we consider extensions of a given closed symmetric
operator $A$ with equal deficiency indices, {\it i.\,e.} $n_-(A)=n_+(A)$,
and use the theory of boundary triples.
In order to deal with the family of extensions $\{A_\varkappa\}$ of
the symmetric operator $A$ (where the parameter $\varkappa$ is itself an operator, see notation
immediately following Proposition \ref{prop:weyl-function-properties}), we first construct a functional model of its
particular dissipative extension. This is done following the
Pavlov-Naboko procedure, which in turn stems from the functional model of Sz\"{o}kefalvi-Nagy and Foia\c{s}. This allows us to obtain a simple model for the
whole family $\{A_\varkappa\},$ in particular yielding a possibility
to apply it to the scattering theory for certain pairs of operators in
$\{A_\varkappa\}$, for both cases when these operators are
self-adjoint and non-selfadjoint, including the possibility that both operators of the pair are non-selfadjoint.


Taking into account the importance of dissipative operators in our
work, we briefly recall that a densely defined operator $L$ in $\cH$
is called dissipative if
\begin{equation}
  \label{eq:definition-disipactive}
  \im\inner{Lf}{f}\ge 0\quad \quad\forall f\in\dom(L).
\end{equation}
A dissipative operator $L$ is called maximal if $\complex_-$ is
 contained in its resolvent set
$\rho(L):=\{z\in\complex: (L-zI)^{-1}\in\mathcal{B}(\cH)\}$.
($\mathcal{B}(\cH)$ denotes the space of bounded operators defined on
the whole Hilbert space $\cH$). Clearly, a maximal dissipative
operator is closed; any dissipative operator admits a maximal extension.

We next describe the boundary triple approach to
the extension theory of symmetric operators with equal deficiency
indices (see in \cite{Derkach} a review of the subject).
This approach has proven to be  particularly useful in the study of
self-adjoint extensions of ordinary differential operators of second order.


\begin{definition}
  \label{def:boudary-triple}
  For a closed symmetric operator $A$ with equal deficiency indices, consider the linear mappings
  \begin{align*}
    \Gamma_1:\dom(A^*)\to\mathcal{K},\ \ \ \ \ \
\Gamma_0:\dom(A^*)\to\mathcal{K}\,,
  \end{align*}
where $\mathcal{K}$ is an auxiliary separable Hilbert space, such that
\begin{align}
(1)&\quad
  \inner{A^*f}{g}_\cH-\inner{f}{A^*g}_\cH=
\inner{\Gamma_1f}{\Gamma_0g}_\mathcal{K}-\inner{\Gamma_0f}{\Gamma_1g}_\mathcal{K};\label{Green_formula}\\[0.4em]
(2)&\quad \text{The mapping }\dom(A^*)\ni f\mapsto
\binom{\Gamma_1f}{\Gamma_0f}\in\mathcal{K}\oplus\mathcal{K}\text{ is surjective.}\nonumber
\end{align}
Then the triple $(\mathcal{K},\Gamma_1,\Gamma_0)$ is said to be a \emph{boundary
  triple} for $A^*$.
\end{definition}
\begin{remark}
  \label{rem:equal-indices-existence-triple}
  There exist boundary triples for $A^*$ whenever $A$ has equal
  deficiency indices (the case of infinite indices is not excluded), see \cite[Theorem 3]{MR0365218}.
\end{remark}
In this work we consider \emph{proper extensions} of
$A$, {\it i.e.} extensions of $A$ that are restrictions of $A^*$.
The extensions $A_B$ for which there exists a triple $(\mathcal K, \Gamma_1, \Gamma_0)$
and $B\in\mathcal{B}(\mathcal{K})$ such that
\begin{equation}
\label{eq:extension-by-operator}
  f\in\dom(A_B) \iff \Gamma_1f=B\Gamma_0f\,.
\end{equation} are called
\emph{almost solvable}
with respect to the triple $(\mathcal K,
\Gamma_1, \Gamma_0)$.


\begin{remark}
  \label{rem:almost-solvable}
  Admittedly, the framework of almost solvable extensions is quite restrictive. In particular, even the standard three-dimensional scattering problem for PDEs in an exterior domain, with classical boundary condition (self-adjoint and non-selfadjoint alike) cannot be treated using this approach, see the discussion in \cite{MR2418300} and also references therein. It would appear that one needs to employ the more general setting of linear relations \cite{MR1154792}, in order to accommodate this problem. However, the named setting is substantially more involved and complex than the theory of almost solvable extensions, so that the  blueprints of the Sz-Nagy--Foia\c{s} model of closed linear relations do not seem to be available as of today.

  On the other hand, there exist at least two recent developments suggesting that the approach of the present paper can be extended beyond the natural limitations of the theory of almost solvable extensions. These are, firstly, the work \cite{Ryzh_spec}, which offers a unified operator-theoretic approach to boundary-value problems and, in particular, an abstract definition of the $M$-operator suitable for the construction of a functional model; and secondly, the recent paper \cite{BMNW2018}, which provides an explicit form of a functional model for PDE problems associated with dissipative operators. We hope to pursue this rather intriguing subject elsewhere.

\end{remark}

The following assertions, written in slightly different terms, can be
found in \cite[Thm.\,2]{MR0365218} and \cite[Chap.\,3
Sec.\,1.4]{MR1154792} (see also 
\cite[Thm.\,1.1]{MR2330831}, and \cite[Sec.~14]{MR2953553} for an alternative formulation). We compile them in the next
proposition for easy reference.

\begin{proposition}
  \label{prop:properties-almost-extensions}
   Let $A$ be a closed symmetric operator with equal deficiency
  indices and let $(\mathcal{K},\Gamma_1,\Gamma_0)$ be a the boundary
  triple for $A^*$. Assume that $A_B$ is an almost solvable
  extension. Then the following statements hold:
  \begin{enumerate}
  \item $f\in\dom(A)$ if and only if $\Gamma_1f=\Gamma_0f=0.$
  \item $A_B$ is maximal, i.\,e., $\rho(A_B)\ne\emptyset$.
  \item $A_B^*=A_{B^*}.$
  \item $A_B$ is dissipative if and only if $B$ is dissipative.
  \item $A_B$ is self-adjoint if and only if $B$ is self-adjoint.
  \end{enumerate}
\end{proposition}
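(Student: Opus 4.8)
The plan is to dispatch the five assertions in the order (1), then (3)--(5), and finally (2): statements (1) and (3)--(5) are purely algebraic consequences of Green's identity~\eqref{Green_formula} together with the surjectivity in part~(2) of Definition~\ref{def:boudary-triple}, whereas (2) is the genuinely analytic claim and is where I expect the real work to lie. The one observation that drives all the algebra, and which I would record at the outset, is that \emph{$\Gamma_0$ maps $\dom(A_B)$ onto $\mathcal{K}$}: given $u\in\mathcal{K}$, surjectivity of $f\mapsto(\Gamma_1 f,\Gamma_0 f)$ furnishes some $f\in\dom(A^*)$ with $\Gamma_0 f=u$ and $\Gamma_1 f=Bu$, and this $f$ lies in $\dom(A_B)$ with $\Gamma_0 f=u$; the same remark with $B$ replaced by $B^*$ will be used for~(3).

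For (1), if $f\in\dom(A)$ then $A^*f=Af$ and symmetry gives $\inner{A^*f}{g}_{\cH}=\inner{f}{A^*g}_{\cH}$ for every $g\in\dom(A^*)$, so the right-hand side of~\eqref{Green_formula} vanishes identically in $g$; letting $(\Gamma_1 g,\Gamma_0 g)$ sweep out $\mathcal{K}\oplus\mathcal{K}$ and choosing first $\Gamma_0 g=0$ and then $\Gamma_1 g=0$ forces $\Gamma_1 f=0$ and $\Gamma_0 f=0$. Conversely, if $\Gamma_1 f=\Gamma_0 f=0$ then~\eqref{Green_formula} yields $\inner{A^*g}{f}_{\cH}=\inner{g}{A^*f}_{\cH}$ for all $g\in\dom(A^*)$, i.e. $f\in\dom\bigl((A^*)^*\bigr)$ with $(A^*)^*f=A^*f$; since $A$ is closed, $(A^*)^*=A$ and hence $f\in\dom(A)$.

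Statements (3)--(5) then follow quickly. For (3) I would first note that $A\subset A_B\subset A^*$ gives, upon taking adjoints, $A\subset A_B^*\subset A^*$, so $A_B^*$ is a restriction of $A^*$ and acts as $A^*$ on its domain; by the definition of the adjoint and~\eqref{Green_formula}, $g\in\dom(A_B^*)$ iff $g\in\dom(A^*)$ and $\inner{\Gamma_1 f}{\Gamma_0 g}_{\mathcal{K}}-\inner{\Gamma_0 f}{\Gamma_1 g}_{\mathcal{K}}=0$ for all $f\in\dom(A_B)$. Substituting $\Gamma_1 f=B\Gamma_0 f$ rewrites this as $\inner{\Gamma_0 f}{B^*\Gamma_0 g-\Gamma_1 g}_{\mathcal{K}}=0$ for all $f\in\dom(A_B)$, and since $\Gamma_0 f$ exhausts $\mathcal{K}$ it is equivalent to $\Gamma_1 g=B^*\Gamma_0 g$; thus $A_B^*=A_{B^*}$. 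For (4), setting $g=f$ in~\eqref{Green_formula} gives $\im\inner{A_B f}{f}_{\cH}=\im\inner{\Gamma_1 f}{\Gamma_0 f}_{\mathcal{K}}=\im\inner{B\Gamma_0 f}{\Gamma_0 f}_{\mathcal{K}}$ for $f\in\dom(A_B)$, and as $\Gamma_0 f$ ranges over $\mathcal{K}$ the sign condition~\eqref{eq:definition-disipactive} for $A_B$ is equivalent to $\im\inner{Bu}{u}_{\mathcal{K}}\ge0$ for all $u$, i.e. to dissipativity of $B$. Finally (5) combines (3) with the injectivity of $B\mapsto A_B$ (if $A_B=A_{B'}$ then $(B-B')\Gamma_0 f=0$ on the common domain, whence $B=B'$ because $\Gamma_0$ is onto): $A_B=A_B^*$ holds iff $A_B=A_{B^*}$ iff $B=B^*$.

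The remaining assertion (2) is the analytic heart of the proposition and the step I expect to be hardest, since a nonempty resolvent set is not automatic for a closed operator and cannot be read off from~\eqref{Green_formula}. My plan is to realise the resolvent of $A_B$ by a Krein-type formula relative to the self-adjoint reference extension $A_0:=A^*\eval{\ker\Gamma_0}$, whose resolvent set contains $\complex\setminus\reals$. Introducing the associated $\gamma$-field $\gamma(z)$ and Weyl function $M(z)=\Gamma_1\gamma(z)$ (developed in the next section), I would solve $(A^*-z)f=h$ under the constraint $f\in\dom(A_B)$ by the ansatz $f=(A_0-z)^{-1}h+\gamma(z)\xi$, $\xi\in\mathcal{K}$; imposing $\Gamma_1 f=B\Gamma_0 f$ reduces unique solvability, for $z\in\rho(A_0)$, to the bounded invertibility of $B-M(z)$. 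The crux is then to produce at least one such $z$. Here one uses that $\Gamma_0\eval{\mathcal{N}_z}$ has bounded inverse $\gamma(z)$ on $\mathcal{N}_z:=\ker(A^*-zI)$, so that $\gamma(z)$ is bounded below and $\im M(z)=(\im z)\,\gamma(z)^*\gamma(z)$ is strictly positive on $\complex_+$; the delicate, and in my view decisive, point is the requisite growth of $M(z)$ as $z$ approaches $\reals$ or $\infty$, which is exactly the Nevanlinna-type behaviour of the Weyl function of an ordinary boundary triple and guarantees invertibility of $B-M(z)$ for a suitable $z$. I would either establish this growth directly from the integral representation of $M$ or, more economically, invoke it from the cited results of Kochube\u\i~\cite[Thm.\,2]{MR0365218} and Derkach--Malamud~\cite{MR1087947}; this analytic input is the genuine obstacle, the rest being bookkeeping.
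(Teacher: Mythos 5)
The paper offers no proof of this proposition at all: it is explicitly compiled from the literature (Ko\v{c}ube\u{\i} \cite{MR0365218}, Gorbachuk--Gorbachuk \cite{MR1154792}, Derkach--Malamud \cite{MR1087947}), so there is nothing internal to compare your argument against. Judged on its own terms, your treatment of (1) and (3)--(5) is correct and complete: the preliminary observation that $\Gamma_0$ maps $\dom(A_B)$ \emph{onto} $\mathcal{K}$ (a direct consequence of the surjectivity of $f\mapsto(\Gamma_1f,\Gamma_0f)$) is exactly the right lever, and combined with Green's identity \eqref{Green_formula} and $(A^*)^*=A$ it yields (1), the computation of $A_B^*$ as $A_{B^*}$, the identity $\im\inner{A_Bf}{f}=\im\inner{B\Gamma_0f}{\Gamma_0f}$ for (4), and the injectivity of $B\mapsto A_B$ needed to deduce (5) from (3). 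This is the standard route and I see no gaps there.

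The one soft spot is in (2), and you have correctly located it, but your proposed repair is weaker than you suggest. The reduction of $z\in\rho(A_B)$ (for $z\in\rho(A_0)$, $A_0:=A^*\eval{\ker\Gamma_0}$) to the bounded invertibility of $B-M(z)$ is right, as is the identity $\im M(z)=(\im z)\,\gamma(z)^*\gamma(z)\ge\varepsilon(z)I$ with $\varepsilon(z)>0$. That lower bound settles the matter immediately when $B$ is dissipative or self-adjoint (then $\im\bigl(B-M(z)\bigr)$ is uniformly definite for $z$ in one half-plane, forcing bounded invertibility), but for a general bounded $B$ it does not: writing $B=B_R+\I B_I$, one needs $\|(B_R-M(z))^{-1}B_I\|<1$ somewhere, and the ``Nevanlinna growth'' you invoke is not available in general --- one only gets $\im M(\I y)\ge \tfrac{y}{1+y^2}\,\gamma(\I)^*\gamma(\I)$ from the functional calculus for $A_0$, and $\im M(\I y)$ need not grow (indeed $\gamma(\I y)\to0$ strongly), so no choice of $y$ is guaranteed to beat $\|B_I\|$ by this estimate alone. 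Your fallback of citing \cite[Thm.\,2]{MR0365218} and \cite{MR1087947} for this step is legitimate --- it is precisely what the paper does for the whole proposition --- but you should present it as the actual proof of (2) rather than as an optional shortcut. Two further minor points: the self-adjointness of the reference extension $A_0=A^*\eval{\ker\Gamma_0}$ does not follow from your part (5), since $\ker\Gamma_0$ is not of the form $\Gamma_1f=B\Gamma_0f$; it requires the observation that $(\mathcal{K},-\Gamma_0,\Gamma_1)$ is again a boundary triple (or a separate Green's-identity computation). And in the Kre\u{\i}n-formula ansatz you should record that $\Gamma_0(A_0-z)^{-1}h=0$ and $\Gamma_0\gamma(z)\xi=\xi$, so that the boundary condition becomes $(B-M(z))\xi=\Gamma_1(A_0-z)^{-1}h$; this is implicit in your sketch but worth writing out.
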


\begin{definition}
  \label{def:weyl-function}
  The function $M:\complex_-\cup\complex_+\to \mathcal{B}(\cH)$ such
  that
  \begin{equation*}
    M(z)\Gamma_0f=\Gamma_1f\ \ \ \ \ \forall f\in\ker(A^*-zI)
  \end{equation*}
is \emph{the Weyl function of the boundary triple}
  $(\mathcal{K},\Gamma_1,\Gamma_0)$ for $A^*,$ where $A$ is assumed to be as in Proposition \ref{prop:properties-almost-extensions}.
 \end{definition}

The Weyl function defined above has the following properties \cite{MR1087947}.
\begin{proposition}
  \label{prop:weyl-function-properties}
  Let $M$ be a Weyl function of the boundary triple
  $(\mathcal{K},\Gamma_1,\Gamma_0)$ for $A^*,$ where $A$ is a closed symmetric operator with equal deficiency indices. Then the following statements hold:
  \begin{enumerate}
  \item $M:\complex\setminus\reals\to\mathcal{B}(\mathcal{K})$\,.
  \item $M$ is a $\mathcal{B}(\mathcal{K})$-valued double-sided
    $\mathcal{R}$-function \cite{MR0328627KK}, that is,
    \begin{equation*}
      M(z)^*=M(\cc{z})\quad\text{ and }\quad\im(z)\im(M(z))>0\quad\text{ for }
      z\in\complex\setminus\reals\,.
    \end{equation*}
  \item The spectrum of $A_B$ coincides with the set of points $z_0\in{\mathbb C}$ such
    that $(M-B)^{-1}$
    does not admit analytic continuation into $z_0.$
  \end{enumerate}
\end{proposition}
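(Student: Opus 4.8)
The three assertions are classical (see \cite{MR1087947}); the line of argument I would follow is as below. Throughout I use the reference extension $A_0$ given by $\dom(A_0)=\ker\Gamma_0=\{f\in\dom(A^*):\Gamma_0 f=0\}$. Green's formula \eqref{Green_formula} shows $A_0$ is symmetric, and the surjectivity of $(\Gamma_1,\Gamma_0)$ upgrades this to self-adjointness, so $\complex\setminus\reals\subset\rho(A_0)$ and, more generally, $M$ will be seen to extend analytically to all of $\rho(A_0)$.

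For the well-definedness and boundedness in (1), I would fix $z\in\rho(A_0)$ and use the direct-sum decomposition $\dom(A^*)=\dom(A_0)\dotplus\ker(A^*-zI)$ to show that $\Gamma_0\eval{\ker(A^*-zI)}$ is a bijection onto $\mathcal{K}$. Injectivity follows by inserting $f=g\in\ker(A^*-zI)$ with $\Gamma_0 f=0$ into \eqref{Green_formula}, which gives $2\I\,\im(z)\norm{f}^2=0$ and hence $f=0$. Surjectivity follows by lifting an arbitrary $\phi\in\mathcal{K}$ to some $h\in\dom(A^*)$ with $\Gamma_0 h=\phi$, decomposing $h=h_0+f_z$ along the direct sum, and noting $\Gamma_0 f_z=\phi$. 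Thus $M(z)=\Gamma_1\bigl(\Gamma_0\eval{\ker(A^*-zI)}\bigr)^{-1}$ is a well-defined operator on $\mathcal{K}$, and its boundedness follows from the continuity of $\Gamma_0,\Gamma_1$ in the graph norm of $A^*$ together with the bounded inverse theorem.

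For (2), both identities come from \eqref{Green_formula} applied to deficiency elements. Taking $f\in\ker(A^*-zI)$ and $g\in\ker(A^*-\cc{z}I)$ makes the left-hand side of \eqref{Green_formula} vanish, so $\inner{M(z)\Gamma_0 f}{\Gamma_0 g}=\inner{\Gamma_0 f}{M(\cc{z})\Gamma_0 g}$; since $\Gamma_0 f$ and $\Gamma_0 g$ run through all of $\mathcal{K}$ by the bijectivity just established, this yields $M(z)^*=M(\cc{z})$. Taking instead $g=f\in\ker(A^*-zI)$ gives $\im(z)\norm{f}^2=\im\inner{M(z)\Gamma_0 f}{\Gamma_0 f}=\inner{\im(M(z))\Gamma_0 f}{\Gamma_0 f}$, and since $\Gamma_0 f\neq0$ forces $f\neq0$, the right-hand side is nonzero and carries the sign of $\im(z)$; hence $\im(M(z))$ has the same sign as $\im(z)$, that is, $\im(z)\im(M(z))>0$.

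For (3) I would reduce the resolvent equation for $A_B$ to an equation in $\mathcal{K}$. Fix $z\in\rho(A_0)$ and seek $u\in\dom(A_B)$ solving $(A_B-z)u=h$. Writing $u=u_0+f_z$ with $u_0=(A_0-z)^{-1}h\in\dom(A_0)$ and $f_z\in\ker(A^*-zI)$, the interior equation is automatic, while the boundary condition $\Gamma_1 u=B\Gamma_0 u$ becomes $(M(z)-B)\Gamma_0 f_z=-\Gamma_1 u_0$. Hence $z\in\rho(A_B)$ precisely when $(M(z)-B)^{-1}\in\mathcal{B}(\mathcal{K})$, and solving this $\mathcal{K}$-equation produces a Krein-type resolvent formula expressing $(A_B-z)^{-1}$ through $(A_0-z)^{-1}$, the $\gamma$-field $\gamma(z)=\bigl(\Gamma_0\eval{\ker(A^*-zI)}\bigr)^{-1}$ and $(M(z)-B)^{-1}$. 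As $(A_0-z)^{-1}$ and the $\gamma$-fields are bounded and analytic on $\rho(A_0)$, the maps $z\mapsto(A_B-z)^{-1}$ and $z\mapsto(M(z)-B)^{-1}$ share the same points of analyticity there, which identifies $\spec(A_B)\cap\rho(A_0)$ with the set where $(M-B)^{-1}$ fails to continue. The main obstacle is the behaviour on $\spec(A_0)\cap\reals$, where $M(z)$ itself is singular: one must show that $(M-B)^{-1}$ continues analytically into such a point $z_0$ if and only if $z_0\in\rho(A_B)$. I would handle this by inverting the resolvent formula, expressing $(M(z)-B)^{-1}$ back through $(A_B-z)^{-1}$ composed with the boundary maps, which are bounded and analytic near $z_0$; this shows that the possible singularity of $M$ at $z_0$ is cancelled in $(M-B)^{-1}$ exactly when $(A_B-z)^{-1}$ stays analytic at $z_0$, completing the identification of $\spec(A_B)$ with the non-continuation set of $(M-B)^{-1}$.
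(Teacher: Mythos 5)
The paper offers no proof of this proposition --- it is quoted verbatim from \cite{MR1087947} --- so there is nothing to match your argument against; I assess it on its own merits. Parts (1) and (2) are correct and essentially complete: the decomposition $\dom(A^*)=\dom(A_0)\dotplus\ker(A^*-zI)$ for $z\in\rho(A_0)$, the bijectivity of $\Gamma_0\eval{\ker(A^*-zI)}$, the boundedness via the graph norm and the bounded inverse theorem, and the two applications of Green's formula are exactly the standard route. The reduction in part (3) on $\rho(A_0)$ --- the boundary equation $(M(z)-B)\Gamma_0 f_z=-\Gamma_1 u_0$ and the resulting equivalence $z\in\rho(A_B)\iff 0\in\rho(M(z)-B)$ --- is also sound.

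The gap is in the final step of part (3), the passage through $\spec(A_0)\subset\reals$, which is where the entire content of the ``analytic continuation'' formulation lies. For the direction $z_0\in\rho(A_B)\Rightarrow(M-B)^{-1}$ continues into $z_0$, your proposed ``inversion'' yields in its naive form identities such as $\Gamma_0(A_B-z)^{-1}=(B-M(z))^{-1}\Gamma_1(A_0-z)^{-1}$, whose right-most factor is itself singular at $z_0\in\spec(A_0)$; the boundary maps composed with the resolvent of $A_0$ are \emph{not} ``bounded and analytic near $z_0$''. What is actually needed is an identity anchored at a fixed $\zeta\in\rho(A_0)\cap\rho(A_B)$, for instance
\[
(B-M(z))^{-1}=(B-M(\zeta))^{-1}+(z-\zeta)\,\Gamma_0(A_B-z)^{-1}\gamma(\zeta)\,(B-M(\zeta))^{-1},
\]
whose right-hand side is analytic wherever $(A_B-z)^{-1}$ is, because $\Gamma_0$ is bounded in the graph norm of $A^*$ and $(A_B-z)^{-1}$ is bounded into that norm. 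More seriously, the converse direction --- continuability of $(M-B)^{-1}$ into $z_0\in\spec(A_0)$ implies $z_0\in\rho(A_B)$ --- is false as stated unless $A$ is completely non-selfadjoint: a self-adjoint reducing part of $A$ is invisible to the Weyl function, yet its spectrum sits inside $\spec(A_B)$ for every extension. That hypothesis (in force elsewhere in the paper, but absent from the proposition and from your proof) must be invoked, and even then one has to exhibit the cancellation of the singularities of $(A_0-z)^{-1}$ and $\gamma(z)$ in the Kre\u\i n formula near $z_0$. Your sketch supplies neither ingredient, so part (3) remains unproved beyond the region $\rho(A_0)$.
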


Let us lay out the notation for some of the main objects in this
paper.  In the auxiliary Hilbert space $\mathcal{K}$, choose
  a bounded nonnegative self-adjoint operator $\alpha$ so
  that the operator
 \begin{equation}
    \label{eq:b-kappa-def}
    B_\varkappa:=\frac{\alpha\varkappa\alpha}{2}
  \end{equation}
 belongs to $\mathcal{B}(\mathcal{K})$, where $\varkappa$ is a bounded operator in
  $E:=\clos(\ran(\alpha))\subset\mathcal{K}.$
In what follows, we deal with almost solvable extensions of a given
symmetric operator $A$ that are generated by $B_\varkappa$ via (\ref{eq:extension-by-operator}).
We always assume that the deficiency indices of $A$ are equal and that some boundary triple
$(\mathcal{K},\Gamma_1,\Gamma_0)$ for
$A^*$ is fixed.
In order to streamline the
 formulae, we write
 \begin{equation}
   \label{eq:a-kappa-def}
   A_\varkappa:=A_{B_\varkappa}.
 \end{equation}
Here $\varkappa$ should be understood as a parameter for a family of
almost solvable extensions of $A$. Note that if $\varkappa$ is
self-adjoint then so is $B_\varkappa$ and, hence by
Proposition~\ref{prop:properties-almost-extensions}(5), $A_\varkappa$ is
self-adjoint. Note also that $A_{\I I}$ is maximal dissipative, again by
Proposition~\ref{prop:properties-almost-extensions}.

\begin{definition}
  \label{def:charasteristic-function}
  The characteristic function of the operator $A_{\I I}$ is
  the operator-valued function $S$ on $\complex_+$ given by
  \begin{equation}
    S(z):=I\eval{E}+\I\alpha\bigl(B_{\I I}^*-M(z)\bigr)^{-1}\alpha\eval{E},\ \ \ \ \ z\in\complex_+.
    \label{S_definition}
  \end{equation}
 \end{definition}
In the general setting, the characteristic function is defined as in
\cite[Def.\,1.7]{MR2330831}. Our definition is justified by
\cite[Eq.\,1.16]{MR2330831}.
 \begin{remark}
  \label{rem:def-charasteristic-function}
  The function $S$ is analytic in $\complex_+$ and, for each
  $z\in\complex_+$, the mapping $S(z):E\to E$ is a contraction. Therefore, $S$ has
  nontangential limits almost everywhere on the real line in the
  strong topology \cite{MR2760647},
  which we will henceforth denote by $S(k),$ $k\in{\mathbb R}.$
\end{remark}
\begin{remark}
When
  $\alpha=\sqrt{2}I$, an straightforward calculation yields that
  $S(z)$ is the Cayley transform of $M(z)$, {\it i.e.}
 \begin{equation*}
   S(z)=(M(z)-\I I)(M(z)+\I I)^{-1}\,.
 \end{equation*}
\end{remark}

\section{Formulae for the resolvents of  almost solvable extensions}
\label{sec:resolvent-extensions}
In this section we establish some useful relations between the
resolvents of the operators $A_\varkappa$ for any
$\varkappa\in\mathcal{B}(E)$ and the resolvents of the maximal
dissipative operator $A_{\I I}$ and its adjoint. These relations ({\it cf.} \cite{Ryzhov_equipped, Ryzh_ac_sing, Ryzhov_closed} and references therein, for the corresponding results in the general setting of closed non-selfadjoint operators) are instrumental for the construction of the functional model in the next section.
\begin{notation}
  \label{not:theta-functions}
We abbreviate
\begin{align}
  \Theta_\varkappa(z):&=I-\I\alpha(B_{\I I}-M(z))^{-1}
\alpha\chi_\varkappa^+\,,\qquad z\in\complex_-\,,\label{eq:theta}\\
  \widehat{\Theta}_\varkappa(z):&=I+\I\alpha(B_{\I I}^*-M(z))^{-1}
\alpha\chi_\varkappa^-\,,\qquad z\in\complex_+\,,\label{eq:theta-hat}
\end{align}
where
\begin{equation}
  \label{eq:chi-def}
  \chi_\varkappa^\pm:=\frac{I\pm\I\varkappa}{2},
\end{equation}
and for simplicity we have written $I$ instead of $I\eval{E}$. We use
this convention throughout the text.
\end{notation}

It follows from Definition~\ref{def:charasteristic-function} and
Proposition~\ref{prop:weyl-function-properties}(2) that
the operator-valued functions $\Theta_\varkappa(z)$ and
  $\widehat{\Theta}_\varkappa(z)$ can be expressed in terms of the characteristic function $S,$ as follows:
  \begin{align}
   \Theta_\varkappa(z)&=I+(S^*(\cc{z})-I)\chi_\varkappa^+\quad\quad
   \forall\,z\in\complex_-\,,\label{eq:alternate-for-theta}\\[0.4em]
   \widehat{\Theta}_\varkappa(z)&=I+(S(z)-I)\chi_\varkappa^-\,\quad\quad
   \,
   \forall\,z\in\complex_+\,.\label{eq:alternate-for-theta-hat}
  \end{align}
The formulae in the next lemma are analogous to
\cite[Eqs. 2.18 and 2.22]{MR2330831}.
\begin{lemma}
  \label{lem:ryzhov-formulae}
  The following identities hold:
  \begin{enumerate}[(i)]
\item
    $\alpha\Gamma_0(A_{\I I}-zI)^{-1}
=\Theta_\varkappa(z)\alpha\Gamma_0(A_{\varkappa}-zI)^{-1}\quad\ \ \forall\,z\in\complex_-\cap\rho(A_\varkappa)$;\label{eq:ii-kappa}
\item
 $\alpha\Gamma_0(A_{\varkappa}-zI)^{-1}
=\Theta_\varkappa(z)^{-1}\alpha\Gamma_0(A_{\I I}-zI)^{-1}\quad\ \ \forall\,z\in\complex_-\cap\rho(A_\varkappa)$;\label{eq:kappa-ii}
\item
 $\alpha\Gamma_0(A_{\I I}^*-zI)^{-1}
=\widehat{\Theta}_\varkappa(z)\alpha\Gamma_0(A_{\varkappa}-zI)^{-1}\quad\ \ \forall\,z\in\complex_+\cap\rho(A_\varkappa)$;\label{eq:ii-kappa+}
\item
$\alpha\Gamma_0(A_{\varkappa}-zI)^{-1}
=\widehat{\Theta}_\varkappa(z)^{-1}\alpha\Gamma_0(A_{\I
  I}^*-zI)^{-1}\quad\ \ \forall\,z\in\complex_+\cap\rho(A_\varkappa)$\label{eq:kappa-ii+}\,.
\end{enumerate}
\end{lemma}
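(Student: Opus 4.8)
The plan is to establish identity \eqref{eq:ii-kappa} first and then derive the remaining three from it by algebraic manipulation, so the real work is confined to a single resolvent computation. The natural starting point is the resolvent formula for $A_\varkappa$ relative to the fixed dissipative reference operator $A_{\I I}$. Since both $A_\varkappa$ and $A_{\I I}$ are proper extensions of $A$ given by boundary conditions $\Gamma_1 f = B_\varkappa \Gamma_0 f$ and $\Gamma_1 f = B_{\I I}\Gamma_0 f$ respectively, the difference of their resolvents is expressible through the Weyl function $M(z)$, the boundary maps, and the difference $B_\varkappa - B_{\I I}$; this is the standard Kre\u\i n-type resolvent formula for extensions governed by boundary triples. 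I would take for granted the abstract Kre\u\i n formula (it is implicit in the cited \cite{MR2330831} and in Propositions \ref{prop:properties-almost-extensions} and \ref{prop:weyl-function-properties}), apply $\alpha\Gamma_0$ to both sides, and collect the resulting terms.

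Concretely, I would first apply $\alpha\Gamma_0$ to $(A_\varkappa - zI)^{-1}$ and to $(A_{\I I}-zI)^{-1}$ and show that, for $z \in \complex_- \cap \rho(A_\varkappa)$, the two are intertwined by the operator $\Theta_\varkappa(z)$ defined in \eqref{eq:theta}. The key mechanism is that the boundary values $\Gamma_0,\Gamma_1$ of the resolvent applied to a fixed vector differ between the two extensions only through the factor $(B_{\I I}-M(z))^{-1}$, which appears explicitly in \eqref{eq:theta}; the combination $\chi_\varkappa^+ = (I + \I\varkappa)/2$ arises from rewriting $B_\varkappa - B_{\I I}$ in terms of $\varkappa$ using \eqref{eq:b-kappa-def} and factoring out the operator $\alpha$. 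Here I would use that $B_{\I I} = \alpha(\I I)\alpha/2$ so that $B_\varkappa - B_{\I I} = \alpha(\varkappa - \I I)\alpha/2$, which feeds exactly the right combination into the resolvent identity.

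Having obtained \eqref{eq:ii-kappa}, part \eqref{eq:kappa-ii} follows immediately by inverting $\Theta_\varkappa(z)$, provided this inverse exists; invertibility holds precisely on $\complex_- \cap \rho(A_\varkappa)$, since by Proposition \ref{prop:weyl-function-properties}(3) the singularities of $(M-B_\varkappa)^{-1}$ — equivalently, the obstructions to inverting $\Theta_\varkappa(z)$ — are exactly the spectrum of $A_\varkappa$. For the upper-half-plane statements \eqref{eq:ii-kappa+} and \eqref{eq:kappa-ii+}, I would repeat the argument with $A_{\I I}^*=A_{-\I I}=A_{B_{\I I}^*}$ in place of $A_{\I I}$ (using Proposition \ref{prop:properties-almost-extensions}(3)), which produces $\widehat{\Theta}_\varkappa(z)$ with $B_{\I I}^*$ and $\chi_\varkappa^-$ in place of $B_{\I I}$ and $\chi_\varkappa^+$; alternatively one can obtain these from the lower-half-plane identities by taking adjoints and relabelling $z \leftrightarrow \cc z$, exploiting $M(z)^* = M(\cc z)$ from Proposition \ref{prop:weyl-function-properties}(2).

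The main obstacle I anticipate is the careful bookkeeping in the first step: tracking the operator $\alpha$, which is only nonnegative (not invertible) and whose range closure is $E$, through the resolvent identity, and verifying that all products land in $\mathcal{B}(E)$ as required by the conventions following \eqref{eq:chi-def}. One must be attentive to the restriction $\eval{E}$ suppressed in the notation and ensure that $\alpha\Gamma_0$ maps into $E$ so that the factors $\chi_\varkappa^\pm \in \mathcal{B}(E)$ act where intended. Once the algebra is arranged so that $B_\varkappa - B_{\I I}$ factors cleanly as $\alpha(\cdot)\alpha$, the identities \eqref{eq:alternate-for-theta} and \eqref{eq:alternate-for-theta-hat} relating $\Theta_\varkappa,\widehat{\Theta}_\varkappa$ to the characteristic function $S$ provide a useful consistency check on the signs and the placement of $\chi_\varkappa^\pm$.
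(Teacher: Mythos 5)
Your proposal follows essentially the same route as the paper: the ``Kre\u\i n-type formula'' you invoke is precisely the computation the paper performs directly --- comparing $\varphi=(A_{\I I}-zI)^{-1}h$ and $g=(A_\varkappa-zI)^{-1}h$ for a fixed $h$, noting that $\varphi-g\in\ker(A^*-zI)$ so that the Weyl function applies, and using $B_\varkappa-B_{\I I}=-\I\alpha\chi_\varkappa^+\alpha$ (equivalently your $\alpha(\varkappa-\I I)\alpha/2$) to produce $\Theta_\varkappa(z)$, with the upper-half-plane cases handled exactly as you suggest via $A_{\I I}^*=A_{B_{\I I}^*}$ and $B_\varkappa-B_{\I I}^*=\I\alpha\chi_\varkappa^-\alpha$. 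The one detail you leave implicit is the invertibility of $\Theta_\varkappa(z)$ needed for (ii) and (iv): the paper settles it by running the symmetric computation with the roles of the two extensions interchanged and then verifying the explicit formula $\Theta_\varkappa(z)^{-1}=I+\I\alpha\bigl(B_\varkappa-M(z)\bigr)^{-1}\alpha\chi_\varkappa^+$ through the second resolvent identity, which is exactly the substance behind your appeal to Proposition~\ref{prop:weyl-function-properties}(3).
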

\begin{proof}
  We start by proving (\ref{eq:ii-kappa}). To this end,
  suppose that $z\in\complex_-\cap\rho(A_\varkappa)$ so $(A_{\I I}-z
  I)^{-1}$ and $(A_\varkappa-z I)^{-1}$ are defined
  on the whole space $\cH$. Fix an arbitrary $h\in\cH$ and define
\begin{equation}
    \label{eq:phi-g-def}
    \varphi:=(A_{\I I}-zI)^{-1}h,\ \ \ \ \ \ \ \ g:=(A_\varkappa-zI)^{-1}h\,.
\end{equation}
Clearly, the vector
  \begin{equation*}
    f:=\varphi-g=\left((A_{\I I}-zI)^{-1}-(A_{\varkappa}-zI)^{-1}\right)h
  \end{equation*}
is in $\ker(A^*-zI)$ since $A^*$ is an extension of both operators
$A_{\I I}$ and $A_{\varkappa}$. According to
\eqref{eq:extension-by-operator}, it follows from $\varphi\in\dom(A_{\I I})$ and $g\in\dom(A_\varkappa)$  that
$
  \Gamma_1\varphi=B_{\I I}\Gamma_0\varphi$ and $\Gamma_1g=B_\varkappa\Gamma_0g.$
Thus, one has
\begin{align*}
  0&=\Gamma_1(f+g)-B_{\I I}\Gamma_0(f+g)\\
  &= \Gamma_1f-B_{\I I}\Gamma_0f+\Gamma_1g-B_{\I I}\Gamma_0g\\
  &=M(z)\Gamma_0f-B_{\I I}\Gamma_0f+ B_{\varkappa}\Gamma_0g-B_{\I I}\Gamma_0g\,,
\end{align*}
where in the last equality we also use the fact that $f\in\ker(A^*-zI),$
together with
Definition~\ref{def:weyl-function}.
Hence one has
\begin{equation*}
  \Gamma_0f=(B_{\I I}-M(z))^{-1}(B_{\varkappa}-B_{\I I})\Gamma_0g\,,
\end{equation*}
which, in turn, implies that
\begin{equation}
  \Gamma_0\varphi=\Gamma_0f+\Gamma_0g=\left[I+(B_{\I I}-M(z))^{-1}(B_{\varkappa}-B_{\I I})\right]\Gamma_0g.
  \label{both_sides}
\end{equation}
Taking into account (\ref{eq:phi-g-def}), using the fact that
  $B_{\varkappa}-B_{\I I}=-\I\alpha\chi_\varkappa^+\alpha$
and applying the operator $\alpha$ to both sides of (\ref{both_sides}), we obtain
\begin{equation*}
  \alpha\Gamma_0(A_{\I I}-zI)^{-1}h=
\left[I-\I\alpha(B_{\I
    I}-M(z))^{-1}\alpha\chi_\varkappa^+\right]\alpha\Gamma_0
(A_{\varkappa}-zI)^{-1}h,
\end{equation*}
which is the identity (i), in view of the definition (\ref{eq:alternate-for-theta}).

Similar computations with the pairs $A_\varkappa, B_\varkappa$
and $A_{\I I}, B_{\I I}$ interchanged lead to
\begin{equation}
  \label{eq:almost-kappaii}
  \alpha\Gamma_0(A_{\varkappa}-zI)^{-1}h=
\left[I+\I\alpha(B_{\varkappa}-M(z))^{-1}\alpha\chi_\varkappa^+\right]
\alpha\Gamma_0(A_{\I I}-zI)^{-1}h,
\end{equation}
for $z\in{\mathbb C}_-\cap\rho(A_\varkappa).$ Now, (\ref{eq:kappa-ii}) follows
from (\ref{eq:almost-kappaii}) using the identity
\begin{equation}
 \label{eq:theta-inverse}
  \Theta_\varkappa(z)^{-1}=I+\I\alpha\bigl(B_\varkappa-M(z)\bigr)^{-1}\alpha\chi_\varkappa^+
  \quad\quad\forall\,z\in\complex_-\cap\rho(A_\varkappa)\,,
\end{equation}
which is validated by multiplying together the right-hand sides
of (\ref{eq:theta-inverse}) and \eqref{eq:theta} and employing a version
of the second resolvent identity
({\it cf.} \cite[Thm.\,5.13]{MR566954}):
\begin{equation*}
  (B_\varkappa-M(z))^{-1}-(B_{\I
    I}-M(z))^{-1}=(B_\varkappa-M(z))^{-1}(B_{\I I}-B_\varkappa)(B_{\I I}-M(z))^{-1}
\end{equation*}
which holds for all $z\in\complex_-\cap\rho(A_\varkappa)$.

We next proceed to the proof of \eqref{eq:ii-kappa+} and
\eqref{eq:kappa-ii+}. Fix an arbitrary
$z\in\complex_+\cap\rho(A_\varkappa)$ and an arbitrary $h\in\cH$ and define
\begin{equation}
\label{eq:varphi-g-defs+}
  \varphi:=(A_{\I I}^*-zI)^{-1}h\,,\qquad g:=(A_\varkappa-zI)^{-1}h\,,
\end{equation}
then $f:=\varphi-g$ is in $\ker(A^*-zI)$. Since
$\varphi\in\dom(A_{\I I}^*)$, one has that
\begin{align*}
  0&=\Gamma_1(f+g)-B_{\I I}^*\Gamma_0(f+g)\\[0.4em]
&=M(z)\Gamma_0f+ \Gamma_1g-B_{\I I}^*\Gamma_0f-B_{\I I}\Gamma_0g\,,
\end{align*}
where in the second equality
we use the fact
that
$f\in\ker(A^*-zI)$. On the other hand, in view of the inclusion
$g\in\dom(A_\varkappa)$, the formula \eqref{eq:extension-by-operator} allows us
to replace the second term in the last expression
by $B_\varkappa\Gamma_0g,$ which yields
\begin{equation}
 \label{eq:aux-iikappa-adjoint}
  0=(M(z)-B_{\I I}^*)\Gamma_0f+(B_\varkappa-B_{\I I}^*)\Gamma_0g\,.
\end{equation}
Since
  $B_\varkappa-B_{\I I}^*=\I\alpha\chi_\varkappa^-\alpha,$
the equality~(\ref{eq:aux-iikappa-adjoint}) is rewritten as
\begin{equation*}
  \Gamma_0f=\I(B_{\I I}^*-M(z))^{-1}\alpha\chi_\varkappa^-\alpha\Gamma_0g\,,
\end{equation*}
which in turn implies that
\begin{equation*}
  \Gamma_0\varphi=\left[I+\I(B_{\I I}^*-M(z))^{-1}
\alpha\chi_\varkappa^-\alpha\right]\Gamma_0g\,.
\end{equation*}
Applying the operator $\alpha$ to both sides of the last equation and
using (\ref{eq:varphi-g-defs+}), we obtain
\begin{equation*}
  \alpha\Gamma_0(A_{\I I}^*-zI)^{-1}h=
  \left[I+\I\alpha(B_{\I I}^*-M(z))^{-1}\alpha\chi_\varkappa^-\right]
  \alpha\Gamma_0(A_\varkappa-zI)^{-1}h,
\end{equation*}
which is \eqref{eq:ii-kappa+}, in view of the definition (\ref{eq:alternate-for-theta-hat}).

Finally, we interchange the operators $A_{\I I}^*$ and $A_\varkappa$ in
(\ref{eq:varphi-g-defs+}) and repeat the computations, correspondingly
interchanging $B_{\I I}$ and $B_\varkappa.$ This yields the identity
\begin{equation}
 \label{eq:last-identity-theta-hat}
  \alpha\Gamma_0(A_{\varkappa}^*-zI)^{-1}h=
  \left[I-\I\alpha(B_{\I I}^*-M(z))^{-1}\alpha\chi_\varkappa^-\right]
  \alpha\Gamma_0(A_{\I I}^*-zI)^{-1}h,
\end{equation}
for all $z\in\complex_+\cap\rho(A_\varkappa).$ In a similar way to (\ref{eq:theta-inverse}), we verify that
\begin{equation*}
 \widehat{\Theta}_\varkappa(z)^{-1}=I-\I\alpha(B_{\I I}^*-M(z))^{-1}\alpha\chi_\varkappa^-\quad\quad \forall\,z\in\complex_+\cap\rho(A_\varkappa)
\end{equation*}
and hence
establish (\ref{eq:kappa-ii+}).
\end{proof}

\section{Functional model and theorems about smooth vectors}
\label{sec:functional-model}

Following \cite{MR573902}, we introduce a Hilbert space
serving as a functional model for the family of operators $A_\varkappa$. This
functional model was constructed for completely non-selfadjoint maximal
dissipative operators in \cite{MR0510053, MR0365199, Drogobych} and further
developed in \cite{MR573902, Ryzhov_closed, Ryzh_ac_sing, Tikhonov}. Next we recall some related necessary
information. In what follows, in various formulae, we use the
subscript ``$\pm$'' to indicate two different versions of the same formula in which the subscripts ``$+$'' and ``$-$'' are taken individually.

A function $f$ analytic on $\complex_\pm$ and taking values in $E$ is said to be
in the Hardy class $H^2_\pm(E)$ when
\begin{equation*}
  \sup_{y>0}\int_\reals\norm{f(x\pm\I y)}_E^2dx<+\infty
\end{equation*}
({\it cf.} \cite[Sec.\,4.8]{MR822228}). Whenever $f\in H^2_\pm(E)$, the
left-hand side of the above inequality defines $\norm{f}_{H^2_\pm(E)}^2$. We use the notation $H^2_+$ and $H^2_-$ for the usual Hardy spaces of ${\mathbb C}$-valued functions.

The elements of the Hardy spaces $H^2_\pm(E)$ are identified with
their boundary values,
 which exist almost
everywhere on the real line. We keep the same notation $H^2_\pm(E)$ for the corresponding  subspaces
of $L^2(\reals,E)$ \cite[Sec.\,4.8, Thm.\,B]{MR822228}).  By the
Paley-Wiener theorem \cite[Sec.\,4.8, Thm.\,E]{MR822228}), one
verifies that these subspaces are the orthogonal complements of each
other.

Following the argument of \cite[Thm.\,1]{MR573902}, it is shown in
\cite[Lem.\,2.4]{MR2330831} that
\begin{equation}
  \label{eq:naboko-thm-1}
  \alpha\Gamma_0(A_{\I I}-\cdot I)^{-1}h\in H_-^2(E)\quad\text{and}
  \quad\alpha\Gamma_0(A_{\I I}^*-\cdot I)^{-1}h\in H_+^2(E)\,.
\end{equation}

As mentioned in Remark~\ref{rem:def-charasteristic-function}, the
characteristic function $S$ given in
Definition~\ref{def:charasteristic-function} has nontangential limits
almost everywhere on the real line in the strong topology. Thus, for a
two-component vector function $\binom{\widetilde{g}}{g}$ taking values
in $E\oplus E$, one can consider the integral
\begin{equation}
  \label{eq:inner-in-functional}
  \int_\reals\inner{\begin{pmatrix} I & S^*(s)\\
    S(s) &
    I\end{pmatrix}\binom{\widetilde{g}(s)}{g(s)}}{\binom{\widetilde{g}(s)}{g(s)}}_{E\oplus
E}ds,
\end{equation}
which is always nonnegative, due to the contractive properties of $S.$
The space
\begin{equation}
\label{mathfrakH}
\mathfrak{H}:=L^2\Biggl(E\oplus E; \begin{pmatrix}
    I & S^*\\
    S & I
  \end{pmatrix}\Biggr)
\end{equation}
is the completion of the linear set of two-component vector functions
$\binom{\widetilde{g}}{g}: {\mathbb R}\to E\oplus E$ in the norm
(\ref{eq:inner-in-functional}), factored with respect to vectors of zero norm.
Naturally, not every element of the set can be identified with a pair
$\binom{\widetilde{g}}{g}$ of two independent functions. Still, in
what follows we keep the notation $\binom{\widetilde{g}}{g}$ for the
elements of this space.

  Another consequence of the contractive properties of the
  characteristic function $S$ is that for $\widetilde{g},g\in
  L^2(\reals,E)$ one has
  \begin{equation*}
    \norm{\binom{\widetilde{g}}{g}}_\mathfrak{H}\ge
    \begin{cases}
      \norm{\widetilde{g}+S^*g}_{L^2(\reals,E)},\\
      \norm{S\widetilde{g}+g}_{L^2(\reals,E)}\,.
    \end{cases}
  \end{equation*}
  Thus, for every Cauchy sequence
  $\{\binom{\widetilde{g}_n}{g_n}\}_{n=1}^\infty$, with respect to the $\mathfrak{H}$-topology,
  such that $\widetilde{g}_n,g_n\in L^2(\reals,E)$ for all
  $n\in\nats$, the limits of $\widetilde{g}_n+S^*g_n$ and
  $S\widetilde{g}_n+g_n$ exist in $L^2(\reals,E)$, so that $\widetilde{g}+S^*g$ and
  $S\widetilde{g}+g$ can always be treated as $L^2(\reals,E)$ functions.

Consider the orthogonal subspaces of $\mathfrak{H}$
\begin{equation}
  \label{eq:D-spaces}
  D_-:=
  \begin{pmatrix}
    0\\
    {H}^2_-(E)
  \end{pmatrix}\,,\quad
 D_+:=
  \begin{pmatrix}
   {H}^2_+(E)\\
   0
  \end{pmatrix}\,.
\end{equation}
We define the space
\begin{equation*}
  K:=\mathfrak{H}\ominus(D_-\oplus D_+),
\end{equation*}
which is characterised as follows (see {\it e.g.} \cite{MR0365199, Drogobych}):
\begin{equation}
  K=\left\{\begin{pmatrix}
   \widetilde{g}\\
   g
  \end{pmatrix}\in\mathfrak{H}: \widetilde{g}+S^*g\in  {H}^2_-(E)\,,
  S\widetilde{g}+g\in
 {H}^2_+(E)\right\}\,.
 \label{characterise_K}
\end{equation}
The orthogonal projection $P_K$ onto the subspace
$K$ is given by (see {\it e.g.} \cite{MR0500225})
\begin{equation}
\label{eq:pk-action}
  P_K
  \begin{pmatrix}
    \widetilde{g}\\
    g
  \end{pmatrix}
=
\begin{pmatrix}
  \widetilde{g}-P_+(\widetilde{g}+S^*g)\\[0.3em]
  g-P_-(S\,\widetilde{g}+g)
\end{pmatrix}\,,
\end{equation}
where $P_\pm$ are the orthogonal Riesz projections in $L^2(E)$ onto
${H}^2_\pm(E)$.

A completely
non-selfadjoint dissipative operator admits \cite{MR2760647} a
self-adjoint dilation.
The dilation $\mathscr{A}=\mathscr{A}^*$ of the operator $A_{\I I}$ is constructed following Pavlov's
procedure \cite{MR0365199, MR0510053, Drogobych}:  it is defined in the Hilbert space
\begin{equation}
  \label{eq:wider-hilbert-pavlov}
  \mathscr{H}=
  L^2(\reals_-,{\mathcal K})\oplus\mathcal{H}\oplus L^2(\reals_+,{\mathcal K}),
\end{equation}
so that
\begin{equation*}
 P_\mathcal{H}(\mathscr{A}-zI)^{-1}\eval{\mathcal{H}}=
 (A_{\I I}-z I)^{-1}\,, \qquad z\in\complex_-.
\end{equation*}
As in the case of additive non-selfadjoint perturbations
\cite{MR573902}, Ryzhov established in
\cite[Thm.\,2.3]{MR2330831} that $\mathfrak{H}$ serves as the
functional model for the dilation $\mathscr{A}$ {\it i.e.} there exists
an isometry $\Phi: \mathscr{H}\to\mathfrak{H},$ which we will make explicit below in our particular setting, such that
$\mathscr{A}$ is transformed into the operator of multiplication by
the independent variable:
 $ \Phi(\mathscr{A}-z I)^{-1}=(\cdot-z)^{-1}\Phi\,.$
Furthermore, under this isometry 
  $$
  \Phi\eval\cH\cH=K
  $$
   unitarily, where $\cH$ is understood as being embedded in $\mathscr{H}$ in the natural way, {\it i.e.}
   $$
   \cH\ni h\mapsto 0\oplus h\oplus0\in\mathscr{H}.
   $$
  In what follows we keep the label $\Phi$ for the restriction $\Phi\eval\cH,$ in hope that it does not lead to confusion.


The next theorem generalises \cite[Thm. 2.5]{MR2330831}, and its form is
similar to \cite[Thm.\,3]{MR573902}, which treats the case of additive
perturbations, see also \cite{Mak_Vas, MR2330831, Ryzh_ac_sing, Ryzhov_closed} for the case of possibly non-additive perturbations. The proof blends together the arguments of
\cite{MR2330831} and \cite{MR573902}, taking advantage of the
similarity between the formulae
(\ref{eq:theta})--(\ref{eq:alternate-for-theta-hat}) and those of
\cite[Section\,2]{MR573902}. It is standard, see {\it e.g.} \cite{Mak_Vas, MR573902, Ryzhov_closed}, and is therefore included in the Appendix for the sake of completeness only.
\begin{theorem}
  \label{thm:rhyzhov2.5}
\begin{enumerate}[(i)]
\item  If $z\in\complex_-\cap\rho(A_\varkappa)$ and
  $\binom{\widetilde{g}}{g}\in K$, then
  \begin{equation}
    \Phi(A_\varkappa-z I)^{-1}\Phi^{*}\binom{\widetilde{g}}{g}=P_K\frac{1}{\cdot-z}
\binom{\widetilde{g}}{g-\chi_\varkappa^+\Theta^{-1}_\varkappa(z)(\widetilde{g}+S^*g)(z)}\,.
\label{representation_minus}
  \end{equation}
\item  If $z\in\complex_+\cap\rho(A_\varkappa)$ and
  $\binom{\widetilde{g}}{g}\in K$, then
  \begin{equation}
    \Phi(A_\varkappa-z I)^{-1}\Phi^{*}\binom{\widetilde{g}}{g}=P_K\frac{1}{\cdot-z}
\binom{\widetilde{g}-\chi_\varkappa^-\widehat{\Theta}^{-1}_\varkappa(z)(S\widetilde{g}+g)(z)}{g}\,.
 \label{representation_plus}
  \end{equation}
  Here, $(\widetilde{g}+S^*g)(z)$ and $(S\widetilde{g}+g)(z)$ denote
  the values at $z$ of the analytic continuations of the functions
  $\widetilde{g}+S^*g\in {H}^2_-(E)$ and $S\widetilde{g}+g\in
  {H}^2_+(E)$ into the lower half-plane and upper half-plane, respectively.
\end{enumerate}
\end{theorem}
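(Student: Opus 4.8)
The plan is to obtain both identities from the already-known action of the resolvent of the self-adjoint dilation $\mathscr{A}$, correcting for the passage from $A_{\I I}$ (resp. $A_{\I I}^*$) to the general extension $A_\varkappa$ by means of the resolvent relations of Lemma~\ref{lem:ryzhov-formulae}. Since part (ii) is proved by repeating the argument for (i) with $A_{\I I}^*$, $\complex_+$, $\widehat{\Theta}_\varkappa$ and $\chi_\varkappa^-$ in place of $A_{\I I}$, $\complex_-$, $\Theta_\varkappa$ and $\chi_\varkappa^+$ --- exactly as parts (iii)--(iv) of Lemma~\ref{lem:ryzhov-formulae} mirror (i)--(ii) --- I would prove (i) in detail and indicate only the substitutions needed for (ii).

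First I would record the base case $\varkappa=\I I$. Because $\Phi$ intertwines $(\mathscr{A}-zI)^{-1}$ with multiplication by $(\cdot-z)^{-1}$ and carries $\cH$ onto $K$, the relation $P_\mathcal{H}(\mathscr{A}-zI)^{-1}\eval{\cH}=(A_{\I I}-zI)^{-1}$ for $z\in\complex_-$ translates into $\Phi(A_{\I I}-zI)^{-1}\Phi^{*}=P_K(\cdot-z)^{-1}\eval{K}$. As $\chi_{\I I}^+=0$, formula (\ref{representation_minus}) reduces precisely to this statement when $\varkappa=\I I$; this is the skeleton onto which the general case is grafted.

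Next, fixing $z\in\complex_-\cap\rho(A_\varkappa)$ and writing $h:=\Phi^{*}\binom{\widetilde g}{g}$, $\varphi:=(A_{\I I}-zI)^{-1}h$ and $g':=(A_\varkappa-zI)^{-1}h$, I would note that $f:=\varphi-g'\in\ker(A^*-zI)$ and recover from the computation in the proof of Lemma~\ref{lem:ryzhov-formulae}(i) the identity $\Gamma_0 f=-\I(B_{\I I}-M(z))^{-1}\alpha\,\chi_\varkappa^+\alpha\Gamma_0 g'$. Lemma~\ref{lem:ryzhov-formulae}(ii) gives $\alpha\Gamma_0 g'=\Theta_\varkappa(z)^{-1}\alpha\Gamma_0\varphi$, while the defining property of $\Phi$ identifies $(\widetilde g+S^*g)(z)$, the value at $z$ of the analytic continuation of $\widetilde g+S^*g\in H^2_-(E)$, with $\alpha\Gamma_0\varphi$. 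Hence the correction is governed by the constant vector $c:=\chi_\varkappa^+\Theta_\varkappa(z)^{-1}(\widetilde g+S^*g)(z)$, which is exactly the vector subtracted in the second slot of (\ref{representation_minus}). Since $\Phi g'=\Phi\varphi-\Phi f$ and $\Phi\varphi=P_K(\cdot-z)^{-1}\binom{\widetilde g}{g}$ by the base case, the theorem reduces to the single operator identity $\Phi f=P_K(\cdot-z)^{-1}\binom{0}{c}$.

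I would prove this last identity by exploiting the fact that an element of $\mathfrak{H}$ is uniquely determined by the pair of $L^2(\reals,E)$-functions $\widetilde g+S^*g$ and $S\widetilde g+g$: the quadratic form (\ref{eq:inner-in-functional}) vanishes on precisely those representatives for which both functions are zero. It therefore suffices to check that the two sides (both manifestly in $K$) carry the same boundary data. For the right-hand side this is a direct computation with (\ref{eq:pk-action}), splitting the Cauchy kernels $\frac{\widetilde g}{\cdot-z}$ and $\frac{g-c}{\cdot-z}$ into their $H^2_\pm(E)$-parts and using the contractivity of $S$; for the left-hand side one must compute $\alpha\Gamma_0(A_{\I I}-\cdot I)^{-1}f$ and $\alpha\Gamma_0(A_{\I I}^{*}-\cdot I)^{-1}f$ for the deficiency vector $f$. \textbf{The main obstacle is this last step}: pinning down the $\Phi$-image of vectors in $\ker(A^*-zI)$, which is where the explicit structure of Pavlov's dilation and the normalisation $\Gamma_0 f=-\I(B_{\I I}-M(z))^{-1}\alpha c$ enter, and verifying that the Riesz-projection bookkeeping in (\ref{eq:pk-action}) reproduces exactly the factor $\chi_\varkappa^+\Theta_\varkappa(z)^{-1}$. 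With this in hand, (ii) follows verbatim after replacing $A_{\I I}$ by $A_{\I I}^{*}$, $\Theta_\varkappa$ by $\widehat{\Theta}_\varkappa$ and $\chi_\varkappa^+$ by $\chi_\varkappa^-$, and interchanging the roles of $H^2_-(E)$ and $H^2_+(E)$, the corresponding base case being $\varkappa=-\I I$, for which $\chi_{-\I I}^-=0$.
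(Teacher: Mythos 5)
Your reduction is sound and, in fact, mirrors the mechanism of the paper's own proof: the paper also exploits the fact that an element of $\mathfrak H$ is determined by the two $L^2(\reals,E)$-functions $\widetilde g+S^*g$ and $S\widetilde g+g$ (realised there as the maps $\mathscr F_\pm$ of \cite[Sec.\,2.1]{MR2330831}, with $\mathscr F_\pm h$ equal to $\widetilde g+S^*g$ and $S\widetilde g+g$ when $\Phi h=\binom{\widetilde g}{g}$), and it verifies the theorem by matching these two functions on both sides. Your base case $\varkappa=\I I$, the identification of the correction vector $c=\chi_\varkappa^+\Theta_\varkappa^{-1}(z)(\widetilde g+S^*g)(z)$, and the reduction to $\Phi f=P_K(\cdot-z)^{-1}\binom{0}{c}$ for the deficiency vector $f=\varphi-g'$ are all consistent with this (modulo a harmless normalisation: the defining property of $\Phi$ gives $(\widetilde g+S^*g)(z)=-\tfrac{1}{\sqrt{2\pi}}\alpha\Gamma_0\varphi$, not $\alpha\Gamma_0\varphi$).

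However, the step you yourself flag as ``the main obstacle'' is not a technicality to be deferred --- it is the entire content of the proof, and your proposal does not supply it. Concretely, you never compute the boundary data $\mathscr F_\pm f$ of the deficiency vector, i.e.\ the functions $k\mapsto\alpha\Gamma_0(A_{\I I}-(k-\I\epsilon)I)^{-1}f$ and $k\mapsto\alpha\Gamma_0(A_{\I I}^*-(k+\I\epsilon)I)^{-1}f$. The paper obtains these (in the equivalent form of $\mathscr F_\pm$ applied to the full resolvent) by combining the Hilbert identity for $(A_\varkappa-zI)^{-1}(A_\varkappa-\lambda I)^{-1}$ with \emph{both} directions of Lemma~\ref{lem:ryzhov-formulae}, which yields
\begin{equation*}
\mathscr F_+(A_\varkappa-z I)^{-1}h=\frac{1}{\cdot-z}\Bigl[\mathscr F_+h-\Theta_\varkappa(\cdot)\Theta_\varkappa^{-1}(z)\,\mathscr F_+h(z)\Bigr],\qquad
\mathscr F_-(A_\varkappa-z I)^{-1}h=\frac{1}{\cdot-z}\Bigl[\mathscr F_-h-\widehat\Theta_\varkappa(\cdot)\Theta_\varkappa^{-1}(z)\,\mathscr F_+h(z)\Bigr],
\end{equation*}
and then matches these against the Riesz-projection computation of $P_K\frac{1}{\cdot-z}\binom{\widetilde g}{g-c}$ via the identity $P_+\bigl(f/(\cdot-z)\bigr)=f(z)/(\cdot-z)$ for $f\in H^2_-$ and the algebraic relations $\Theta_\varkappa(\cdot)=I+(S^*(\bar\cdot)-I)\chi_\varkappa^+$, $\widehat\Theta_\varkappa(\cdot)=S\chi_\varkappa^-+\chi_\varkappa^-+S\Theta_\varkappa(\cdot)-\ldots$ collapsing the $\mathscr F_-$ side. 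Note in particular that the $\mathscr F_-$ image of the correction involves $\widehat\Theta_\varkappa(\cdot)\Theta_\varkappa^{-1}(z)$ --- a mixture of the two characteristic-type functions that is not visible from your formula for $c$ alone and must be derived, not asserted. Until you carry out these two computations (or an equivalent evaluation of $\Phi f$ for $f\in\ker(A^*-zI)$ in Pavlov's dilation), the proposal is a correct plan rather than a proof.
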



Following the ideas of Naboko, in the functional model space $\mathfrak{H}$
consider two subspaces $\mathfrak{N}^\varkappa_\pm$ defined as follows:
\begin{equation}
   \mathfrak{N}^\varkappa_\pm:=\left\{\binom{\widetilde{g}}{g}\in\mathfrak{H}:
  P_\pm\left(\chi_\varkappa^+(\widetilde{g}+S^*g)+\chi_\varkappa^-(S\widetilde{g}
  +g)\right)=0\right\}\,.
  \label{definition_curlyN}
\end{equation}
These subspaces have a characterisation in terms of the resolvent of the operator $A_\varkappa.$ This, again, can be seen as a consequence of a much more general argument (see {\it e.g.} \cite{Ryzhov_closed, Ryzh_ac_sing}). The proof in our particular case is provided in Appendix and follows the approach of \cite[Thm.\,4]{MR573902}.
\begin{theorem}
  \label{lem:similar-to-naboko-thm-4}
Suppose that $\ker\{\alpha\}=0.$ The following characterisation holds:
\begin{equation}
 \mathfrak{N}^\varkappa_\pm=\left\{\binom{\widetilde{g}}{g}\in\mathfrak{H}:
  \Phi(A_{\varkappa}-z I)^{-1}\Phi^*P_K\binom{\widetilde{g}}{g}=
P_K\frac{1}{\cdot-z}\binom{\widetilde{g}}{g}
\text{ for all } z\in\complex_\pm\right\}\,.
\label{curlyN_plusminus}
\end{equation}
\end{theorem}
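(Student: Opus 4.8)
The plan is to prove the set equality \eqref{curlyN_plusminus} by showing that, for a fixed $\binom{\widetilde g}{g}\in\mathfrak H$, membership in $\mathfrak N^\varkappa_\pm$ as defined by \eqref{definition_curlyN} is equivalent to the resolvent identity on the right-hand side. I would treat the ``$-$'' case ($z\in\complex_-$) in full and obtain the ``$+$'' case by the symmetric argument. Writing $\binom{\widetilde u}{u}:=P_K\binom{\widetilde g}{g}\in K$, the first step is to apply Theorem \ref{thm:rhyzhov2.5}(i) to this vector, which expresses $\Phi(A_\varkappa-zI)^{-1}\Phi^*P_K\binom{\widetilde g}{g}$ as $P_K\frac{1}{\cdot-z}\binom{\widetilde u}{u-w}$, where $w:=\chi_\varkappa^+\Theta_\varkappa^{-1}(z)(\widetilde u+S^*u)(z)$ is a vector of $E$ not depending on the spectral variable. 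The identity to be characterised then reads $P_K\frac{1}{\cdot-z}\binom{\widetilde u}{u-w}=P_K\frac{1}{\cdot-z}\binom{\widetilde g}{g}$.

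Next I would split $\binom{\widetilde g}{g}=\binom{\widetilde u}{u}+\binom{P_+(\widetilde g+S^*g)}{P_-(S\widetilde g+g)}$, the second summand lying in $D_-\oplus D_+$ by \eqref{eq:pk-action}. For $z\in\complex_-$ multiplication by $(\cdot-z)^{-1}$ maps $H^2_+(E)$ into itself, so $P_K$ annihilates the $D_+$-component, and after cancelling the common term $P_K\frac{1}{\cdot-z}\binom{\widetilde u}{u}$ the identity collapses to $P_K\frac{1}{\cdot-z}\binom{0}{w+P_-(S\widetilde g+g)}=0$. To turn this into a pointwise statement I would use two facts: that an element $\binom{x}{y}\in\mathfrak H$ represented by a genuine $L^2$ pair vanishes precisely when both $x+S^*y$ and $Sx+y$ vanish in $L^2(\reals,E)$ (a direct consequence of the contractivity inequalities recorded before \eqref{eq:D-spaces}), and the reproducing/residue behaviour of the Cauchy kernel $(\cdot-z)^{-1}$ under the Riesz projections $P_\pm$. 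Carrying out these projections reduces the displayed identity to the vector equation $w=-\bigl(P_-(S\widetilde g+g)\bigr)(z)$, required for every $z$.

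It then remains to match this with \eqref{definition_curlyN}. Substituting the expression $\Theta_\varkappa(z)=\chi_\varkappa^-+S^*(\cc z)\chi_\varkappa^+$ that follows from \eqref{eq:alternate-for-theta}, together with $(\widetilde u+S^*u)(z)=\bigl(P_-(\widetilde g+S^*g)\bigr)(z)-S^*(\cc z)\bigl(P_-(S\widetilde g+g)\bigr)(z)$, and writing $p,q$ for the two continuation values, the equation becomes $\chi_\varkappa^+\Theta_\varkappa^{-1}(z)(p-S^*(\cc z)q)=-q$. Using that $\chi_\varkappa^+$ and $\chi_\varkappa^-$ commute and satisfy $\chi_\varkappa^++\chi_\varkappa^-=I$ (immediate from \eqref{eq:chi-def}), a short computation with $r:=p-q$ shows this is equivalent to $\chi_\varkappa^+p+\chi_\varkappa^-q=0$; since $\chi_\varkappa^\pm$ commute with $P_-$, the latter is exactly the vanishing of the analytic continuation of $P_-\bigl(\chi_\varkappa^+(\widetilde g+S^*g)+\chi_\varkappa^-(S\widetilde g+g)\bigr)$, i.e. membership of $\binom{\widetilde g}{g}$ in $\mathfrak N^\varkappa_-$. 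The ``$+$'' case is handled mutatis mutandis, now using Theorem \ref{thm:rhyzhov2.5}(ii), the relation $\widehat\Theta_\varkappa(z)=\chi_\varkappa^++S(z)\chi_\varkappa^-$ from \eqref{eq:alternate-for-theta-hat}, the invariance of $H^2_-(E)$ under $(\cdot-z)^{-1}$ for $z\in\complex_+$, and the projection $P_+$. Throughout, ``for all $z\in\complex_\pm$'' should be read as $z\in\complex_\pm\cap\rho(A_\varkappa)$; since both sides are analytic there, the identity on this open set is what is required.

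I expect the main obstacle to be the reduction of the operator identity in $\mathfrak H$ to the pointwise vector equation: this needs careful bookkeeping of the analytic continuations of $H^2_\pm(E)$ functions and of $S^*$ into the appropriate half-plane, the explicit evaluation of Riesz projections of Cauchy kernels, and control of the degenerate inner product of $\mathfrak H$. The most delicate point is the converse implication, where one must exclude a spurious nonzero constant vector surviving the above projections; ruling it out relies on the complete non-selfadjointness (equivalently, the purity of the characteristic function $S$) of the underlying dissipative operator, cf. the criterion \eqref{eq:completely-non-selfadjoint-criterion}.
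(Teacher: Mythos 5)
Your proposal is correct and follows essentially the same route as the paper's Appendix proof: Theorem \ref{thm:rhyzhov2.5} combined with \eqref{eq:pk-action} and the Cauchy-kernel identity \eqref{eq:focus-pocus} reduces the resolvent identity to the vanishing of the constant vector $w+\bigl(P_-(S\widetilde g+g)\bigr)(z)$ (the paper's $\gamma(z)+\bigl(P_-(S\widetilde g+g)\bigr)(z)$), complete non-selfadjointness of $A_{\I I}$ rules out a spurious nonzero constant surviving the projections, and the commutation $\chi_\varkappa^+\Theta_\varkappa^{-1}(z)=\bigl(I+\chi_\varkappa^+(S^*(\cc z)-I)\bigr)^{-1}\chi_\varkappa^+$ converts the resulting pointwise equation into the defining condition \eqref{definition_curlyN}. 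Your single equivalence-based bookkeeping is in fact slightly tidier than the paper's two separate inclusions, since it keeps the term $\bigl(P_-(S\widetilde g+g)\bigr)(z)$ explicitly through the converse step, where the paper's displayed computation silently drops it before reinstating it via Lemma \ref{final_lemma}.
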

Consider the counterparts of $\mathfrak{N}^\varkappa_\pm$ in the original
Hilbert space $\cH:$
\begin{equation}
\label{eq:definition-n_pm}
  \widetilde{N}_\pm^\varkappa:=\Phi^*P_K\mathfrak{N}^\varkappa_\pm\,,
\end{equation}
which are linear sets albeit not necessarily subspaces.
In a way similar to \cite{MR573902}, we introduce the set
\begin{equation*}
  \widetilde{N}_{\rm e}^\varkappa:=\widetilde{N}_+^\varkappa\cap
 \widetilde{N}_-^\varkappa
\end{equation*}
of so-called smooth vectors and its closure $N_{\rm e}^\varkappa:=\clos(\widetilde{N}_{\rm e}^\varkappa).$
 In Section~\ref{sec:inclusion-smooth} we prove that $N_{\rm e}^\varkappa$
coincides with the absolutely continuous subspace of the operator
$A_\varkappa$ in the case when $A_\varkappa=A_\varkappa^*$ and under the same additional assumption that
${\rm ker}(\alpha)=\{0\},$ as in Theorem \ref{lem:similar-to-naboko-thm-4}.



The next assertion ({\it cf. e.g.} \cite{Ryzhov_closed, Ryzh_ac_sing}, for the case of general non-selfadjoint operators), whose proof is found in Appendix, is an alternative non-model characterisation of the linear sets
$\widetilde{N}_\pm^\varkappa$.
\begin{theorem}
  \label{lem:on-smooth-vectors-other-form}
The sets $\widetilde{N}_\pm^\varkappa$ are described as follows:
\begin{equation}
\widetilde{N}_\pm^\varkappa=\bigl\{u\in\cH: \chi_\varkappa^{\mp}\alpha\Gamma_0(A_{\varkappa}-z
I)^{-1}u\in H^2_\pm(E)\bigr\}\,.
\label{N_tilde_Characterisation}
\end{equation}
\end{theorem}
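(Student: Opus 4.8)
The plan is to prove the ``$+$'' case (the set $\widetilde{N}_+^\varkappa$ together with the condition $\chi_\varkappa^-\alpha\Gamma_0(A_\varkappa-zI)^{-1}u\in H^2_+(E)$); the ``$-$'' case is entirely symmetric, interchanging $\complex_+\leftrightarrow\complex_-$, $A_{\I I}^*\leftrightarrow A_{\I I}$, $\widehat{\Theta}_\varkappa\leftrightarrow\Theta_\varkappa$, $\chi_\varkappa^-\leftrightarrow\chi_\varkappa^+$ and $S\leftrightarrow S^*$, and invoking Lemma \ref{lem:ryzhov-formulae}(\ref{eq:kappa-ii}) in place of (\ref{eq:kappa-ii+}). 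First I would unravel the definition $\widetilde{N}_+^\varkappa=\Phi^*P_K\mathfrak{N}^\varkappa_+$ from (\ref{eq:definition-n_pm}). Fix $u\in\cH$ and set $\binom{\widetilde{g}_0}{g_0}:=\Phi u\in K$, so that by (\ref{characterise_K}) the combinations $e_0:=\widetilde{g}_0+S^*g_0$ and $b_0:=S\widetilde{g}_0+g_0$ are genuine $H^2_-(E)$ and $H^2_+(E)$ functions, even though $\widetilde{g}_0,g_0$ need not be honest $L^2$ functions individually. Since $\ker P_K=D_-\oplus D_+$, with $D_\pm$ as in (\ref{eq:D-spaces}), membership $u\in\widetilde{N}_+^\varkappa$ is equivalent to the existence of $d_1\in H^2_+(E)$ and $d_2\in H^2_-(E)$ with $\binom{\widetilde{g}_0+d_1}{g_0+d_2}\in\mathfrak{N}^\varkappa_+$. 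Evaluating the function in (\ref{definition_curlyN}) on this element — using that $\chi_\varkappa^\pm$ are constant operators (hence commute with the Riesz projections $P_\pm$), that $S$ is a multiplier on $H^2_+(E)$ and $S^*$ on $H^2_-(E)$, and that $P_+e_0=0$, $P_+b_0=b_0$ — I expect the two-parameter condition to collapse to the single equation
\begin{equation*}
(\chi_\varkappa^++\chi_\varkappa^-S)\,d_1=-\chi_\varkappa^-b_0,\qquad d_1\in H^2_+(E),
\end{equation*}
with $d_2$ dropping out entirely. Thus $u\in\widetilde{N}_+^\varkappa$ if and only if this reduced equation is solvable in $H^2_+(E)$.

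Next I would identify the target quantity with the model data. From the construction of the isometry $\Phi$ underlying (\ref{eq:naboko-thm-1}) (cf. the argument proving Theorem \ref{thm:rhyzhov2.5} and \cite[Thm.\,3]{MR573902}) one has $\alpha\Gamma_0(A_{\I I}^*-zI)^{-1}u=b_0(z)$ for $z\in\complex_+$, where $b_0(z)$ denotes the analytic continuation of $b_0$. Combining this with Lemma \ref{lem:ryzhov-formulae}(\ref{eq:kappa-ii+}) gives, for $z\in\complex_+\cap\rho(A_\varkappa)$,
\begin{equation*}
\chi_\varkappa^-\alpha\Gamma_0(A_\varkappa-zI)^{-1}u=\chi_\varkappa^-\widehat{\Theta}_\varkappa(z)^{-1}b_0(z)=:\phi(z),
\end{equation*}
so the assertion to be proved is precisely that the reduced equation is solvable in $H^2_+(E)$ if and only if $\phi\in H^2_+(E)$.

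The bridge between the two is a direct algebraic identity. Using $(S-I)\chi_\varkappa^-=\widehat{\Theta}_\varkappa-I$, which is just (\ref{eq:alternate-for-theta-hat}) rearranged, I would verify that $(\chi_\varkappa^++\chi_\varkappa^-S)\phi=\chi_\varkappa^-b_0$ holds as an identity of analytic $E$-valued functions on $\complex_+\cap\rho(A_\varkappa)$; hence $d_1:=-\phi$ always solves the reduced equation. This yields the backward implication at once: if $\phi\in H^2_+(E)$, take $d_1=-\phi$ and $d_2=0$ to produce an element of $\mathfrak{N}^\varkappa_+$ projecting onto $\Phi u$ under $P_K$, so $u\in\widetilde{N}_+^\varkappa$. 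For the forward implication, given a solution $d_1\in H^2_+(E)$ I would observe that $\chi_\varkappa^++\chi_\varkappa^-S$ is bounded and analytic on $\complex_+$, so $(\chi_\varkappa^++\chi_\varkappa^-S)d_1$ and $\chi_\varkappa^-b_0$ are $H^2_+(E)$ functions agreeing on $\reals$, whence $(\chi_\varkappa^++\chi_\varkappa^-S)(z)\bigl(\phi(z)+d_1(z)\bigr)=0$ throughout $\complex_+\cap\rho(A_\varkappa)$; invertibility of $\chi_\varkappa^++\chi_\varkappa^-S$ there — equivalent, via the elementary $I+AB$ versus $I+BA$ correspondence with $A=S-I$, $B=\chi_\varkappa^-$, to that of $\widehat{\Theta}_\varkappa$, which holds on $\rho(A_\varkappa)$ — forces $\phi=-d_1$, so $\phi$ continues to an $H^2_+(E)$ function.

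The main obstacle is the first step: the careful bookkeeping with $P_K$ (\ref{eq:pk-action}) and the Riesz projections, tracking which combinations are honest $L^2$ functions, and recognising that the two-parameter freedom $(d_1,d_2)$ collapses to a single scalar equation. Once this reduction is secured the remaining algebra is light, and the only genuine analytic subtlety — that the comparison of $\phi$ with $-d_1$ is carried out in the open half-plane $\complex_+\cap\rho(A_\varkappa)$ rather than on the real boundary — neatly sidesteps any need for almost-everywhere invertibility of $\widehat{\Theta}_\varkappa$ on $\reals$.
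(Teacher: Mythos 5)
Your argument is correct and is essentially the paper's own proof in a different packaging: reducing membership in $\widetilde{N}_+^\varkappa$ to the solvability in $H^2_+(E)$ of the single equation $(\chi_\varkappa^++\chi_\varkappa^-S)d_1=-\chi_\varkappa^-b_0$, with the candidate $d_1=-\chi_\varkappa^-\widehat{\Theta}_\varkappa^{-1}b_0$ supplied by the identity $(I+\chi_\varkappa^-(S-I))\chi_\varkappa^-\widehat{\Theta}_\varkappa^{-1}=\chi_\varkappa^-$, merges into one equivalence the paper's two separate inclusions, which rest on exactly the same ingredients (Lemma \ref{lem:ryzhov-formulae}, the identification of $S\widetilde{g}+g$ with $\alpha\Gamma_0(A_{\I I}^*-\cdot\,I)^{-1}u$ via $\mathscr{F}_-$, the $\chi$--$\Theta$ algebra of Lemma \ref{final_lemma}, and an explicit perturbation of $\Phi u$ inside $\ker P_K$). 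The only slip is the normalisation, $\alpha\Gamma_0(A_{\I I}^*-zI)^{-1}u=-\sqrt{2\pi}\,b_0(z)$ rather than $b_0(z)$, which is immaterial to the conclusion.
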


\begin{corollary}
\label{new_remark}
The right-hand side of (\ref{N_tilde_Characterisation}) coincides with $\{u\in\cH: \alpha\Gamma_0(A_{\varkappa}-zI)^{-1}u\in H^2_\pm(E)\},$ and therefore equivalently one has
\begin{equation}
\widetilde{N}_\pm^\varkappa=\{u\in\cH: \alpha\Gamma_0(A_{\varkappa}-z
I)^{-1}u\in H^2_\pm(E)\}.
\label{Ntilde_form}
\end{equation}
\end{corollary}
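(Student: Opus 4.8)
The plan is to prove the nontrivial inclusion directly in each half-plane, using the resolvent identities of Lemma~\ref{lem:ryzhov-formulae}, the boundary memberships \eqref{eq:naboko-thm-1}, and the expressions \eqref{eq:alternate-for-theta}, \eqref{eq:alternate-for-theta-hat} of $\Theta_\varkappa$ and $\widehat\Theta_\varkappa$ through the characteristic function $S$. One inclusion is free: since $\chi_\varkappa^\mp\in\mathcal{B}(E)$ is a fixed ($z$-independent) bounded operator, if $\alpha\Gamma_0(A_\varkappa-zI)^{-1}u\in H^2_\pm(E)$ then so is $\chi_\varkappa^\mp\alpha\Gamma_0(A_\varkappa-zI)^{-1}u$, so the set in \eqref{Ntilde_form} is contained in that of \eqref{N_tilde_Characterisation}. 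Hence I only need the reverse implication: the apparently weaker condition $\chi_\varkappa^\mp\alpha\Gamma_0(A_\varkappa-zI)^{-1}u\in H^2_\pm(E)$ should already force $\alpha\Gamma_0(A_\varkappa-zI)^{-1}u\in H^2_\pm(E)$.

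For the ``$+$'' case I would set $F(z):=\alpha\Gamma_0(A_\varkappa-zI)^{-1}u$, $z\in\complex_+\cap\rho(A_\varkappa)$, and assume $\chi_\varkappa^-F\in H^2_+(E)$. Part (iii) of Lemma~\ref{lem:ryzhov-formulae} gives $\widehat\Theta_\varkappa(z)F(z)=\alpha\Gamma_0(A_{\I I}^*-zI)^{-1}u$, which lies in $H^2_+(E)$ by \eqref{eq:naboko-thm-1}. Substituting $\widehat\Theta_\varkappa=I+(S-I)\chi_\varkappa^-$ from \eqref{eq:alternate-for-theta-hat} and solving for $F$, I obtain
\begin{equation*}
  F=\widehat\Theta_\varkappa F+(I-S)\chi_\varkappa^-F,
\end{equation*}
whose first term is in $H^2_+(E)$ by the above, while in the second term $\chi_\varkappa^-F\in H^2_+(E)$ by hypothesis and $S$ is bounded and analytic on $\complex_+$, so multiplication by $S$ preserves $H^2_+(E)$. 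Thus $F\in H^2_+(E)$. The ``$-$'' case proceeds symmetrically, now with $z\in\complex_-$, part (i) of Lemma~\ref{lem:ryzhov-formulae}, and \eqref{eq:alternate-for-theta}, leading to $F=\Theta_\varkappa F+(I-S^*(\cc{\cdot}))\chi_\varkappa^+F$ with $\Theta_\varkappa F=\alpha\Gamma_0(A_{\I I}-zI)^{-1}u\in H^2_-(E)$.

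The only step I expect to require care---the would-be main obstacle, though a mild one---is the mapping property of the multipliers, and in particular of $z\mapsto S^*(\cc{z})$ on $\complex_-$ arising in the ``$-$'' case. I would first verify that this function is analytic on $\complex_-$: pairing $S^*(\cc{z})$ against fixed vectors of $E$ and conjugating identifies it, up to a complex conjugation of the scalar pairing, with the reflection of the analytic function $S$ across the real axis, which is analytic in $z\in\complex_-$; moreover it is contractive there because $S$ is contractive on $\complex_+$ (Remark~\ref{rem:def-charasteristic-function}). Since a bounded analytic operator-valued multiplier maps $H^2_\pm(E)$ into itself, one gets $(I-S)\chi_\varkappa^-F\in H^2_+(E)$ and $(I-S^*(\cc{\cdot}))\chi_\varkappa^+F\in H^2_-(E)$ respectively, which closes both arguments and yields \eqref{Ntilde_form}.
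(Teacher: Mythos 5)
Your proposal is correct and follows essentially the same route as the paper: the trivial inclusion via boundedness of $\chi_\varkappa^\mp$, and for the converse the identity $\widehat\Theta_\varkappa(z)F(z)=\alpha\Gamma_0(A_{\I I}^*-zI)^{-1}u$ from Lemma~\ref{lem:ryzhov-formulae}(iii) combined with \eqref{eq:naboko-thm-1}, \eqref{eq:alternate-for-theta-hat} and the fact that the analytic contraction $S$ is a bounded multiplier on $H^2_+(E)$ (and symmetrically with $\Theta_\varkappa$, $S^*(\cc{\,\cdot\,})$ in $\complex_-$). Your rearrangement $F=\widehat\Theta_\varkappa F+(I-S)\chi_\varkappa^-F$ is just an algebraically equivalent repackaging of the paper's step isolating $\chi_\varkappa^+F$, so no further comparison is needed.
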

\begin{proof}
Indeed, if $\alpha\Gamma_0(A_{\varkappa}-z
I)^{-1}u\in H^2_+(E)$ then clearly $\chi_\varkappa^-\alpha\Gamma_0(A_{\varkappa}-z
I)^{-1}u\in H^2_+(E).$ Conversely,
we write
\begin{align}
S(z)\chi_\varkappa^-\alpha\Gamma_0(A_{\varkappa}-zI)^{-1}u &= (S(z)\chi_\varkappa^-+\chi_\varkappa^+)\alpha\Gamma_0(A_{\varkappa}-zI)^{-1}u-\chi_\varkappa^+\alpha\Gamma_0(A_{\varkappa}-zI)^{-1}u\label{first_line}\\[0.4em]
&= \widehat{\Theta}_\varkappa(z)\alpha\Gamma_0(A_{\varkappa}-zI)^{-1}u-\chi_\varkappa^+\alpha\Gamma_0(A_{\varkappa}-zI)^{-1}u\label{penultimate_line}\\[0.4em]
&=\alpha\Gamma_0(A^*_{iI}-zI)^{-1}u-\chi_\varkappa^+\alpha\Gamma_0(A_{\varkappa}-zI)^{-1}u,\label{last_line}
\end{align}
where $S(z)\chi_\varkappa^-+\chi_\varkappa^+=(S(z)-I)\chi_\varkappa^-+I=\widehat{\Theta}_\varkappa(z),$ see
(\ref{eq:alternate-for-theta-hat}), and in (\ref{penultimate_line})--(\ref{last_line}) we use the part (iii) of Lemma \ref{lem:ryzhov-formulae}.

Further, as we noted in (\ref{eq:naboko-thm-1}), one has $\alpha\Gamma_0(A^*_{iI}-zI)^{-1}u\in H^2_+(E),$ and
since $S$ is an analytic contraction in ${\mathbb C}_+$ the function  $S(z)\chi_\varkappa^-\alpha\Gamma_0(A_{\varkappa}-zI)^{-1}u,$ $z\in{\mathbb C}_+,$ is an element of $H^2_+(E)$ as long as $\chi_\varkappa^-\alpha\Gamma_0(A_{\varkappa}-zI)^{-1}u\in H^2_+(E).$ Recalling (\ref{first_line}), (\ref{last_line}), we conclude that $\chi_\varkappa^+\alpha\Gamma_0(A_{\varkappa}-zI)^{-1}u\in H^2_+(E)$ and therefore
\[
\chi_\varkappa^+\alpha\Gamma_0(A_{\varkappa}-zI)^{-1}u+\chi_\varkappa^-\alpha\Gamma_0(A_{\varkappa}-zI)^{-1}u=\alpha\Gamma_0(A_{\varkappa}-zI)^{-1}u\in H^2_+(E),
\]
as required.

The equality
\[
\bigl\{u\in\cH: \chi_\varkappa^+\alpha\Gamma_0(A_{\varkappa}-zI)^{-1}u\in H^2_-(E)\bigr\}=\bigl\{u\in\cH: \alpha\Gamma_0(A_{\varkappa}-zI)^{-1}u\in H^2_-(E)\bigr\}
\]
is shown in a similar way.
\end{proof}

The above corollary together with Theorem \ref{thm:on-smooth-vectors-a.c.equality} motivates generalising the notion of the absolutely continuous subspace $\cH_{\rm ac}(A_{\varkappa})$ to the case of non-selfadjoint extensions $A_\varkappa$ of a symmetric operator $A,$ by identifying it with the set $N^\varkappa_{\rm e}.$ This generalisation follows in the footsteps of the corresponding definition by Naboko \cite{MR573902} in the case of additive perturbations (see also \cite{Ryzhov_closed, Ryzh_ac_sing} for the general case). In particular, an argument similar to \cite[Corollary 1]{MR573902} shows that for the functional model image of $\tilde{N}^\varkappa_{\rm e}$ the following representation holds:
\begin{align}
&\Phi\widetilde{N}^\varkappa_{\rm e}=\biggl\{P_K\binom{\widetilde{g}}{g}\in\mathfrak{H}:\nonumber
\\
&\binom{\widetilde{g}}{g}\in\mathfrak{H}\ {\rm satisfies}\
  \Phi(A_{\varkappa}-z I)^{-1}\Phi^*P_K\binom{\widetilde{g}}{g}=
P_K\frac{1}{\cdot-z}\binom{\widetilde{g}}{g}\ \ \ \forall\,z\in{\mathbb C}_-\cup{\mathbb C}_+\biggr\}.
\label{New_Representation}
\end{align}
(Note that the inclusion of the right-hand side of (\ref{New_Representation}) into $\Phi\tilde{N}^\varkappa_{\rm e}$ follows immediately from Theorem \ref{lem:similar-to-naboko-thm-4}.)
Further,
we arrive at an equivalent description:
  \begin{equation}
    \Phi \widetilde{N}_{\rm e}^\varkappa=\left\{P_K\binom{\widetilde{g}}{g}: \binom{\widetilde{g}}{g}\in\mathfrak{H}\ {\rm satisfies}\
 \chi_\varkappa^+(\widetilde{g}+S^*g)+\chi_\varkappa^-(S\widetilde{g}+g)=0\right\}\,.
 \label{lyubimaya_formula}
  \end{equation}

\begin{definition}
\label{abs_cont_subspace}
For a symmetric operator $A,$ in the case of a non-selfadjoint extension $A_\varkappa$ the absolutely continuous subspace $\cH_{\rm ac}(A_{\varkappa})$  is defined by the formula $\cH_{\rm ac}(A_{\varkappa})=N^\varkappa_{\rm e}.$

In the case of a self-adjoint extension $A_\varkappa$, we understand $\cH_{\rm ac}(A_{\varkappa})$ in the sense of the classical definition of the absolutely continuous subspace of a self-adjoint operator.
\end{definition}

\section{The relationship between the set of smooth vectors and the absolutely continuous
  subspace in the self-adjoint setting}
\label{sec:inclusion-smooth}

The argument of this section is similar to that of \cite{MR573902},
subject to appropriate modifications in order to account for the fact
that we deal with extensions of symmetric operators rather than
additive perturbations. The same strategy seems to be applicable in the
``mixed'' case that incorporates both extensions and perturbations, which has recently been studied in \cite{MR2418300}.

The following proposition is contained in the proof of \cite[Lemma 5]{MR573902}. For reader's convenience, we provide its proof in Appendix.
\begin{proposition}
  \label{lem:brothers-naboko}
  If the Borel transform of a Borel measure $\mu$
  \begin{equation*}
    \int_\reals\frac{d\mu(s)}{s-z}
  \end{equation*}
is either an element of $H_+^2$ when $z\in\complex_+$ or an element of
$H_-^2$ when $z\in\complex_-$, then $\mu$ is absolutely
continuous with respect to the Lebesgue measure.
\end{proposition}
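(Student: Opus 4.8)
The plan is to reconstruct $\mu$ explicitly from the boundary behaviour of its Borel transform and to recognise the outcome as an absolutely continuous measure. Write $F(z):=\int_\reals(s-z)^{-1}\,d\mu(s)$, and note that the convergence of this integral already forces $\mu$ to be Poisson-finite; we treat $\mu$ as a positive (or real) Borel measure, which is the case of interest and for which the symmetry $F(\cc{z})=\cc{F(z)}$ holds. Suppose first that $F\eval{\complex_+}\in H^2_+$; the case $F\eval{\complex_-}\in H^2_-$ is equivalent, since for a real measure the said symmetry identifies $H^2_-$-membership of $F\eval{\complex_-}$ with $H^2_+$-membership of $F\eval{\complex_+}$. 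The starting observation is that, up to the factor $\pi$, the imaginary part of $F$ is the Poisson integral of $\mu$: for $z=x+\I y$ with $y>0$,
\begin{equation*}
\im F(x+\I y)=\int_\reals\frac{y}{(s-x)^2+y^2}\,d\mu(s)=\pi\,(P_y*\mu)(x),
\end{equation*}
where $P_y$ denotes the Poisson kernel of the upper half-plane. The whole argument therefore hinges on controlling this Poisson integral as $y\to0^+$, and it is precisely the reality of $\mu$ that lets the one-sided hypothesis determine it completely.

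Next I would invoke the standard Hardy-space facts (\cite{MR822228}): since $F\in H^2_+$, it has nontangential boundary values $F(\cdot)\in L^2(\reals)$ almost everywhere, and moreover $F(\cdot+\I y)\to F(\cdot)$ in the norm of $L^2(\reals)$ as $y\to0^+$. Consequently, for every bounded interval $(a,b)$ the Cauchy--Schwarz inequality gives
\begin{equation*}
\int_a^b\bigl|\im F(x+\I y)-\im F(x)\bigr|\,dx\le\sqrt{b-a}\,\norm{F(\cdot+\I y)-F(\cdot)}_{L^2(\reals)}\xrightarrow[y\to0^+]{}0,
\end{equation*}
so that $\int_a^b\im F(x+\I y)\,dx\to\int_a^b\im F(x)\,dx$.

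The final step is the Stieltjes--Perron inversion formula, which recovers $\mu$ from $\im F$: for any $a<b$ that are not atoms of $\mu$,
\begin{equation*}
\mu\bigl((a,b)\bigr)=\lim_{y\to0^+}\frac1\pi\int_a^b\im F(x+\I y)\,dx.
\end{equation*}
Combining this with the convergence just established yields $\mu((a,b))=\frac1\pi\int_a^b\im F(x)\,dx$ for all non-atomic $a,b$. Since the atoms of $\mu$ form an at most countable set, any point can be squeezed between non-atomic endpoints, which forces $\mu$ to have no atoms and hence the identity to hold for \emph{every} interval. Thus $\mu$ coincides with the measure having Lebesgue density $\pi^{-1}\im F(\cdot)\in L^1_{\mathrm{loc}}(\reals)$, and is therefore absolutely continuous, as claimed.

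The step I expect to be the main obstacle is conceptual rather than computational: the hypothesis controls $F$ only on one half-plane, whereas the Poisson integral of a general \emph{complex} measure involves the boundary values of $F$ from both sides. This is exactly where reality (or positivity) of $\mu$ enters, through $F(\cc z)=\cc{F(z)}$, and it must be used carefully. A second, equivalent route that makes the role of one-sidedness transparent proceeds via the Fourier transform: one checks that $F(z)=\I\int_0^\infty e^{\I\lambda z}\widehat{\mu}(\lambda)\,d\lambda$ for $z\in\complex_+$, so that by the Paley--Wiener description of $H^2_+$ the hypothesis is equivalent to $\widehat{\mu}\in L^2(0,\infty)$; reality gives $\widehat{\mu}(-\lambda)=\cc{\widehat{\mu}(\lambda)}$, whence $\widehat{\mu}\in L^2(\reals)$ and $\mu$ is absolutely continuous with $L^2$ density by Plancherel. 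I would nonetheless present the Poisson-integral argument as the main line, since it extends directly to the merely Poisson-finite (rather than finite) measures that occur in the applications.
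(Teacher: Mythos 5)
Your argument is a genuinely different route from the paper's (which represents the $H^2_+$ function as the Cauchy transform of an $L^2$ boundary function $f$, observes that the complex measure $(s-z_0)^{-1}(f(s)\,ds-d\mu(s))$ then has vanishing Cauchy transform throughout $\complex_+$, and concludes by the half-plane version of the F.\ and M.\ Riesz theorem), and as a proof for \emph{real} measures it is essentially sound. However, it has a genuine gap relative to the statement as it is used in the paper: Proposition~\ref{lem:brothers-naboko} must hold for \emph{complex} Borel measures under a \emph{one-sided} hypothesis. Indeed, in the proof of Lemma~\ref{lem:singular-projection-sa-part} the proposition is applied to $\hat{\mu}=\mu_{u,A^*v}-t\,\mu_{u,v}$ with $\mu_{u,v}(\delta)=\inner{E_{A_0}(\delta)u}{v}_\cH$, which is complex-valued in general, and the paper's own proof explicitly treats $(s-z_0)^{-1}(f(s)\,ds-d\mu(s))$ as a complex measure. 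Your entire main line collapses in that setting: the identity $F(\cc{z})=\cc{F(z)}$ fails, $\im F(x+\I y)$ is no longer the Poisson integral of $\mu$, and the Stieltjes--Perron inversion step has nothing to invert. Decomposing $\mu=\mu_1+\I\mu_2$ into real measures does not rescue the argument, because the hypothesis $F\in H^2_+$ does not pass to the Cauchy transforms of $\mu_1$ and $\mu_2$ separately; the assertion that a one-sided analytic condition on a genuinely complex measure forces absolute continuity is precisely the nontrivial content of the F.\ and M.\ Riesz theorem, which your proof avoids only by assuming it away. The same objection applies to your alternative Fourier route, which uses $\widehat{\mu}(-\lambda)=\cc{\widehat{\mu}(\lambda)}$.

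What each approach buys: your Poisson-integral argument is more elementary (no F.\ and M.\ Riesz input) and, as you note, extends naturally to Poisson-finite positive measures, so it is a perfectly good proof of the real/positive case and even identifies the density $\pi^{-1}\im F$ explicitly. The paper's argument is the one that actually covers the complex, one-sided situation required here. To repair your proposal you would either need to restrict the proposition (and then strengthen the hypotheses where it is applied), or incorporate the F.\ and M.\ Riesz step --- at which point you have essentially reconstructed the paper's proof.
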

\begin{lemma}
  \label{lem:singular-projection-sa-part}
Assume that $\varkappa=\varkappa^*,$  ${\rm ker}(\alpha)=\{0\}$ and
let $P_S$ be the orthogonal projection onto the singular subspace of $A_{\varkappa}$. Then
following inclusion holds:
\begin{equation*}
  P_S\widetilde{N}_{\rm e}^\varkappa\subset\bigcap_{z\in\complex\setminus\reals}\ran(A-zI)\,.
\end{equation*}
\end{lemma}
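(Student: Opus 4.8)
The plan is to exploit that, since $\varkappa=\varkappa^*$, the operator $B_\varkappa=\alpha\varkappa\alpha/2$ is self-adjoint, so by Proposition~\ref{prop:properties-almost-extensions}(5) the extension $A_\varkappa$ is self-adjoint; in particular $\complex\setminus\reals\subset\rho(A_\varkappa)$ and $A_\varkappa$ carries a spectral measure, with respect to which $\cH=\cH_{\rm ac}\oplus\ran(P_S)$, where $P_{\rm ac}:=I-P_S$. Recall that $\bigcap_{z\in\complex\setminus\reals}\ran(A-zI)$ is precisely the maximal subspace reducing $A$ in which $A$ acts self-adjointly. The whole proof reduces to the single claim that
\begin{equation*}
  \alpha\Gamma_0(A_\varkappa-zI)^{-1}P_S u=0\qquad\text{for all }u\in\widetilde{N}_{\rm e}^\varkappa\text{ and all }z\in\complex\setminus\reals.
\end{equation*}
Indeed, granting this and using $\ker(\alpha)=\{0\}$, one gets $\Gamma_0 g=0$ for $g:=(A_\varkappa-zI)^{-1}P_S u\in\dom(A_\varkappa)$; since $g\in\dom(A_\varkappa)$ also gives $\Gamma_1 g=B_\varkappa\Gamma_0 g=0$, Proposition~\ref{prop:properties-almost-extensions}(1) places $g\in\dom(A)$, whence $P_S u=(A_\varkappa-zI)g=(A-zI)g\in\ran(A-zI)$. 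As $z$ is arbitrary, $P_S u\in\bigcap_{z}\ran(A-zI)$, as required.

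To prove the displayed claim I would work with the spectral measure $E_\varkappa(\cdot)$ of $A_\varkappa$ (not to be confused with the subspace $E=\clos(\ran\alpha)$, which here equals $\mathcal{K}$). For any $y\in\dom(A_\varkappa)$ the $\cH$-valued measure $\delta\mapsto E_\varkappa(\delta)y$ is countably additive and of bounded variation in the graph norm of $A_\varkappa$; since $\Gamma_0$ is bounded on $\dom(A^*)$ in this norm (as is standard for boundary triples) and $\alpha\in\mathcal{B}(\mathcal{K})$, the identity $\Gamma_0(A_\varkappa-zI)^{-1}y=\int_\reals(s-z)^{-1}\Gamma_0\,dE_\varkappa(s)y$ holds, so that for each $e\in E$ the scalar function $\langle\alpha\Gamma_0(A_\varkappa-zI)^{-1}y,e\rangle$ is the Borel transform of the complex measure $\sigma_e^{y}(\delta):=\langle\alpha\Gamma_0 E_\varkappa(\delta)y,e\rangle$. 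Two features of $\sigma_e^y$ will drive the argument: if $y\in\cH_{\rm ac}$ then $E_\varkappa(N)y=0$ for every Lebesgue-null $N$, so $\sigma_e^y$ is absolutely continuous; if $y\in\ran(P_S)$ then $E_\varkappa(\cdot)y$ is carried by a null set, so $\sigma_e^y$ is singular.

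Now fix $u\in\widetilde{N}_{\rm e}^\varkappa$ and $\zeta\in\complex_-$, and set $y:=(A_\varkappa-\zeta I)^{-1}u\in\dom(A_\varkappa)$ and $F(z):=\alpha\Gamma_0(A_\varkappa-zI)^{-1}u$. By the resolvent identity, $\langle\alpha\Gamma_0(A_\varkappa-zI)^{-1}y,e\rangle=(z-\zeta)^{-1}\langle F(z)-F(\zeta),e\rangle$, i.e. the Borel transform of $\sigma_e^y$ equals $(z-\zeta)^{-1}\langle F(z)-F(\zeta),e\rangle$. By Corollary~\ref{new_remark}, $u\in\widetilde{N}_{\rm e}^\varkappa\subset\widetilde{N}_+^\varkappa$ means $F\in H^2_+(E)$ on $\complex_+$; since $\zeta\in\complex_-$, both $(z-\zeta)^{-1}F(z)$ and $(z-\zeta)^{-1}F(\zeta)$ lie in $H^2_+(E)$ on $\complex_+$ (the factor $(z-\zeta)^{-1}$ is bounded analytic there and is itself in $H^2_+$), so the Borel transform of $\sigma_e^y$ belongs to $H^2_+$ on $\complex_+$. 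Proposition~\ref{lem:brothers-naboko} then forces $\sigma_e^y$ to be absolutely continuous. On the other hand, splitting $y=y_{\rm ac}+y_{\rm s}$ with $y_{\rm ac}:=(A_\varkappa-\zeta I)^{-1}P_{\rm ac}u\in\cH_{\rm ac}$ and $y_{\rm s}:=(A_\varkappa-\zeta I)^{-1}P_S u\in\ran(P_S)$, linearity gives $\sigma_e^y=\sigma_e^{y_{\rm ac}}+\sigma_e^{y_{\rm s}}$, where $\sigma_e^{y_{\rm ac}}$ is absolutely continuous and $\sigma_e^{y_{\rm s}}$ is singular. Hence $\sigma_e^{y_{\rm s}}=\sigma_e^y-\sigma_e^{y_{\rm ac}}$ is simultaneously absolutely continuous and singular, so $\sigma_e^{y_{\rm s}}=0$ for every $e\in E$. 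This yields $\alpha\Gamma_0(A_\varkappa-zI)^{-1}y_{\rm s}=0$ for all $z$, and the resolvent identity once more gives $\alpha\Gamma_0(A_\varkappa-zI)^{-1}P_S u\equiv\alpha\Gamma_0(A_\varkappa-\zeta I)^{-1}P_S u$; letting $z=\I t$, $t\to+\infty$, and using that both $(A_\varkappa-\I tI)^{-1}P_S u\to0$ and $A_\varkappa(A_\varkappa-\I tI)^{-1}P_S u\to0$ strongly, one sees this constant is $0$, which is the displayed claim.

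The main obstacle is not conceptual but lies in the measure-theoretic bookkeeping of the middle step: justifying that $\delta\mapsto E_\varkappa(\delta)y$ is a graph-norm vector measure to which the bounded map $\alpha\Gamma_0$ may be applied term-by-term (so that $\langle\alpha\Gamma_0(A_\varkappa-zI)^{-1}y,e\rangle$ genuinely is the Borel transform of the countably additive $\sigma_e^y$), and verifying the absolute continuity, respectively singularity, of $\sigma_e^{y_{\rm ac}}$ and $\sigma_e^{y_{\rm s}}$ \emph{after} $\alpha\Gamma_0$ has acted. The conceptual kernel, borrowed from the proof of \cite[Lemma 5]{MR573902}, is the observation that one need not show $P_S u$ to be a smooth vector directly: it suffices to establish absolute continuity of the \emph{full} twisted measure $\sigma_e^y$ coming from the smooth vector $u$, and then to subtract off the manifestly absolutely continuous contribution of $P_{\rm ac}u$, which forces the singular contribution to vanish. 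The two hypotheses enter exactly once each: self-adjointness ($\varkappa=\varkappa^*$) supplies the spectral measure and the decomposition into absolutely continuous and singular parts, while $\ker(\alpha)=\{0\}$ converts the vanishing of $\alpha\Gamma_0(A_\varkappa-zI)^{-1}P_S u$ into that of $\Gamma_0(A_\varkappa-zI)^{-1}P_S u$, which is what feeds Proposition~\ref{prop:properties-almost-extensions}(1).
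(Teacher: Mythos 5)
Your proof is correct, and it rests on the same pillars as the paper's: Corollary~\ref{new_remark} to obtain $H^2_\pm$ membership of a Borel transform, Proposition~\ref{lem:brothers-naboko} to deduce absolute continuity, the decomposition $u=u_{\rm ac}+u_s$ to isolate the singular contribution and force it to vanish, and finally $\ker(\alpha)=\{0\}$ together with Proposition~\ref{prop:properties-almost-extensions}(1) to place $(A_\varkappa-zI)^{-1}P_Su$ in $\dom(A)$. Where you genuinely diverge is in the technical middle step. The paper never pushes $\alpha\Gamma_0$ through the spectral measure: it tests $\Gamma_0(A_0-zI)^{-1}u$ against $\alpha w=\Gamma_1 v$ (surjectivity of $\Gamma_1$) and invokes the Green formula (\ref{Green_formula}), using that $\Gamma_1$ vanishes on $\dom(A_0)$, to rewrite the pairing as $\inner{(A_0-zI)^{-1}u}{A^*v}-\inner{A^*(A_0-zI)^{-1}u}{v}$, which is transparently the Borel transform of a scalar measure built from the spectral measure of $A_0$ evaluated at $u$ itself, and whose singular part is read off by replacing $u$ with $u_s$. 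The price of that route is that $\Gamma_1=0$ on $\dom(A_\varkappa)$ fails for $\varkappa\neq 0$, so the paper must first treat $\varkappa=0$ and then shift the boundary triple ($\widehat\Gamma_1=\Gamma_1-B_\varkappa\Gamma_0$). You instead regularise to $y=(A_\varkappa-\zeta I)^{-1}u\in\dom(A_\varkappa)$ and apply $\alpha\Gamma_0$ directly to the vector measure $E_\varkappa(\cdot)y$, which treats all self-adjoint $\varkappa$ uniformly and sidesteps both the triple shift and the slight domain delicacy hidden in the paper's measure $t\mu_{u,v}$; the cost is the interchange $\Gamma_0\int(s-z)^{-1}dE_\varkappa(s)y=\int(s-z)^{-1}d\bigl(\Gamma_0E_\varkappa(s)y\bigr)$ that you flag as the main obstacle. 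That step is legitimate and not a gap: $\Gamma_0$ is graph-norm bounded on $\dom(A^*)$, $E_\varkappa(\cdot)y$ is countably additive in the graph norm for $y\in\dom(A_\varkappa)$, a $\mathcal{K}$-valued countably additive measure paired with a fixed vector automatically has finite total variation, and the identity then follows by uniform approximation of $(s-z)^{-1}$ by simple functions. Your closing limit $z=\I t$, $t\to+\infty$ to kill the constant is also sound, since $(A_\varkappa-\I tI)^{-1}P_Su\to 0$ in the graph norm.
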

\begin{proof}
We first demonstrate the validity of the claim for $\varkappa=0.$

We decompose each smooth vector $u$ (i.e $u\in\widetilde{N}^\varkappa_{\rm e}$) into its projections onto the
absolutely continuous and singular subspaces of $A_0$, that
is, $u=u_{\rm ac}+u_s$, where $u_{\rm ac}\in\cH_{\rm ac}(A_{0})$ and
$u_s\in\cH_s(A_{0})$, so $u_{\rm ac}\perp u_s$ and $u_s\in P_S\widetilde{N}_{\rm e}^\varkappa.$

Consider an arbitrary $w\in{\mathcal K}$ and note that, due to the surjectivity of $\Gamma_1,$ there exists a vector $v\in{\rm dom} (A^*)$ such that $\alpha w=\Gamma_1v,$ and therefore
\begin{align}
\inner{\Gamma_0(A_{0}-zI)^{-1}u}{\alpha w}_{\mathcal{K}}&=\inner{\Gamma_0(A_{0}-zI)^{-1}u}{\Gamma_1 v}_{\mathcal{K}}\label{first_measure}\\[0.5em]
&=\inner{\Gamma_0(A_{0}-zI)^{-1}u}{\Gamma_1 v}_{\mathcal{K}}-\inner{\Gamma_1(A_{0}-zI)^{-1}u}{\Gamma_0v}_{\mathcal{K}}\label{second_measure}\\[0.5em]
&=\inner{(A_{0}-zI)^{-1}u}{A^*v}_\cH-\inner{A^*(A_{0}-zI)^{-1}u}{v}_\cH\label{third_measure}\\[0.6em]
&=\int_\reals\frac{1}{t-z}d\mu_{u, A^*v}(t)-\int_\reals\frac{t}{t-z}d\mu_{u,v}(t)=\int_\reals\frac{1}{t-z}d\hat{\mu}(t).\label{last_measure}
\end{align}
Here 
 \[
 \mu_{u,A^*v}(\delta):=\inner{E_{A_0}(\delta)u}{A^*v}_\cH, \ \ \ \ \mu_{u,v}(\delta):=\inner{E_{A_0}(\delta)u}{v}_\cH\ \ \ \ \forall\,{\rm Borel}\ \delta\subset\reals,
 \]
where $E_{A_0}$ is the spectral resolution of the identity for the operator $A_0,$ and $\hat{\mu}(t):=\mu_{u, A^*v}(t)-t\mu_{u,v}(t).$ Furthermore, the measure $\hat{\mu}$ admits the decomposition into its absolutely continuous and singular parts with respect to the Lebesgue measure. Its singular part is equal to $\mu_{u_s, A^*v}(t)-t\mu_{u_s,v}(t)=:\hat{\mu}_s(t),$ see {\it e.g.} \cite{MR1192782}. The equality (\ref{first_measure})--(\ref{second_measure}) is due to the observation that $\Gamma_1$ vanishes on ${\rm dom}(A_0),$ and the equality (\ref{second_measure})--(\ref{third_measure}) is a consequence of the ``Green formula" (\ref{Green_formula}) and the fact that $A\subset A_0.$

At the same time, it follows from
Corollary \ref{new_remark} that the scalar analytic function $\inner{\Gamma_0(A_{0}-zI)^{-1}u}{\alpha w}_{\mathcal{K}}$ is an element of $H^2_+$ for $z\in{\mathbb C}_+$ and also of $H^2_-$ for $z\in{\mathbb C}_-.$ Therefore, by Proposition \ref{lem:brothers-naboko} we infer from (\ref{first_measure})--(\ref{last_measure}) that  the measure $\hat{\mu}$ is absolutely continuous, which implies that its singular part $\hat{\mu}_s$ is the zero measure.



 Finally, we invoke (\ref{first_measure})--(\ref{last_measure}) once again, having replaced $u$ by $u_s$ and $\hat{\mu}$ by $\hat{\mu}_s,$ and conclude that
\begin{equation}
\inner{\Gamma_0(A_{0}-zI)^{-1}u_s}{\alpha w}_{\mathcal{K}}=0\ \ \ \ \forall\,z\in{\mathbb C}\setminus{\mathbb R}.
\label{zero_id}
\end{equation}

Now, by virtue of the facts that $w\in{\mathcal K}$ in (\ref{zero_id}) is arbitrary and ${\rm ker}(\alpha)=\{0\},$
it follows that
$\Gamma_0(A_{0}-zI)^{-1}u_s=0,$ and since $(A_{0}-zI)^{-1}u_s\in{\rm dom}(A_0)$ and therefore
$\Gamma_1(A_{0}-zI)^{-1}u_s=0$ automatically, we obtain
 $(A_{0}-zI)^{-1}u_s\in{\rm dom}(A).$ Finally, since $A_0\supset A,$ we conclude that $u_s\in{\rm ran}(A-zI)$ for all $z\in{\mathbb C}\setminus{\mathbb R},$ as claimed.

In order to treat the case of an arbitrary $\varkappa\in{\mathcal B}({\mathcal K})$ such that $\varkappa=\varkappa^*,$ we define ``shifted'' boundary operators $\widehat{\Gamma}_0:=\Gamma_0,$ $\widehat{\Gamma}_1:=\Gamma_1-B_\varkappa\Gamma_0.$ Notice that ({\it cf.} (\ref{eq:extension-by-operator}))
\[
{\rm dom}(A_\varkappa)=\{u\in\cH: \Gamma_1u=B_\varkappa\Gamma_0u\}=\{u\in\cH: \widehat{\Gamma}_1u=0\},
\]
{\it i.e.} the operator $A_\varkappa$ plays the r\^{o}le of the operator $A_0$ in the triple $({\mathcal K}, \widehat{\Gamma}_0, \widehat{\Gamma}_1).$ Further, note that the change of the triple results in a change of the operator that needs to play the r\^{o}le of $A_{iI},$ the dissipative extension used to construct the functional model, which in terms of the ``old'' triple $({\mathcal K, \Gamma_0, \Gamma_1})$ should be the extension $A_B$ with $B=\alpha(i+\varkappa)\alpha/2.$ Repeating the above argument in this new functional model and bearing in mind that the characterisation of $\widetilde{N}^\varkappa_{\rm e}$ in Corollary \ref{new_remark} holds for all $\varkappa,$ yields the stated result.
\end{proof}
An immediate consequence of this result and the
criterion of complete non-selfadjoint\-ness
(\ref{eq:completely-non-selfadjoint-criterion}) is the following assertion.
\begin{corollary}
  \label{cor:on-smooth-vectors-a.c.contention}
  Let $\varkappa$ and $\alpha$ be as in the preceding lemma. If $A$
  is completely non-selfadjoint, then
  \begin{equation*}
    \widetilde{N}_{\rm e}^\varkappa\subset \cH_{\rm ac}(A_{\varkappa})\,.
  \end{equation*}
\end{corollary}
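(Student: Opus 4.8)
The plan is to read off the stated inclusion directly from the preceding lemma together with the simplicity criterion \eqref{eq:completely-non-selfadjoint-criterion}; no new analytic input is needed. Since $\varkappa=\varkappa^*$ (inherited from Lemma \ref{lem:singular-projection-sa-part}), Proposition \ref{prop:properties-almost-extensions}(5) guarantees that $A_\varkappa$ is self-adjoint, so by Definition \ref{abs_cont_subspace} the subspace $\cH_{\rm ac}(A_\varkappa)$ is understood in the classical sense. Consequently $\cH$ decomposes orthogonally as $\cH=\cH_{\rm ac}(A_\varkappa)\oplus\cH_s(A_\varkappa)$, and $P_S$ is precisely the orthogonal projection onto the singular subspace $\cH_s(A_\varkappa)$. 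In particular, a vector $u\in\cH$ lies in $\cH_{\rm ac}(A_\varkappa)$ if and only if $P_Su=0$, and this is the characterisation I would exploit.

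First I would fix an arbitrary $u\in\widetilde{N}_{\rm e}^\varkappa$ and apply Lemma \ref{lem:singular-projection-sa-part}, which yields $P_Su\in\bigcap_{z\in\complex\setminus\reals}\ran(A-zI)$. Next, invoking the assumption that $A$ is completely non-selfadjoint, the criterion \eqref{eq:completely-non-selfadjoint-criterion} tells us that this intersection reduces to $\{0\}$. Combining the two inclusions gives $P_Su=0$, whence $u\in\cH_{\rm ac}(A_\varkappa)$ by the orthogonal-decomposition observation above. Since $u\in\widetilde{N}_{\rm e}^\varkappa$ was arbitrary, the inclusion $\widetilde{N}_{\rm e}^\varkappa\subset\cH_{\rm ac}(A_\varkappa)$ follows at once.

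I do not expect any genuine obstacle: all the substantive work has already been carried out in Lemma \ref{lem:singular-projection-sa-part}, whose proof relies on the Borel-transform characterisation of Proposition \ref{lem:brothers-naboko} and the resolvent identities of Lemma \ref{lem:ryzhov-formulae}. The only point requiring minor care is the legitimacy of identifying $P_S$ with the orthogonal projection onto the singular subspace and of using the \emph{classical} absolutely continuous/singular orthogonal splitting of $\cH$; this is justified precisely because $\varkappa=\varkappa^*$ forces $A_\varkappa$ to be self-adjoint, so that the spectral theorem supplies the required decomposition and Definition \ref{abs_cont_subspace} applies in its classical form.
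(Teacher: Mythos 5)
Your argument is correct and is precisely the one the paper intends: it presents the corollary as an immediate consequence of Lemma \ref{lem:singular-projection-sa-part} combined with the simplicity criterion \eqref{eq:completely-non-selfadjoint-criterion}, which is exactly the chain $P_Su\in\bigcap_{z\in\complex\setminus\reals}\ran(A-zI)=\{0\}$ that you spell out. Your additional remark that $\varkappa=\varkappa^*$ makes $A_\varkappa$ self-adjoint, so that the classical orthogonal splitting into absolutely continuous and singular subspaces is available, is a correct and appropriate justification of the step the paper leaves implicit.
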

We now proceed to the proof of the opposite inclusion.
\begin{lemma}[Modified Rosenblum lemma, {\it cf.} \cite{MR1307384}]
  \label{lemma-rosenblume}
  Let $\beta$ be a self-adjoint operator in a Hilbert space
  $\cH_1$. Suppose that the operator $T$, defined on $\dom(\beta)$ and
  taking values in a Hilbert space $\cH_2$, is such that $T(\beta-z_0
  I)^{-1}$ is a Hilbert-Schmidt operator for some
  $z_0\in\rho(\beta)$. Then there exists a set $\mathcal{D}$, dense in
  $\cH_{\rm ac}(\beta)$, such that
  \begin{equation*}
    \int_\reals\norm{T\exp(-i\beta t)u}^2dt<\infty
  \end{equation*}
 for all $u\in\mathcal{D}$.
\end{lemma}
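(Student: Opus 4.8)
The plan is to reduce the statement to the classical Rosenblum lemma for a genuinely Hilbert--Schmidt operator and then carry out the standard Plancherel computation in the spectral representation of the absolutely continuous part of $\beta$. Set $K:=T(\beta-z_0 I)^{-1}$, which is Hilbert--Schmidt by hypothesis. For any $u\in\dom(\beta)$ one has $Tu=K(\beta-z_0 I)u$ (write $w=(\beta-z_0I)u$ and apply $K=T(\beta-z_0I)^{-1}$ to $w$), and since $\dom(\beta)$ is invariant under the unitary group $\exp(-i\beta t)$, which commutes with $\beta$, it follows that
\begin{equation*}
T\exp(-i\beta t)u=K\exp(-i\beta t)(\beta-z_0 I)u,\qquad t\in\reals.
\end{equation*}
Thus the unbounded operator $T$ is eliminated in favour of the Hilbert--Schmidt operator $K$ applied to the vector $(\beta-z_0 I)u$, and the whole problem is reduced to estimating $\int_\reals\norm{K\exp(-i\beta t)v}^2\,dt$ for $v=(\beta-z_0 I)u$.

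Next I would invoke the spectral theorem to realise $\cH_{\rm ac}(\beta)$ as a direct integral $\int_\reals^\oplus\mathcal N_\lambda\,d\lambda$ via a unitary $U$ diagonalising $\beta$ as multiplication by $\lambda$, writing $\widehat w:=Uw$ for $w\in\cH_{\rm ac}(\beta)$, so that $\exp(-i\beta t)$ becomes multiplication by $e^{-i\lambda t}$. As the candidate set I would take
\begin{equation*}
\mathcal D:=U^*\bigl\{\widehat u\in L^2(\reals;\{\mathcal N_\lambda\}):\ \widehat u\in L^\infty\text{ with bounded support}\bigr\},
\end{equation*}
which is dense in $\cH_{\rm ac}(\beta)$ by a routine truncation-and-cut-off argument. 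For such $u$ one has $u\in\dom(\beta)$ (the support of $\widehat u$ being bounded), so $T\exp(-i\beta t)u$ is well defined; moreover $v:=(\beta-z_0 I)u$ has $\widehat v(\lambda)=(\lambda-z_0)\widehat u(\lambda)$, again bounded with the same bounded support.

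The core computation would then expand in an orthonormal basis $\{f_j\}$ of $\cH_2$: with $\phi_j:=U P_{\rm ac}K^*f_j$ and $g_j(\lambda):=\inner{\widehat v(\lambda)}{\phi_j(\lambda)}_{\mathcal N_\lambda}$ one has $\inner{K\exp(-i\beta t)v}{f_j}_{\cH_2}=\int_\reals e^{-i\lambda t}g_j(\lambda)\,d\lambda$, and since $\widehat v$ is bounded with bounded support, $g_j\in L^2(\reals)$ with $\norm{g_j}_{L^2}^2\le\norm{\widehat v}_\infty^2\norm{K^*f_j}^2$. Applying Plancherel fibre by fibre and summing over $j$ by Tonelli gives
\begin{equation*}
\int_\reals\norm{T\exp(-i\beta t)u}^2\,dt=2\pi\sum_j\norm{g_j}_{L^2}^2\le 2\pi\norm{\widehat v}_\infty^2\sum_j\norm{K^*f_j}^2=2\pi\norm{\widehat v}_\infty^2\norm{K}_{\rm HS}^2<\infty,
\end{equation*}
which is the required estimate (here $\norm{\widehat v}_\infty$ is the essential supremum of $\lambda\mapsto\norm{\widehat v(\lambda)}_{\mathcal N_\lambda}$).

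The main obstacle, and the only point requiring genuine care, is the rigorous handling of the direct-integral representation of $\cH_{\rm ac}(\beta)$ together with the justification of the interchange of summation and integration and of the fibre-wise Plancherel identity: one must ensure measurability of the fields $\lambda\mapsto\phi_j(\lambda)$ and $\lambda\mapsto g_j(\lambda)$, verify $g_j\in L^2(\reals)$, and apply Tonelli's theorem to the nonnegative double sum/integral. All of these are standard once the multiplicity theory is set up (if desired one may pad the fibres to a single fixed separable $\mathcal N$), and the factor $(\lambda-z_0)$ produced by the reduction is harmless precisely because the support of $\widehat u$ is bounded, so that $\abs{\lambda-z_0}$ stays bounded there.
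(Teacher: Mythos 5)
Your proof is correct, but it takes a genuinely different route from the one in the paper. You eliminate the unbounded operator $T$ at the outset via the identity $T\exp(-i\beta t)u=K\exp(-i\beta t)(\beta-z_0I)u$ with $K:=T(\beta-z_0I)^{-1}$, pass to a direct-integral realisation of $\cH_{\rm ac}(\beta)$, take for $\mathcal{D}$ the vectors with bounded, compactly supported representatives, and conclude by a fibre-wise Plancherel computation with the explicit bound $2\pi\norm{\widehat v}_\infty^2\norm{K}_{\rm HS}^2$. The paper instead keeps $T$ in place and studies the almost-everywhere boundary values of $x\mapsto T\left[(\beta-(x+\I\epsilon)I)^{-1}-(\beta-(x-\I\epsilon)I)^{-1}\right]f$: the Hilbert--Schmidt hypothesis enters only through the resolvent identity and the a.e.\ existence of these limits (quoting \cite{MR1036844}), the dense set is $\bigcup_n E_\beta(\mathcal{X}(n))\cH_{\rm ac}(\beta)$ with $\mathcal{X}(n)$ the level sets of the resulting boundary function $F$, and Parseval then yields the cruder bound $4\pi n^3$. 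Your version is more elementary and self-contained --- it needs only the spectral theorem and Plancherel, no boundary-value theory --- and it produces a clean quantitative estimate in terms of $\norm{K}_{\rm HS}$; the paper's version is the one that survives weakenings of the Hilbert--Schmidt assumption, since it only ever uses the a.e.\ finiteness of $F(x)$, which is why conditions ``along the lines of \cite{MR1252228}'', as in the remark following Theorem \ref{thm:on-smooth-vectors-a.c.equality}, can replace it. The only points in your write-up that warrant an explicit sentence are the ones you already flag: measurability of $\lambda\mapsto\phi_j(\lambda)$ (automatic, as $\phi_j$ is the direct-integral representative of the fixed vector $P_{\rm ac}K^*f_j$) and the observation that each $g_j$ lies in $L^1\cap L^2$ because $\widehat v$ is bounded with bounded support, so that the Fourier integral is defined pointwise and Plancherel applies; with those in place the Tonelli interchange is immediate.
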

\begin{proof}
 Let $x\in\reals$ and $\epsilon>0$. By Hilbert's first identity
 \begin{equation*}
    T(\beta-(x+\I\epsilon)
    I)^{-1}=((x+\I\epsilon)-z_0)T(\beta-z_0
    I)^{-1}(\beta-(x+\I\epsilon)
    I)^{-1} + T(\beta-z_0 I)^{-1}
  \end{equation*}
  Consider the first term on the right-hand side of this last
  equation. By \cite{MR1036844}, for every $f$ in
  $\cH_1$ the limit
\begin{equation*}
  \lim_{\epsilon\to 0}T(\beta-z_0
    I)^{-1}(\beta-(x+\I\epsilon)
    I)^{-1}f
\end{equation*}
exists for almost all $x\in\reals$ (the convergence set
actually depends on $f$). It follows that the limit
\begin{equation*}
  \lim_{\epsilon\to 0}T\left((\beta-(x+\I\epsilon)
    I)^{-1}-(\beta-(x-\I\epsilon)
    I)^{-1} \right)f=:F(x)
\end{equation*}
exists for all $f\in\cH_1$ and almost all $x\in\reals$.

Now, define the set
\begin{equation*}
  \mathcal{X}(n):=\bigl\{x\in\reals:\abs{x}<n, \norm{F(x)}<n\bigr\}
\end{equation*}
If $E_\beta$ denotes the spectral measure of the operator $\beta$,
then the set
\begin{equation*}
  \mathcal{D}:=\bigcup_{n\in\nats}E_\beta\bigl(\mathcal{X}(n)\bigr)\cH_{\rm ac}(\beta)
\end{equation*}
is dense in $\cH_{\rm ac}(\beta)$. Consider an orthonormal basis
$\{\phi_k\}$ in $\cH_2$ and an arbitrary element $f\in\mathcal{D}$,
then, for all $k$,
\begin{align*}
  \inner{T\exp(-i\beta t)f}{\phi_k}&=\int_{\mathcal{X}(n)}e^{-\I x
    t}\frac{d}{dx}\inner{E_\beta(x)f}{T^*\phi_k}dx\\ &=
  \int_{\mathcal{X}(n)}e^{-\I x t}\inner{F(x)}{T^*\phi_k}dx\,,
\end{align*}
where in the last equality we have used the fact that by the spectral theorem
\begin{equation*}
  \lim_{\epsilon\to 0}\inner{\left((\beta-(x+\I\epsilon)
    I)^{-1}-(\beta-(x-\I\epsilon)
    I)^{-1} \right)f}{\phi}=\frac{d}{dx}\inner{E_\beta(x)f}{\phi}
\end{equation*}
for all $f\in\cH_{\rm ac}(\beta)$ and for all $\phi\in\cH_1$.

By the Parseval identity one has
\begin{equation*}
  \int_\reals\abs{\inner{T\exp(-i\beta t)f}{\phi_k}}^2dt=
2\pi\int_{\mathcal{X}(n)}\abs{\inner{F(x)}{\phi_k}}^2dx
\end{equation*}
for all $k$, which immediately implies that
\begin{equation*}
  \int_\reals\norm{T\exp(-i\beta t)u}^2dt=
2\pi\int_{\mathcal{X}(n)}\norm{F(x)}^2dx\le
4\pi n^3<+\infty\,.
\end{equation*}
\end{proof}
Combining the above statements yields the following result.
\begin{theorem}
  \label{thm:on-smooth-vectors-a.c.equality}
  Assume that $\varkappa=\varkappa^*,$  ${\rm ker}(\alpha)=\{0\}$
  and let  $\alpha\Gamma_0(A_{\varkappa}-z I)^{-1}$ be a Hilbert-Schmidt
  operator for at least one point $z\in\rho(A_\varkappa)$. If
  $A$ is completely non-selfadjoint, then our definition of the absolutely continuous subspace is
  equivalent to the classical definition of the absolutely continuous subspace of a self-adjoint operator, {\it i.e.}
  \begin{equation*}
    N_{\rm e}^\varkappa = \cH_{\rm ac}(A_{\varkappa})\,.
  \end{equation*}
\end{theorem}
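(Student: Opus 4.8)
The plan is to prove the two inclusions separately, the first of which is already at hand. Since $\varkappa=\varkappa^*$, Proposition~\ref{prop:properties-almost-extensions}(5) gives $A_\varkappa=A_\varkappa^*$, so $\cH_{\rm ac}(A_\varkappa)$ is the classical absolutely continuous subspace and, in particular, is closed. Corollary~\ref{cor:on-smooth-vectors-a.c.contention} already yields $\widetilde{N}_{\rm e}^\varkappa\subset\cH_{\rm ac}(A_\varkappa)$; passing to closures gives at once
$$
N_{\rm e}^\varkappa=\clos(\widetilde{N}_{\rm e}^\varkappa)\subset\cH_{\rm ac}(A_\varkappa).
$$
It therefore remains to establish the reverse inclusion $\cH_{\rm ac}(A_\varkappa)\subset N_{\rm e}^\varkappa$, and this is where the Hilbert--Schmidt hypothesis enters.

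For the reverse inclusion I would apply the Modified Rosenblum Lemma (Lemma~\ref{lemma-rosenblume}) with the self-adjoint operator $\beta:=A_\varkappa$ and the operator $T:=\alpha\Gamma_0$, defined on $\dom(A_\varkappa)\subset\dom(A^*)$. The standing assumption that $\alpha\Gamma_0(A_\varkappa-z_0I)^{-1}=T(\beta-z_0I)^{-1}$ be Hilbert--Schmidt for some $z_0\in\rho(A_\varkappa)$ is precisely the hypothesis of that lemma. It then produces a set $\mathcal{D}$, dense in $\cH_{\rm ac}(A_\varkappa)$, such that
$$
\int_\reals\norm{\alpha\Gamma_0\exp(-\I A_\varkappa t)u}_E^2\,dt<\infty\qquad\text{for all }u\in\mathcal{D}.
$$
Because $\mathcal{D}$ is assembled from bounded spectral subspaces $E_{A_\varkappa}(\mathcal{X}(n))\cH_{\rm ac}(A_\varkappa)$, the orbit $\exp(-\I A_\varkappa t)u$ stays in $\dom(A_\varkappa)\subset\dom(A^*)$, so $\alpha\Gamma_0\exp(-\I A_\varkappa t)u$ is well defined for every $t$.

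The decisive step is to translate this time-domain square-integrability into the Hardy-class membership characterising $\widetilde{N}_{\rm e}^\varkappa$. Using the spectral theorem for $A_\varkappa=A_\varkappa^*$, I would write, for $u\in\mathcal{D}$,
$$
\alpha\Gamma_0(A_\varkappa-zI)^{-1}u=\I\int_0^\infty e^{\I zt}\,\alpha\Gamma_0\exp(-\I A_\varkappa t)u\,dt,\qquad z\in\complex_+,
$$
together with the analogous representation over $\reals_-$ for $z\in\complex_-$. Setting $F(t):=\alpha\Gamma_0\exp(-\I A_\varkappa t)u$, this exhibits $\alpha\Gamma_0(A_\varkappa-\cdot\,I)^{-1}u$ as the Fourier--Laplace transform of $F\eval{\reals_+}$ on the upper half-plane, so by the Paley--Wiener theorem \cite[Sec.\,4.8, Thm.\,E]{MR822228} the condition $F\eval{\reals_+}\in L^2(\reals_+,E)$ is equivalent to $\alpha\Gamma_0(A_\varkappa-zI)^{-1}u\in H^2_+(E)$; symmetrically, $F\eval{\reals_-}\in L^2(\reals_-,E)$ is equivalent to $\alpha\Gamma_0(A_\varkappa-zI)^{-1}u\in H^2_-(E)$. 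Since Lemma~\ref{lemma-rosenblume} gives $F\in L^2(\reals,E)$, both memberships hold, whence by Corollary~\ref{new_remark} one has $u\in\widetilde{N}_+^\varkappa\cap\widetilde{N}_-^\varkappa=\widetilde{N}_{\rm e}^\varkappa$. Thus $\mathcal{D}\subset\widetilde{N}_{\rm e}^\varkappa$, and taking closures gives $\cH_{\rm ac}(A_\varkappa)=\clos(\mathcal{D})\subset N_{\rm e}^\varkappa$; combined with the first inclusion, this yields the claimed equality.

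I expect the main obstacle to be this third step: rigorously identifying the resolvent $\alpha\Gamma_0(A_\varkappa-zI)^{-1}u$ with the half-line Laplace transform of the group orbit $t\mapsto\alpha\Gamma_0\exp(-\I A_\varkappa t)u$, and then invoking Paley--Wiener in the $E$-valued setting to pass between $L^2$-integrability on each half-line and membership in the respective Hardy space $H^2_\pm(E)$. The domain bookkeeping --- ensuring that $\exp(-\I A_\varkappa t)u\in\dom(A^*)$ so that every occurrence of $\alpha\Gamma_0$ is legitimate --- is a secondary but necessary point, and it is handled automatically by the fact that vectors in $\mathcal{D}$ have bounded spectral support relative to $A_\varkappa$.
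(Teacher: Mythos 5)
Your proposal is correct and follows essentially the same route as the paper: the inclusion $N_{\rm e}^\varkappa\subset\cH_{\rm ac}(A_\varkappa)$ comes from Corollary~\ref{cor:on-smooth-vectors-a.c.contention}, while the reverse inclusion is obtained by applying Lemma~\ref{lemma-rosenblume} with $T=\alpha\Gamma_0$ and converting the resulting time-domain square-integrability into $H^2_\pm(E)$-membership of $\alpha\Gamma_0(A_\varkappa-zI)^{-1}u$ via the Fourier/Paley--Wiener correspondence, then invoking Corollary~\ref{new_remark}. The paper phrases the conversion as a Parseval identity for the Fourier transforms of $\mathbbm{1}_\pm(t)\alpha\Gamma_0e^{\I A_\varkappa t}e^{\mp\epsilon t}u$ rather than as a half-line Laplace transform of the group orbit, but this is the same computation.
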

\begin{proof}
By applying the Fourier transform to the functions
$\mathbbm{1}_\pm(t)\alpha\Gamma_0e^{iA_{\varkappa}t}e^{\mp \epsilon
  t}u$, $t\in{\mathbb
  R},$
where $\mathbbm{1}_\pm$ is the
characteristic function of $\reals_\pm$ and $\epsilon>0$ is
arbitrarily small, one obtains
\begin{equation*}
  \norm{\alpha\Gamma_0(A_{\varkappa}-z I)^{-1}u}_{H_-^2}^2
+\norm{\alpha\Gamma_0(A_{\varkappa}-z I)^{-1}u}_{H_+^2}^2
=2\pi\int_\reals\norm{\alpha\Gamma_0\exp(iA_{\varkappa}t)u}^2dt\,
\end{equation*}
which by Lemma~\ref{lemma-rosenblume} is finite for all $u$ in a
dense subset of $\cH_{\rm ac}(A_{\varkappa})$. Hence, in view of Corollary \ref{new_remark}
and performing closure,
one has $\cH_{\rm ac}(A_{\varkappa})\subset N^\varkappa_{\rm e}.$ Taking into
account Corollary~\ref{cor:on-smooth-vectors-a.c.contention} completes the proof.
\end{proof}

\begin{remark}
Alternative conditions, which are less restrictive in general, that guarantee the validity of the assertion of Theorem
\ref{thm:on-smooth-vectors-a.c.equality} can be obtained along the lines of \cite{MR1252228}.
\end{remark}

\section{Wave and scattering operators}
\label{sec:wave-operators}
The results of the preceding sections allow us to calculate the wave
operators for any pair $A_{\varkappa_1},A_{\varkappa_2}$, where
$A_{\varkappa_1}$ and $A_{\varkappa_2}$ are operators in the class
introduced in Section~\ref{sec:boundary-triples}, under the additional
assumption that the operator $\alpha$ (see (\ref{eq:b-kappa-def})) has a trivial kernel.
For simplicity,
in what follows we set $\varkappa_2=0$ and write
$\varkappa$ instead of $\varkappa_1$. Note that $A_0$ is a self-adjoint
operator, which is convenient for presentation purposes.

We begin by establishing the model representation for the function
$\exp(iA_\varkappa t)$, $t\in\reals$, of the
operator $A_\varkappa,$ evaluated on the set of smooth vectors $\widetilde{N}_{\rm e}^\varkappa.$
\begin{proposition}(\cite[Prop.\,2]{MR573902})
  \label{prop:exp-funtion}
  For all $t\in\reals$ and all $\binom{\widetilde{g}}{g}$ such that
  $\Phi^*P_K\binom{\widetilde{g}}{g}\in\widetilde{N}_{\rm e}^\varkappa$ one has
  \begin{equation*}
    \Phi\exp(iA_\varkappa t)\Phi^*
  P_K\binom{\widetilde{g}}{g}=
P_K\exp(ikt)\binom{\widetilde{g}}{g}.
  \end{equation*}
\end{proposition}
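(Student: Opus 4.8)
The plan is to bridge from the resolvent of $A_\varkappa$ to its one-parameter group by a Laplace transform, using that on smooth vectors the resolvent reduces, in the model, to multiplication by $(\cdot-z)^{-1}$. Setting $u:=\Phi^*P_K\binom{\widetilde g}{g}$, the structural input is that, by Theorem~\ref{lem:similar-to-naboko-thm-4} together with the representation (\ref{New_Representation}), a smooth vector satisfies
\begin{equation*}
\Phi(A_\varkappa-zI)^{-1}u=P_K\frac{1}{\cdot-z}\binom{\widetilde g}{g}\qquad\text{for all }z\in\complex_+\cup\complex_-\,,
\end{equation*}
so that the correction terms $\chi_\varkappa^{+}\Theta^{-1}_\varkappa$ and $\chi_\varkappa^{-}\widehat{\Theta}^{-1}_\varkappa$ appearing in Theorem~\ref{thm:rhyzhov2.5} drop out. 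Everything else is a transform computation performed identically in both half-planes.

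First I would note that, since $\abs{\exp(\I kt)}=1$ for $k\in\reals$, multiplication by $\exp(\I kt)$ leaves the matrix weight and hence the integrand of (\ref{eq:inner-in-functional}) invariant, so it is unitary on $\mathfrak{H}$; consequently $\norm{P_K\exp(\I k t)\binom{\widetilde g}{g}}_{\mathfrak H}\le\norm{\binom{\widetilde g}{g}}_{\mathfrak H}$ uniformly in $t$. Fixing $z\in\complex_-$ and recalling $-\I\int_0^\infty\exp(\I(k-z)t)\,dt=(k-z)^{-1}$, this uniform bound renders the $\mathfrak H$-valued Bochner integral $-\I\int_0^\infty P_K\exp(\I k t)\binom{\widetilde g}{g}\,e^{-\I zt}\,dt$ absolutely convergent; pulling the bounded projection $P_K$ outside and applying Fubini to the scalar multiplier identifies it with $P_K(\cdot-z)^{-1}\binom{\widetilde g}{g}$. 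Thus the right-hand side of the asserted identity, as an $\mathfrak H$-valued function of $t\ge 0$, has Laplace transform $P_K(\cdot-z)^{-1}\binom{\widetilde g}{g}$.

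On the other hand, the standard Laplace representation $(A_\varkappa-zI)^{-1}=-\I\int_0^\infty e^{\I A_\varkappa t}e^{-\I z t}\,dt$, valid for $z\in\complex_-$, followed by carrying the bounded map $\Phi$ under the integral, shows that $t\mapsto\Phi e^{\I A_\varkappa t}u$ has the identical Laplace transform $\Phi(A_\varkappa-zI)^{-1}u$, which by the displayed identity equals $P_K(\cdot-z)^{-1}\binom{\widetilde g}{g}$. Injectivity of the Laplace transform for continuous $\mathfrak H$-valued functions then forces $\Phi e^{\I A_\varkappa t}u=P_K\exp(\I kt)\binom{\widetilde g}{g}$ for all $t>0$; the range $t<0$ follows symmetrically on replacing $\complex_-$ by $\complex_+$ and $\int_0^\infty$ by $\int_{-\infty}^0$, which is exactly where the validity of the displayed resolvent identity in \emph{both} half-planes (i.e.\ $u\in\widetilde N_+^\varkappa\cap\widetilde N_-^\varkappa$) is needed, while $t=0$ is trivial.

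The main obstacle is the Laplace representation of the resolvent itself, since it presupposes that $e^{\I A_\varkappa t}$ is a suitably bounded one-parameter (semi)group. This is immediate when $\varkappa=\varkappa^*$, where $A_\varkappa$ is self-adjoint and $e^{\I A_\varkappa t}$ unitary, but for genuinely non-self-adjoint $\varkappa$ the group need not be bounded on all of $\cH$. The resolution is that the identity is asserted only on smooth vectors, which form a dense subset of $N_{\rm e}^\varkappa=\cH_{\rm ac}(A_\varkappa)$; restricting the evolution to this subspace, where it is controlled, legitimises the integral representation, and the $H^2_\pm(E)$-membership of $\alpha\Gamma_0(A_\varkappa-\cdot I)^{-1}u$ furnished by Corollary~\ref{new_remark} supplies precisely the integrability required to justify the interchanges above.
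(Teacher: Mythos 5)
Your strategy is genuinely different from the paper's: you pass from the resolvent to the group via a Laplace transform, whereas the paper uses the exponential formula $\exp(\I A_\varkappa t)=\slim_{n\to\infty}(I-\I A_\varkappa t/n)^{-n}$, iterates the smooth-vector resolvent identity of Theorem~\ref{lem:similar-to-naboko-thm-4} to get $(I-\I A_\varkappa t/n)^{-n}\Phi^*P_K\binom{\widetilde g}{g}=\Phi^*P_K(1-\I kt/n)^{-n}\binom{\widetilde g}{g}$, and concludes by dominated convergence using $\abs{(1-\I kt/n)^{-n}}\le 1$. The distinction is not cosmetic: the paper's route needs only the resolvent at the single points $z=-\I n/t$, which land in $\complex_\mp$ according to the sign of $t$ and are covered by the two-sided smooth-vector characterisation, and it never requires $\exp(\I A_\varkappa t)$ to exist as a bounded operator anywhere.

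This is exactly where your argument has a genuine gap. The representation $(A_\varkappa-zI)^{-1}=-\I\int_0^\infty e^{\I A_\varkappa t}e^{-\I zt}\,dt$ presupposes that $\I A_\varkappa$ generates a one-parameter group whose norm grows subexponentially (indeed, for the formula to hold for \emph{all} $z\in\complex_-$ you need at most polynomial growth of $\norm{e^{\I A_\varkappa t}u}$), and for genuinely non-self-adjoint $\varkappa$ the operator $A_\varkappa$ is neither dissipative nor anti-dissipative, so no such group bound is available; the paper explicitly warns that $\exp(\I A_\varkappa t)$ is ``in general an unbounded operator''. Your proposed repair --- that the evolution restricted to smooth vectors is ``controlled'' --- is circular: the uniform bound $\norm{\Phi e^{\I A_\varkappa t}u}_{\mathfrak H}\le\norm{\binom{\widetilde g}{g}}_{\mathfrak H}$ on $\widetilde N^\varkappa_{\rm e}$ is precisely the content of the identity you are trying to prove, and the $H^2_\pm(E)$-inclusions of Corollary~\ref{new_remark} concern $\alpha\Gamma_0(A_\varkappa-zI)^{-1}u$ and give no a priori estimate on $\norm{e^{\I A_\varkappa t}u}_{\cH}$. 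Nor is the strong continuity of $t\mapsto\Phi e^{\I A_\varkappa t}u$, which you invoke for injectivity of the Laplace transform, established beforehand. In the self-adjoint case $\varkappa=\varkappa^*$ your argument is complete, but to make it work in general you would first have to define and bound the group on smooth vectors --- at which point you are re-deriving the paper's proof. Replacing the Laplace integral by the resolvent-power approximation closes the gap with essentially no extra work.
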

\begin{proof}
  We use the definition
  \begin{equation*}
    \exp(iA_\varkappa t):=
 \slim_{n\to+\infty}\left(I-\frac{iA_\varkappa t}{n}\right)^{-n},\ \ \ t\in{\mathbb R},
  \end{equation*}
  giving in general an unbounded operator (see \cite{MR0407617}). Due
  to Theorem~\ref{lem:similar-to-naboko-thm-4}, if
  $\binom{\widetilde{g}}{g}\in\mathfrak{N}^\varkappa_+\cap\mathfrak{N}_-^\varkappa$, {\it i.e.}
  $\Phi^*P_K\binom{\widetilde{g}}{g}\in\widetilde{N}_{\rm e}^\varkappa$,
  then
\begin{equation*}
  \left(I-\frac{iA_\varkappa
      t}{n}\right)^{-n}\Phi^*P_K\binom{\widetilde{g}}{g}=
 \Phi^*P_K\left(1-\frac{ikt}{n}\right)^{-n} \binom{\widetilde{g}}{g},\ \ \ \ t\in{\mathbb R}.
\end{equation*}
Thus, to complete the proof it remains to show that
\begin{equation*}
  \norm{\left(\exp(ikt)-\left(1-\frac{ikt}{n}\right)^{-n}\right)
\binom{\widetilde{g}}{g}}_{\mathfrak H}\convergesto{n}{\infty}0,\ \ \ t\in{\mathbb R},
\end{equation*}
which follows directly from Lebesgue's dominated convergence theorem.
\end{proof}
\begin{proposition}(\cite[Section 4]{MR573902})
  \label{prop:pre-wave-op}
  If $\Phi^*P_K\binom{\widetilde{g}}{g}\in\widetilde{N}_{\rm e}^\varkappa$
  and $\Phi^*P_K\binom{\widehat{g}}{g}\in\widetilde{N}_{\rm e}^0$
  (with the same element\footnote{Despite the fact that
    $\binom{\widetilde{g}}{g}\in\mathfrak{H}$ is nothing but a symbol,
    still $\widetilde{g}$ and $g$ can be identified with vectors in
    certain $L^2(E)$ spaces with  operators ``weights'', see
    details below in Section~\ref{sec:spectral-repr-ac}. Further, we recall that even then
    for $\binom{\widetilde{g}}{g}\in\mathfrak{H}$, the components $\widetilde{g}$ and $g$ are not, in general, \emph{independent} of each other.} $g$), then
\begin{equation*}
  \norm{\exp(-iA_\varkappa t)\Phi^*P_K\binom{\widetilde{g}}{g}-\exp(-iA_0
  t)\Phi^* P_K\binom{\widehat{g}}{g}}_{\mathfrak H}\convergesto{t}{-\infty}0.
\end{equation*}
\end{proposition}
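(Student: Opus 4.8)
The plan is to use the model representations from Proposition \ref{prop:exp-funtion} to rewrite both evolution groups acting on smooth vectors, and then to show that their difference tends to zero as $t\to-\infty$ by isolating the component on which the two operators genuinely differ. First I would apply Proposition \ref{prop:exp-funtion} (once with parameter $\varkappa$ and once with parameter $0$) to obtain, for all $t\in\reals$,
\begin{equation*}
  \Phi\exp(-iA_\varkappa t)\Phi^*P_K\binom{\widetilde{g}}{g}=P_K\exp(-ikt)\binom{\widetilde{g}}{g},\qquad
  \Phi\exp(-iA_0 t)\Phi^*P_K\binom{\widehat{g}}{g}=P_K\exp(-ikt)\binom{\widehat{g}}{g}.
\end{equation*}
Since $\Phi$ restricted to $\cH$ is unitary onto $K$ and $P_K$ is an orthogonal projection, the quantity to be estimated equals $\norm{P_K\exp(-ikt)\bigl(\binom{\widetilde{g}}{g}-\binom{\widehat{g}}{g}\bigr)}_{\mathfrak H}$, and because the two pairs share the same second component $g$, the difference is supported entirely in the first component: $\binom{\widetilde{g}}{g}-\binom{\widehat{g}}{g}=\binom{\widetilde{g}-\widehat{g}}{0}$. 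Thus the whole problem reduces to showing
\begin{equation*}
  \norm{P_K\exp(-ikt)\binom{\widetilde{g}-\widehat{g}}{0}}_{\mathfrak H}\convergesto{t}{-\infty}0.
\end{equation*}

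Next I would bring in the defining smoothness conditions. The hypothesis $\Phi^*P_K\binom{\widetilde{g}}{g}\in\widetilde{N}_{\rm e}^\varkappa$ means $\binom{\widetilde{g}}{g}\in\mathfrak N^\varkappa_+\cap\mathfrak N^\varkappa_-$, and similarly $\binom{\widehat{g}}{g}\in\mathfrak N^0_+\cap\mathfrak N^0_-$; by the description (\ref{lyubimaya_formula}) (or directly from (\ref{definition_curlyN})) these translate into the algebraic identities $\chi_\varkappa^+(\widetilde{g}+S^*g)+\chi_\varkappa^-(S\widetilde{g}+g)=0$ and, for $\varkappa=0$ so that $\chi_0^\pm=I/2$, the identity $(\widehat{g}+S^*g)+(S\widehat{g}+g)=0$. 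The aim is to use these to express $\widetilde{g}-\widehat{g}$, or rather its relevant projection under $P_K$, as a combination of functions in the Hardy classes $H^2_\pm(E)$ whose $L^2$-norms over the appropriate half-line decay; the factor $\exp(-ikt)$ multiplies a half-plane-analytic function, and the action of $P_K$ together with the Riesz projections $P_\pm$ in (\ref{eq:pk-action}) will, for $t\to-\infty$, push the relevant piece into a subspace on which the norm vanishes. Concretely I would compute $P_K\binom{\widetilde{g}-\widehat{g}}{0}$ using (\ref{eq:pk-action}), obtaining $\binom{(\widetilde{g}-\widehat{g})-P_+(\widetilde{g}-\widehat{g})}{-P_-(S(\widetilde{g}-\widehat{g}))}$, and then track how the multiplier $e^{-ikt}$ interacts with these projections.

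The key analytic input will be a Hardy-space decay mechanism: for a fixed $H^2_+(E)$ (resp. $H^2_-(E)$) function $h$, the quantity $\norm{P_-(e^{-ikt}h)}_{L^2}$ (resp. $\norm{P_+(e^{-ikt}h)}$) tends to $0$ as $t\to-\infty$ (resp. $t\to+\infty$), which is the standard statement that the Lax–Phillips-type evolution sweeps out the incoming/outgoing subspaces; this is exactly the device Naboko uses in \cite[Section 4]{MR573902}. I would verify that the two smoothness conditions, subtracted, force $\widetilde{g}-\widehat{g}$ to decompose so that after applying $P_K$ and $e^{-ikt}$ each surviving summand is of this decaying type in the regime $t\to-\infty$. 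The main obstacle I anticipate is precisely the bookkeeping here: one must handle the fact that $\binom{\widetilde{g}}{g}$ is an abstract element of $\mathfrak H$ and the components need not be genuinely independent $L^2$ functions, so the manipulations have to be justified through the well-defined combinations $\widetilde{g}+S^*g$ and $S\widetilde{g}+g$ (which, as noted after (\ref{characterise_K}), are honest $L^2(\reals,E)$ functions) rather than through $\widetilde{g},\widehat{g}$ individually. Managing these combinations, rather than the raw components, while correctly applying the dominated-convergence/Hardy-decay argument uniformly, is the delicate part; the rest is the same structural argument as in Naboko's additive-perturbation setting, adapted via the identifications in (\ref{lyubimaya_formula}).
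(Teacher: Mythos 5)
Your first half coincides with the paper's proof: apply Proposition \ref{prop:exp-funtion} to both groups, use the unitarity of $\Phi\eval\cH$ and the linearity of $P_K$ to reduce everything to showing $\norm{P_K\binom{e^{-\I kt}(\widetilde{g}-\widehat{g})}{0}}_{\mathfrak H}\to 0$ as $t\to-\infty$, noting that $\widetilde{g}-\widehat{g}\in L^2(\reals,E)$ because $\binom{\widetilde{g}-\widehat{g}}{0}\in\mathfrak{H}$. The gap is in what follows. The decisive step of the paper is a single inequality,
\begin{equation*}
\norm{P_K\binom{\check{g}}{0}}_{\mathfrak H}^2=\int_\reals\Bigl(\bigl\Vert P_-\check{g}(s)\bigr\Vert_E^2-\bigl\Vert P_-S(s)\check{g}(s)\bigr\Vert_E^2\Bigr)ds\le\norm{P_-\check{g}}_{L^2(\reals,E)}^2,
\end{equation*}
applied to $\check{g}=e^{-\I kt}(\widetilde{g}-\widehat{g})$, after which the standard Fourier computation $\norm{P_-e^{-\I t\cdot}\check{g}}^2=\int_{-\infty}^t\norm{\mathscr{F}\check{g}(\tau)}^2d\tau\to0$ finishes the argument. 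You never produce this estimate (or any substitute for it): computing $P_K\binom{\check{g}}{0}$ componentwise via (\ref{eq:pk-action}) gives the vector $\binom{P_-\check{g}}{-P_-(S\check{g})}$ but does not by itself control its $\mathfrak{H}$-norm, since the weight couples the two components.

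Moreover, the route you propose for closing this gap --- subtracting the two smoothness identities from (\ref{lyubimaya_formula}) to force a Hardy-class decomposition of $\widetilde{g}-\widehat{g}$ --- would not work as stated: the two identities carry different coefficients ($\chi_\varkappa^\pm$ versus $I/2$), so their difference does not isolate $\widetilde{g}-\widehat{g}$, and in any case no such decomposition is needed. The smoothness hypotheses are used \emph{only} to invoke Proposition \ref{prop:exp-funtion}; the decay itself requires nothing beyond $\widetilde{g}-\widehat{g}\in L^2(\reals,E)$, because $\norm{P_-e^{-\I t\cdot}h}_{L^2}\to0$ as $t\to-\infty$ for \emph{every} $h\in L^2(\reals,E)$, not only for Hardy-class $h$. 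If you prefer not to verify the exact norm identity above, an equally quick fix is to observe that $\binom{P_+(e^{-\I kt}\check{g})}{0}\in D_+$ is annihilated by $P_K$, so $P_K\binom{e^{-\I kt}\check{g}}{0}=P_K\binom{P_-(e^{-\I kt}\check{g})}{0}$, whose $\mathfrak H$-norm is at most $\norm{\binom{P_-(e^{-\I kt}\check{g})}{0}}_{\mathfrak H}=\norm{P_-(e^{-\I kt}\check{g})}_{L^2(\reals,E)}$; either way, the estimate must be stated and proved, and that is precisely what your proposal leaves out.
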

\begin{proof}
Clearly, $\widetilde{g}-\widehat{g}\in L^2(E)$
since $\binom{\widetilde{g}-\widehat{g}}{0}\in\mathfrak{H}.$ Therefore, for all $t\in{\mathbb R},$ we obtain
\begin{align*}
  \norm{\exp(-iA_\varkappa t)\Phi^*P_K\binom{\widetilde{g}}{g}-\exp(-iA_0
  t)\Phi^*P_K\binom{\widehat{g}}{g}}_{\mathfrak H}&=\norm{P_K
  e^{-it\cdot}\binom{\widetilde{g}}{g}-P_K
e^{-it\cdot}\binom{\widehat{g}}{g}}_{\mathfrak H}\\[0.5em]
&=\norm{P_K\binom{e^{-it\cdot}(\widetilde{g}-\widehat{g})}{0}}_{\mathfrak H}\\[0.5em]
&\le \norm{P_-e^{-it\cdot}(\widetilde{g}-\widehat{g})}_{L^2(E)}\,.
\end{align*}
where in the inequality we use the fact that
\begin{equation*}
  \norm{P_K\binom{\check{g}}{0}}_{\mathfrak H}^2
  =\int_\reals
\left(\bigl\Vert P_-\check{g}(s)\bigr\Vert_{E}^2-\bigl\Vert P_-S(s)\check{g}(s)\bigr\Vert_{E}^2\right)ds\quad\quad\forall\binom{\check{g}}{0}\in\mathfrak{H}.
\end{equation*}

Finally, since $\exp(-it\cdot)\in H_+^\infty$ for
$t\ge 0,$ the convergence (see {\it e.g.} \cite{MR1669574})
\begin{equation*}
  \norm{P_-e^{-it\cdot}(\widetilde{g}-\widehat{g})}_{L^2(E)}^2=
\int_{-\infty}^t\norm{ \mathscr{F}(\widetilde{g}-\widehat{g})(\tau)}_E^2d\tau
\convergesto{t}{-\infty}0
\end{equation*}
holds, where $ \mathscr{F}(\widetilde{g}-\widehat{g})$ stands for the Fourier
transform of the function $\widetilde{g}-\widehat{g}$.
\end{proof}
It follows from Proposition \ref{prop:pre-wave-op}  that whenever $\Phi^*P_K\binom{\widetilde{g}}{g}\in\widetilde{N}_{\rm e}^\varkappa$
and $\Phi^*P_K\binom{\widehat{g}}{g}\in\widetilde{N}_{\rm e}^0$ (with the same second component $g$), formally one has
\begin{align*}
\lim_{t\to-\infty}e^{iA_0t}e^{-iA_\varkappa t}\Phi^*P_K\binom{\widetilde{g}}{g}&=\Phi^*P_K\binom{\widehat{g}}{g}\\
&=\Phi^*P_K\binom{-(I+S)^{-1}(I+S^*)g}{g}\,,
\end{align*}
where in the last equality we use the inclusion $\Phi^*P_K\binom{\widehat{g}}{g}\in\widetilde{N}_{\rm e}^0,$ which by
 (\ref{lyubimaya_formula}) yields $\widehat{g}+S^*g+S\widehat{g}+g=0.$
In view of the classical definition of the wave operator of a pair of self-adjoint operators, see {\it e.g.} \cite{MR0407617},
\begin{equation*}
  W_\pm(A_0,A_\varkappa):=\slim_{t\to\pm\infty}e^{iA_0t}e^{-iA_\varkappa t}P_{\rm ac}^\varkappa,
\end{equation*}
where $P_{\rm ac}^\varkappa$ is the projection onto the absolutely continuous subspace of $A^\varkappa,$ we obtain that, at least formally, for $\Phi^*P_K\binom{\widetilde{g}}{g}\in\widetilde{N}_{\rm e}^\varkappa$ one has
\begin{equation}
  \label{eq:formula-0-kappaw-}
  W_-(A_0,A_\varkappa)\Phi^*P_K\binom{\widetilde{g}}{g}=\Phi^*P_K\binom{-(I+S)^{-1}(I+S^*)g}{g}\,.
\end{equation}

By an argument similar to that of Proposition \ref{prop:pre-wave-op} ({\it i.e.} considering the case $t\to+\infty$), one also obtains
\begin{equation}
\label{eq:formula-0-kappaw+}
W_+(A_0,A_\varkappa)\Phi^*P_K\binom{\widetilde{g}}{g}=
\lim_{t\to+\infty}e^{iA_0t}e^{-iA_\varkappa t}\Phi^*P_K\binom{\widetilde{g}}{g}
=\Phi^*P_K\binom{\widetilde{g}}{-(I+S^*)^{-1}(I+S)\widetilde{g}}
\end{equation}
again for $\Phi^*P_K\binom{\widetilde{g}}{g}\in\widetilde{N}_{\rm e}^\varkappa$.

Further, the definition of the wave operators $W_\pm(A_\varkappa,A_0)$
\begin{equation*}
  \norm{e^{-iA_\varkappa
      t}W_\pm(A_\varkappa,A_0)\Phi^*P_K\binom{\widetilde{g}}{g}-e^{-iA_0t}\Phi^*P_K\binom{\widetilde{g}}{g}
}_{\mathfrak H}\convergesto{t}{\pm\infty}0
\end{equation*}
yields, for all $\Phi^*P_K\binom{\widetilde{g}}{g}\in\widetilde{N}_{\rm e}^0,$
\begin{equation}
  \label{eq:formula-kappa-0w-}
  W_-(A_\varkappa,A_0)\Phi^*P_K\binom{\widetilde{g}}{g}=\Phi^*P_K\binom{-(I+\chi_\varkappa^-(S-I))^{-1}(I+\chi_\varkappa^+(S^*-I))g}{g}
\end{equation}
and
\begin{equation}
 \label{eq:formula-kappa-0w+}
  W_+(A_\varkappa,A_0)\Phi^*P_K\binom{\widetilde{g}}{g}=\Phi^*P_K
\binom{\widetilde{g}}{-(I+\chi_\varkappa^+(S^*-I))^{-1}
(I+\chi_\varkappa^-(S-I))\widetilde{g}},
\end{equation}
where we have used the fact that  $\Phi^*P_K\binom{\widetilde{g}}{g}\in\widetilde{N}_{\rm e}^\varkappa$ and the corresponding criterion provided by
(\ref{lyubimaya_formula}).

In order to rigorously justify the above formal argument, {\it i.e.} in order to prove the existence and completeness of the wave operators, one needs to first show that
the right-hand sides of the formulae \eqref{eq:formula-0-kappaw-}--\eqref{eq:formula-kappa-0w+} make sense on dense subsets of the corresponding absolutely continuous subspaces.
Noting that  \eqref{eq:formula-0-kappaw-}--\eqref{eq:formula-kappa-0w+} have the form identical to the expressions for wave operators derived in \cite[Section 4]{MR573902}, \cite {MR1252228}, the remaining part of this justification is a modification of the argument of \cite {MR1252228}, as follows.


Let $S(z)-I$ be of the class $\mathfrak {S}_\infty(\overline{ \mathbb {C}}_+),$ {\it i.e.} a compact analytic operator function in the upper half-plane up to the real line. Then so is $(S(z)-I)/2$, which is also uniformly bounded in the upper half-plane along with $S(z)$. We next use the result of \cite[Theorem 3]{MR1252228} about the non-tangential boundedness of operators of the form $(I+T(z))^{-1}$ for $T(z)$ compact up to the real line. We infer that, provided $(I+(S(z_0)-I)/2)^{-1}$ exists for some $z_0\in \mathbb C_+$ (and hence, see \cite{Brodski}, everywhere in $\mathbb C_+$ except for a countable set of points accumulating only to the real line), one has non-tangential boundedness of $(I+(S(z)-I)/2)^{-1}$, and therefore also of $(I+S(z))^{-1}$, for almost all points of the real line.

On the other hand, the latter inverse can be computed in $\mathbb C_+$:
\begin{equation}\label{id1}
\bigl(I+S(z)\bigr)^{-1}=\frac{1}{2}\bigl(I+ i \alpha M(z)^{-1}\alpha/2\bigr).
\end{equation}
Indeed, one has
\begin{multline*}
\bigl(I+ i \alpha M(z)^{-1}\alpha/2\bigr)(I+S(z))\\[0.4em] =2I + i \alpha M(z)^{-1} \alpha +
i \alpha\bigl(B_{iI}^*-M(z)\bigr)^{-1}\alpha -i\alpha M(z)^{-1} B_{iI}^*\bigl(B_{iI}^*-M(z)\bigr)^{-1}\alpha=2I
\end{multline*}
and the second similar identity for the multiplication in the reverse order proves the claim.

It follows from (\ref{id1}) and the analytic properties of $M(z)$ that the inverse $(I+S(z))^{-1}$ exists everywhere in the upper half-plane. Thus, Theorem 3 of \cite{MR1252228}  is indeed applicable, which yields that
$(I+S(z))^{-1}$ is $\mathbb R$-a.e. nontangentially bounded and, by the operator generalisation of the Calderon theorem (see \cite{Calderon}), which was extended to the operator context in \cite[Theorem 1]{MR1252228}, it admits measurable non-tangential limits in the strong operator topology almost everywhere on $\mathbb R$. As it is easily seen, these limits must then coincide with $(I+ S(k))^{-1}$ for almost all $k\in \mathbb R$.

The same argument obviously applies to $(I +S^*(\bar{z}))^{-1}$ for $z\in \mathbb C_-,$ where the invertibility follows from the identity
\begin{equation}\label{id2}
\bigl(I+S^*(\bar z)\bigr)^{-1}=\frac{1}{2}\bigl(I - i \alpha M(z)^{-1}\alpha/2\bigr)
\end{equation}
obtained exactly as \eqref{id1}, by taking into account analytic properties of $M(z)$.

Finally, the identities
\begin{equation}\label{id3}
(I+\chi_\varkappa^-(S(z)-I))^{-1}=I - i \chi_\varkappa^-\alpha (B_\varkappa-M(z))^{-1}\alpha
\end{equation}
for $z\in \mathbb C_+$ and
\begin{equation}\label{id4}
(I+\chi_\varkappa^+(S^*(\bar{z})-I))^{-1}=I+i \chi_\varkappa^+\alpha (B_\varkappa-M(z))^{-1}\alpha
\end{equation}
for $z\in \mathbb C_-$ are used, again by an application of Theorem 3 of \cite{MR1252228}, to ascertain the existence of bounded $(I+\chi_\varkappa^-(S(k)-I))^{-1}$ and $(I+\chi_\varkappa^+(S^*(k)-I))^{-1}$ almost everywhere on $\mathbb R,$ provided that the operator $A_\varkappa$ has at least one regular point in each half-plane of the complex plane, see Proposition \ref{prop:weyl-function-properties}.  Under the assumptions on $S$ specified above, this latter condition immediately implies that the non-real spectrum of $A_\varkappa$ is countable and accumulates to $\mathbb R$ only. (Nevertheless, it could still accumulate to all points of the real line simultaneously.)

The presented argument allows one to verify the correctness of the formulae (\ref{eq:formula-0-kappaw-})--(\ref{eq:formula-kappa-0w+}) for the wave operators. Indeed, for the first of them one considers $\mathbbm{1}_n(k),$ the indicator of the set $\{k\in \mathbb R: \|(I+S(k))^{-1}\|\leq n\}.
$
Clearly, $\mathbbm{1}_n(k)\to 1$ as $n\to \infty$ for almost all $k\in \mathbb R$. Next, suppose that $P_K(\tilde g, g)\in \tilde N_{\rm e}^\varkappa$. Then $P_K\mathbbm{1}_n(\tilde g, g)$ is also a smooth vector and
$$
\binom{-(I+S)^{-1}\mathbbm{1}_n (I+S^*) g}{\mathbbm{1}_n g}\in \mathfrak H.
$$
Indeed, for any $(\tilde g,g)\in \mathfrak H$ one has
\begin{multline*}
\binom{-\mathbbm{1}_n(1+S)^{-1}(1+S^*)g}{\mathbbm{1}_n g}-\binom{\mathbbm{1}_n \tilde g}{\mathbbm{1}_n g}\\
=\binom{-\mathbbm{1}_n (1+S)^{-1}[(\tilde g + S^* g)+(S\tilde g+g)]}{0}\in
\binom{L^2(E)}{0}\in \mathfrak H,
\end{multline*}
whereas the inclusion in the set of smooth vectors follows directly from (\ref{lyubimaya_formula}).
It follows, by the Lebesgue dominated convergence theorem, that the set of vectors $P_K\mathbbm{1}_n(\tilde g, g)$ is dense in $N_{\rm e}^\varkappa$. The remaining three wave operators are treated in a similar way. Finally, the density of the range of the four wave operators follows from the density of their domains, by a standard inversion argument, see {\it e.g.} \cite{MR1180965}.

We have thus proved the following theorem.
\begin{theorem}
  \label{thm:existence-completeness-wave-operators}
  Let $A$ be a closed, symmetric, completely nonselfadjoint operator
  with equal deficiency indices and consider its extension $A_\varkappa,$ as described in Section \ref{sec:boundary-triples}, under the assumptions that ${\rm ker}(\alpha)=\{0\}$ (see (\ref{eq:b-kappa-def})) and that $A_\varkappa$ has at least one regular point in ${\mathbb C}_+$ and in ${\mathbb C}_-.$ If $S-I\in\mathfrak {S}_\infty(\overline{ \mathbb {C}}_+),$
  then the wave operators
  $W_\pm(A_0,A_\varkappa)$ and $W_\pm(A_\varkappa,A_0)$  exist on dense sets in $N_{\rm e}^\varkappa$ and
  $\mathcal{H}_{\rm ac}(A_0)$, respectively, and are given by the formulae
  (\ref{eq:formula-0-kappaw-})--(\ref{eq:formula-kappa-0w+}). The
  ranges of $W_\pm(A_0,A_\varkappa)$ and  $W_\pm(A_\varkappa,A_0)$ are dense in $\mathcal{H}_{\rm ac}(A_0)$ and $N_{\rm e}^\varkappa,$ respectively.\footnote{In the case when $A_\varkappa$ is self-adjoint, or, in general, the named wave operators are bounded, the claims of the theorem are equivalent (by the classical Banach-Steinhaus theorem) to the statement of the existence and completeness of the wave operators for the pair $A_0, A_\varkappa.$ Sufficient conditions of boundedness of these wave operators are contained in {\it e.g.} \cite[Section 4]{MR573902}, \cite {MR1252228} and references therein.}
\end{theorem}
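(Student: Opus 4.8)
The plan is to assemble Theorem \ref{thm:existence-completeness-wave-operators} from the machinery developed throughout the preceding sections, treating the four wave operators as variations on a single theme. First I would set up the dense-set argument for $W_-(A_0,A_\varkappa)$, which is the prototype. The starting point is the formal identity \eqref{eq:formula-0-kappaw-}, derived from Proposition \ref{prop:pre-wave-op} and Proposition \ref{prop:exp-funtion}. The task is to convert this formal computation into a genuine strong limit on a dense subset of $N_{\rm e}^\varkappa$. To this end, I would introduce the truncation $\mathbbm{1}_n(k)$, the indicator of $\{k\in\reals:\norm{(I+S(k))^{-1}}\le n\}$, and verify two things: that $P_K\mathbbm{1}_n\binom{\widetilde g}{g}$ is again a smooth vector (which follows from the criterion \eqref{lyubimaya_formula}), and that the truncated right-hand side lies in $\mathfrak{H}$. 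The latter rests on the algebraic identity
\begin{equation*}
\binom{-\mathbbm{1}_n(I+S)^{-1}(I+S^*)g}{\mathbbm{1}_n g}-\binom{\mathbbm{1}_n\widetilde g}{\mathbbm{1}_n g}=\binom{-\mathbbm{1}_n(I+S)^{-1}\bigl[(\widetilde g+S^*g)+(S\widetilde g+g)\bigr]}{0},
\end{equation*}
whose right member sits in $\binom{L^2(E)}{0}\subset\mathfrak{H}$ precisely because the bracketed terms are the $L^2(E)$-functions guaranteed by the discussion following \eqref{mathfrakH}. Finally, by the Lebesgue dominated convergence theorem $\mathbbm{1}_n\to 1$ pointwise a.e., so these truncated vectors are dense in $N_{\rm e}^\varkappa$, and on each of them the limit is legitimate.

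The crux of the argument, and what I expect to be the main obstacle, is justifying that $(I+S(k))^{-1}$ (and its companions appearing in \eqref{eq:formula-0-kappaw-}--\eqref{eq:formula-kappa-0w+}) actually exists as a bounded operator for almost every $k\in\reals$ and arises as a strong nontangential boundary limit; without this the truncation sets would be empty and the formal formulae would be vacuous. Here the hypothesis $S-I\in\mathfrak{S}_\infty(\overline{\mathbb C}_+)$ is essential. I would invoke \cite[Theorem 3]{MR1252228} on nontangential boundedness of $(I+T(z))^{-1}$ for $T$ compact up to the real line, applied to $T(z)=(S(z)-I)/2$. The applicability hinges on the \emph{existence} of $(I+(S(z)-I)/2)^{-1}=(I+S(z))/2)^{-1}$ at a single point $z_0\in\complex_+$; this I would establish via the explicit identity \eqref{id1}, which expresses $(I+S(z))^{-1}$ through $M(z)^{-1}$ and hence inherits the analytic invertibility of $M(z)$ in the upper half-plane from Proposition \ref{prop:weyl-function-properties}(2). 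Brodski{\u\i}'s theorem \cite{Brodski} then upgrades pointwise invertibility to invertibility off a discrete set, and the operator-valued Calderon theorem (extended in \cite[Theorem 1]{MR1252228}) supplies measurable strong nontangential limits a.e.\ on $\reals$, which must coincide with $(I+S(k))^{-1}$.

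For the remaining three operators I would run the parallel argument, replacing \eqref{id1} by the companion identities \eqref{id2}, \eqref{id3}, \eqref{id4}. The identities \eqref{id3}--\eqref{id4}, expressing $(I+\chi_\varkappa^-(S(z)-I))^{-1}$ and $(I+\chi_\varkappa^+(S^*(\bar z)-I))^{-1}$ via $(B_\varkappa-M(z))^{-1}$, require only that $A_\varkappa$ possess at least one regular point in each half-plane, which is exactly the standing hypothesis; by Proposition \ref{prop:weyl-function-properties}(3) this guarantees the requisite invertibility at a seed point, after which Theorem 3 of \cite{MR1252228} applies verbatim. I would note in passing that, under the compactness assumption on $S-I$, this condition forces the non-real spectrum of $A_\varkappa$ to be countable and to accumulate only at $\reals$. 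The density of the domains of the four wave operators being thus secured, the density of their ranges follows by the standard inversion argument, since $W_+(A_0,A_\varkappa)$ and $W_+(A_\varkappa,A_0)$ (respectively the ``$-$'' versions) are formal inverses of one another on the relevant smooth sets, as explained in \cite{MR1180965}. Assembling the four cases completes the proof.
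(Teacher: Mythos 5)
Your proposal follows the paper's own proof essentially step for step: the same formal derivation from Propositions \ref{prop:exp-funtion} and \ref{prop:pre-wave-op}, the same truncation by $\mathbbm{1}_n$ justified via \eqref{lyubimaya_formula}, the same use of the identities \eqref{id1}--\eqref{id4} together with \cite[Theorems 1 and 3]{MR1252228} and \cite{Brodski} to secure a.e.\ bounded nontangential limits of the relevant inverses, and the same inversion argument for density of the ranges. No gaps; only the order of presentation (truncation before the boundary-value analysis rather than after) differs from the paper.
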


\begin{remark}
\label{long_remark}
1. The identities \eqref{id1}--\eqref{id2} can be used to replace the condition $S(z)-I\in \mathfrak S_\infty(\overline{ \mathbb {C}}_+)$ by the following equivalent condition: $\alpha M(z)^{-1}\alpha$ is nontangentially bounded almost everywhere on the real line, and $\alpha M(z)^{-1}\alpha\in \mathfrak S_\infty(\overline{ \mathbb {C}}_+)$ for $\Im z\geq 0$. In order to do so, one notes that $(I+T)^{-1}-I=-(I+T)^{-1}T\in\mathfrak S_\infty(\overline{ \mathbb {C}}_+)$ as long as $T\in\mathfrak S_\infty(\overline{ \mathbb {C}}_+)$ and $(I+T)^{-1}$ is bounded.

2. The latter condition is satisfied \cite{Krein}, as long as the scalar function $\|\alpha M(z)^{-1}\alpha\|_{\mathfrak S_p}$ is nontangentially bounded almost everywhere on the real line for some $p<\infty,$ where ${\mathfrak S_p},$
$p\in(0, \infty],$ are the standard Schatten -- von Neumann classes of compact operators.

3. An alternative sufficient condition is the condition $\alpha\in\mathfrak S_2$ (and therefore $B_\varkappa \in \mathfrak S_1$), or, more generally, $\alpha M(z)^{-1}\alpha\in\mathfrak S_1,$ see \cite{MR1036844} for details.

4. Following from the analysis above, the existence and completeness of the wave operators for the par $A_\varkappa,$ $A_0$ is closely linked to the condition of $\alpha$ having a ``relative Hilbert-Schmidt property'' with respect to $M(z).$
Recalling that $B_\varkappa=\alpha\varkappa\alpha/2,$ this is not always feasible to expect. Nevertheless, by appropriately modifying the boundary triple, the situation can often be rectified. For example, if  $C_\varkappa=C_0+\alpha\varkappa\alpha/2,$ where $C_0$ and  $\varkappa$ are bounded and $\alpha\in{\mathfrak S}_2,$ replaces the operator $B_\varkappa$ in (\ref{eq:b-kappa-def}), then one ``shifts" the boundary triple ({\it cf.} the proof of Lemma \ref{lem:singular-projection-sa-part}):
$\widehat{\Gamma}_0=\Gamma_0,$ $\widehat{\Gamma}_1=\Gamma_1-C_0\Gamma_0.$ One thus obtains that in the new triple $({\mathcal K}, \widehat{\Gamma}_0, \widehat{\Gamma}_1)$ the operator $A_{\varkappa}$  coincides with the extension  corresponding to the boundary operator $B_\varkappa=\alpha\varkappa\alpha/2,$ whereas the Weyl-Titchmarsh function $M(z)$ undergoes a shift to the function $M(z)-C_0.$ The proof of Theorem 6.1 remains intact, while Part 3 of this remark yields that the condition $\alpha (M(z)-C_0)^{-1}\alpha\in{\mathfrak S}_1$ guarantees the existence and completeness of the wave operators for the pair $A_{C_0},$ $A_{C_\varkappa}.$ The fact that the operator $A_0$ here is replaced by the operator $A_{C_0}$ reflects the standard argument that the complete scattering theory for a pair of operators requires that the operators forming this pair are ``close enough" to each other.

\end{remark}


Finally, the scattering operator $\Sigma$ for the pair $A_\varkappa,$
$A_0$
is defined by
$$
\Sigma=W_{+}^{-1}(A_\varkappa,A_0) W_{-}(A_\varkappa,A_0).
$$
The above formulae for the wave operators lead ({\it cf.} \cite{MR573902}) to the following formula for the action of $\Sigma$ in the model representation:
\begin{equation}
\Phi\Sigma\Phi^*P_K \binom{\tilde g}{g}=
P_K \binom
{-(I+\chi_\varkappa^-(S-I))^{-1}(I+\chi_\varkappa^+(S^*-I))g}
{(I+S^*)^{-1}(I+S)(I+\chi_\varkappa^-(S-I))^{-1}(I+\chi_\varkappa^+(S^*-I))g},
\label{last_formula}
\end{equation}
whenever $\Phi^*P_K \binom{\tilde g}{g}\in\widetilde N_{\rm e}^0$. In fact, as explained
above, this representation holds on a dense linear set in
$\widetilde N_{\rm e}^0$ within the conditions of Theorem \ref{thm:existence-completeness-wave-operators}, which guarantees that
all the objects on the right-hand side of the formula (\ref{last_formula}) are correctly defined.

\section{Spectral representation for the absolutely continuous part of the operator $A_0$}
\label{sec:spectral-repr-ac}
The identity
$$
\biggl\|P_K\binom{\tilde g}{g}\biggr\|^2_{\mathfrak H}=\bigl\langle(I-S^*S)\tilde g, \tilde g \bigr\rangle
$$
which is derived in the same way as in \cite[Section 7]{MR573902} for all
$P_K \binom{\tilde g}{g}\in\widetilde N_{\rm e}^0$ and is equivalent to the
condition $(\tilde g+S^*g)+(S\tilde g+g)=0,$ see (\ref{lyubimaya_formula}), allows us to consider the
isometry
$F: \Phi\widetilde N_{\rm e}^0\mapsto L^2(E; I-S^*S)$
defined by the formula
\begin{equation}
FP_K \binom{\tilde g}{g}=\tilde g.
\label{F_def}
\end{equation}
Here $L^2(E; I-S^*S)$ is the Hilbert space of $E$-valued functions on
$\mathbb R$ square summable with the matrix ``weight'' $I-S^*S,$ {\it cf.} (\ref{mathfrakH}). Similarly, the formula
$$
F_*P_K \binom{\tilde g}{g}= g
$$
defines an isometry $F_*$ from $\Phi\widetilde N_{\rm e}^0$ to $L^2(E; I-SS^*).$

\begin{lemma}
Suppose that the assumptions of Theorem \ref{thm:existence-completeness-wave-operators} hold. Then the ranges of the operators $F$ and $F_*$ are dense in the spaces $L^2(E; I-S^*S)$ and $L^2(E; I-SS^*),$ respectively.
\end{lemma}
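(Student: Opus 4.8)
The plan is to exhibit an explicit dense subset of $\ran(F)$ directly. By the norm identity recorded immediately before the lemma, $F$ is an isometry from $\Phi\widetilde N_{\rm e}^0$ into $L^2(E;I-S^*S)$, so it suffices to prove that its range is dense. Specialising the characterisation (\ref{lyubimaya_formula}) to $\varkappa=0$, where $\chi_0^\pm=I/2$, an element $P_K\binom{\widetilde g}{g}$ lies in $\Phi\widetilde N_{\rm e}^0$ precisely when $(\widetilde g+S^*g)+(S\widetilde g+g)=0$, i.e. $(I+S)\widetilde g+(I+S^*)g=0$. Consequently, for \emph{any} $\widetilde g\in L^2(\reals,E)$ for which $g:=-(I+S^*)^{-1}(I+S)\widetilde g$ is again in $L^2(\reals,E)$, the pair $\binom{\widetilde g}{g}$ belongs to $\mathfrak H$ (the weight matrix being bounded) and satisfies the above relation; hence $P_K\binom{\widetilde g}{g}\in\Phi\widetilde N_{\rm e}^0$ with $FP_K\binom{\widetilde g}{g}=\widetilde g$. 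Thus $\ran(F)$ contains every such $\widetilde g$.

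First I would record that ordinary $L^2(\reals,E)$ functions are dense in $L^2(E;I-S^*S)$: since $0\le I-S^*S\le I$, the weighted norm is dominated by the $L^2(\reals,E)$-norm, and the weighted space is, by construction, a completion of genuine $E$-valued functions. Fix therefore $\widetilde g\in L^2(\reals,E)$. Invoking the assumptions of Theorem \ref{thm:existence-completeness-wave-operators}, the proof of that theorem guarantees that $(I+S^*(k))^{-1}$ exists and is nontangentially bounded for almost every $k\in\reals$. Let $\mathbbm 1_n$ denote the indicator of the set $\{k\in\reals:\norm{(I+S^*(k))^{-1}}\le n\}$; then $\mathbbm 1_n\to 1$ almost everywhere. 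Setting $\widetilde g_n:=\mathbbm 1_n\widetilde g$, the corresponding second component $g_n:=-\mathbbm 1_n(I+S^*)^{-1}(I+S)\widetilde g$ obeys $\norm{g_n(k)}_E\le 2n\norm{\widetilde g(k)}_E$ almost everywhere (using $\norm{S(k)}\le 1$), so $g_n\in L^2(\reals,E)$ and hence $\widetilde g_n\in\ran(F)$.

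It then remains to observe that $\widetilde g_n\to\widetilde g$ in $L^2(E;I-S^*S)$, since
\begin{equation*}
  \norm{\widetilde g-\widetilde g_n}_{L^2(E;I-S^*S)}^2=\int_\reals\bigl(1-\mathbbm 1_n(k)\bigr)\inner{(I-S^*S)\widetilde g(k)}{\widetilde g(k)}_E\,dk\longrightarrow 0
\end{equation*}
by dominated convergence, the integrand being majorised by the integrable function $\norm{\widetilde g(\cdot)}_E^2$. This establishes the density of $\ran(F)$. The claim for $F_*$ follows by the symmetric argument: the range of $F_*$ contains every $g\in L^2(\reals,E)$ for which $-(I+S)^{-1}(I+S^*)g\in L^2(\reals,E)$, and one truncates using the indicator of $\{k:\norm{(I+S(k))^{-1}}\le n\}$, whose almost-everywhere convergence to $1$ is again supplied by Theorem \ref{thm:existence-completeness-wave-operators}.

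The main obstacle is the control of the truncated second (respectively first) components $g_n$ and $\widetilde g_n$: their membership in $L^2(\reals,E)$ rests entirely on the almost-everywhere nontangential boundedness of $(I+S^*)^{-1}$ and $(I+S)^{-1}$ established \emph{within} the proof of Theorem \ref{thm:existence-completeness-wave-operators}, together with the a.e.\ finiteness of these boundary values that forces the cutoffs to exhaust $\reals$. A secondary point deserving care is the density of genuine $L^2(\reals,E)$ functions in the operator-weighted space $L^2(E;I-S^*S)$, whose weight may degenerate on sets of positive measure, so that one must rely on the concrete construction of this space as a completion of genuine functions rather than on an abstract density argument.
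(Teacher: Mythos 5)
Your argument is correct and follows essentially the same route as the paper: both proofs exhibit the explicit smooth-vector pairs $\bigl(\mathbbm{1}_n\widetilde g,\,-\mathbbm{1}_n(I+S^*)^{-1}(I+S)\widetilde g\bigr)$ on the sets where $(I+S^*(k))^{-1}$ is bounded by $n$ (invertibility a.e.\ being supplied by the proof of Theorem \ref{thm:existence-completeness-wave-operators}) and conclude by dominated convergence. The only cosmetic difference is that the paper avoids your preliminary reduction to $\widetilde g\in L^2(\reals,E)$ by pairing an arbitrary $\widetilde g\in L^2(E;I-S^*S)$ with $g=-S\widetilde g$ and observing that the truncated pair differs from $\mathbbm{1}_n\binom{\widetilde g}{g}$ by an element of $\binom{0}{L^2(E)}$, which sidesteps the density-of-$L^2(\reals,E)$ point you rightly flag as needing care.
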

\begin{proof}
Indeed, for all $\tilde g \in L^2(E; I-S^*S)$ and $g=-S\tilde{g}$ one has $(\tilde g, g)\in\mathcal H$
with $\|(\tilde g,g)\|_{\mathcal{H}}=\|\tilde g\|_{L^2(E; I-S^*S)}.$
By repeating the proof of Theorem \ref{thm:existence-completeness-wave-operators},
the operator $I+S^*$ is boundedly invertible almost everywhere on ${\mathbb R}.$

Further, consider $\mathbbm{1}_n(k),$ the indicator of the set $\{k\in \mathbb R: \|(I+S^*(k))^{-1}\|\leq n\}.$
For $\tilde{g}\in L^2(E; I-S^*S)$ and, as above, $g=-S\tilde{g},$ one has $\mathbbm{1}_n(\tilde g, -(I+S^*)^{-1}(I+S)\tilde g)\in\mathcal H$, since
$$
\mathbbm{1}_n\binom{\tilde g}{-(I+S^*)^{-1}(I+S)\tilde g}-\mathbbm{1}_n\binom{\tilde g}{g}\quad\quad\quad\quad\quad\quad\quad\quad\quad\quad\quad\quad
$$
$$
\quad\quad\quad\quad\quad\quad\quad\quad\quad\quad\quad\quad=\binom{0}{-\mathbbm{1}_n (I+S^*)^{-1}\bigl[(S\tilde g+g)+(\tilde g+S^* g)\bigr]}
\in \binom{0}{L^2(E)}.
$$
Finally, the set $\{\mathbbm{1}_n\tilde g\}$ is dense in $L^2(E; I-S^*S)$ by the Lebesgue
dominated convergence theorem, whereas
$P_K\mathbbm{1}_n(\tilde g, -(I+S^*)^{-1}(I+S)\tilde g)\in \widetilde N_{\rm e}^0$ by direct calculation.
\end{proof}

\begin{corollary}
\label{cor_thm}
The operator $F,$ respectively $F_*,$ admits an extension to the
unitary mapping between $\Phi N_{\rm e}^0$ and
$L^2(E; I-S^*S),$ respectively $L^2(E; I-SS^*).$
\end{corollary}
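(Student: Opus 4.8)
The plan is to appeal to the elementary principle that an isometry defined on a dense linear subset of a Hilbert space, and having dense range in the target space, extends uniquely to a \emph{unitary} operator. The two substantive ingredients have already been established: the isometric character of $F$ on $\Phi\widetilde{N}_{\rm e}^0$ is delivered by the norm identity preceding \eqref{F_def}, while the density of its range in $L^2(E;I-S^*S)$ is precisely the content of the preceding Lemma. What remains is merely the soft functional-analytic wrap-up.

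First I would record that $\Phi\widetilde{N}_{\rm e}^0$ is a dense linear subset of the closed subspace $\Phi N_{\rm e}^0\subset K$. This is immediate: by definition $N_{\rm e}^0=\clos(\widetilde{N}_{\rm e}^0)$, and since $\Phi\eval{\cH}$ is unitary onto $K$ one has $\Phi N_{\rm e}^0=\clos(\Phi\widetilde{N}_{\rm e}^0)$, the closure being taken in $K$. On this dense subset $F$ is a well-defined isometry into $L^2(E;I-S^*S)$: indeed, if two admissible representatives yield the same element of $K$, their difference is a smooth vector of zero $\mathfrak{H}$-norm, whence by the identity $\norm{P_K\binom{\widetilde{g}}{g}}_{\mathfrak H}^2=\inner{(I-S^*S)\widetilde{g}}{\widetilde{g}}$ it also has vanishing $L^2(E;I-S^*S)$-norm, so the formula \eqref{F_def} is unambiguous and norm-preserving.

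Next I would extend $F$ by continuity to a bounded operator $\bar{F}$ defined on all of the Hilbert space $\Phi N_{\rm e}^0$. As a continuous extension of a family of isometries, $\bar{F}$ is again an isometry; consequently its range is the isometric image of a complete space and is therefore closed in $L^2(E;I-S^*S)$. Since this range contains $F(\Phi\widetilde{N}_{\rm e}^0)$, which is dense by the preceding Lemma, it must in fact be all of $L^2(E;I-S^*S)$. Hence $\bar{F}$ is a surjective isometry, that is, a unitary mapping of $\Phi N_{\rm e}^0$ onto $L^2(E;I-S^*S)$.

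Finally, the identical reasoning applies verbatim to $F_*$, now invoking the companion identity $\norm{P_K\binom{\widetilde{g}}{g}}_{\mathfrak H}^2=\inner{(I-SS^*)g}{g}$ valid on $\widetilde{N}_{\rm e}^0$, together with the density of the range of $F_*$ furnished by the preceding Lemma. I do not expect a genuine obstacle here: the only points demanding any care are the well-definedness and isometry of $F$ and $F_*$ on the dense set $\Phi\widetilde{N}_{\rm e}^0$ (handled by the two norm identities) and the density of their ranges (the preceding Lemma); the promotion of a densely defined isometry with dense range to a unitary is then entirely routine.
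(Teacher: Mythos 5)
Your proposal is correct and coincides with what the paper intends: the corollary is an immediate consequence of the isometry identity $\|P_K\binom{\widetilde{g}}{g}\|^2_{\mathfrak H}=\langle(I-S^*S)\widetilde{g},\widetilde{g}\rangle$ (and its companion for $F_*$) together with the density of the ranges established in the preceding Lemma, the extension of a densely defined isometry with dense range to a unitary being routine. You have simply written out the standard continuity-and-closed-range argument that the paper leaves implicit.
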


It follows that the operator $(A_0-z)^{-1}$ (see notation (\ref{eq:a-kappa-def}))
considered on $\widetilde N_{\rm e}^0$ acts as the multiplication by
$(k-z)^{-1},$ $k\in{\mathbb R},$
both in $L^2(E; I-S^*S)$ and $L^2(E; I-SS^*)$. In particular,
if one considers the absolutely continuous ``part'' of the
operator $A_0$, namely the operator $A_0^{({\rm e})}:=A_0|_{N_{\rm e}^0},$ then  $F\Phi A_0^{({\rm e})}\Phi^*F^*$ and
$F_*\Phi A_0^{({\rm e})}\Phi^*F_*^*
$
are the operators of multiplication by the independent variable in the
spaces $L^2(E; I-S^*S)$ and $L^2(E; I-SS^*)$, respectively.

In order to obtain a spectral representation from the above result, it
is necessary to diagonalise the weights in the definitions of the
above $L^2$-spaces. The corresponding transformation is straightforward when
$\alpha=\sqrt{2}I.$ (This choice of $\alpha$ satisfies the conditions of Theorem \ref{thm:existence-completeness-wave-operators} {\it e.g.} when the boundary space $\mathcal K$ is finite-dimensional.
The corresponding diagonalisation in the general setting of non-negative, bounded $\alpha$ will be treated elsewhere.)  In this particular case one has
\begin{equation}
S=(M-iI)(M+iI)^{-1},
\label{SviaM}
\end{equation}
and
consequently
\begin{equation}
I-S^*S=-2i (M^*-iI)^{-1}(M-M^*)(M+iI)^{-1}
\label{weightform}
\end{equation}
and
$$
I-SS^*=2i (M+iI)^{-1}(M^*-M)(M^*-iI)^{-1}.
$$
Introducing the unitary transformations
\begin{equation}
\label{unit_trans1}
G: L^2(E; I-S^*S)\mapsto L^2(E; -2i(M-M^*)),
\end{equation}
\begin{equation}
\label{unit_trans2}
G_*: L^2(E; I-SS^*)\mapsto L^2(E; -2i(M-M^*))
\end{equation}
 by the
formulae $g\mapsto (M+iI)^{-1}g$ and
$g\mapsto (M^*-iI)^{-1} g$ respectively, one arrives at the fact that
$
GF\Phi A_0^{({\rm e})}\Phi^* F^*G^*$
and
$G_*F_*\Phi A_0^{({\rm e})}\Phi^*F_*^*G_*^*
$
are the operators of multiplication by the independent
variable in the space $L^2(E;-2i(M-M^*))$.

\begin{remark}
The weight $M^*-M$ can be assumed to be naturally diagonal in the setting of quantum graphs, see \cite{CherednichenkoKiselevSilva} ({\it cf.} \cite{MR3484377, MR3430381}), including the situation of an infinite number of semi-infinite edges.
\end{remark}

The above result
only pertains to the absolutely
continuous part of the self-adjoint operator $A_0$, unlike {\it e.g.}
the passage to the classical von Neumann direct integral, under which
the whole of the self-adjoint operator gets mapped to the
multiplication operator in a weighted $L^2$-space (see {\it e.g.}
\cite[Chapter 7]{MR1192782}). Nevertheless, it proves useful in scattering theory, since it
yields an explicit expression for the scattering matrix $\widehat{\Sigma}$ for the pair
$A_\varkappa,$ $A_0,$ which is the image of the scattering operator
$\Sigma$ in the spectral representation of the operator $A_0.$ Namely, we prove the following statement.

\begin{theorem}
The following formula holds:
\begin{equation}
\label{scat2}
\widehat{\Sigma}=GF\Sigma(GF)^*=
(M-\varkappa)^{-1}(M^*-\varkappa)(M^*)^{-1}M,
\end{equation}
where the right-hand side
represents the operator of multiplication by the corresponding function in the space $L^2(E;-2i(M-M^*))$.
\end{theorem}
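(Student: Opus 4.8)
The plan is to compute $\widehat\Sigma$ directly on the dense set of spectral images of smooth vectors, exploiting the model formula \eqref{last_formula} for $\Sigma$ together with the elementary factorisations of $S$ afforded by the choice $\alpha=\sqrt2 I$. By Corollary \ref{cor_thm} the operator $GF$ (with $F$ as in \eqref{F_def} and $G$ as in \eqref{unit_trans1}) extends to a unitary from $\Phi N_{\rm e}^0$ onto $L^2(E;-2\I(M-M^*))$, so it suffices to track the action of $\widehat\Sigma=GF\,\Phi\Sigma\Phi^*\,(GF)^*$ on vectors $u=GFP_K\binom{\widetilde g}{g}=(M+\I I)^{-1}\widetilde g$ arising from smooth $P_K\binom{\widetilde g}{g}\in\Phi\widetilde N_{\rm e}^0$.

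First I would apply $\Sigma$ to such a vector using \eqref{last_formula}: the top (that is, $\widetilde g$-)component of $\Phi\Sigma\Phi^*P_K\binom{\widetilde g}{g}$ is $\widetilde g'=-(I+\chi_\varkappa^-(S-I))^{-1}(I+\chi_\varkappa^+(S^*-I))g$, whence $\widehat\Sigma u=(M+\I I)^{-1}\widetilde g'$. Since the argument is a $\varkappa=0$ smooth vector, the characterisation \eqref{lyubimaya_formula} gives $(I+S)\widetilde g+(I+S^*)g=0$, i.e. $g=-(I+S^*)^{-1}(I+S)\widetilde g$; substituting this together with $\widetilde g=(M+\I I)u$ reduces the problem to simplifying the multiplier
\[
\widehat\Sigma=(M+\I I)^{-1}(I+\chi_\varkappa^-(S-I))^{-1}(I+\chi_\varkappa^+(S^*-I))(I+S^*)^{-1}(I+S)(M+\I I).
\]

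The heart of the matter is then a set of factorisations, all following from \eqref{SviaM} and \eqref{eq:chi-def} by direct algebra: $I+S=2M(M+\I I)^{-1}$, $S-I=-2\I(M+\I I)^{-1}$, $I+S^*=2(M^*-\I I)^{-1}M^*$, $S^*-I=2\I(M^*-\I I)^{-1}$, and consequently $I+\chi_\varkappa^-(S-I)=(M-\varkappa)(M+\I I)^{-1}$ together with $I+\chi_\varkappa^+(S^*-I)=(M^*-\varkappa)(M^*-\I I)^{-1}$. Feeding these into the display, the outer pairs $(M+\I I)^{-1}(M+\I I)$ telescope and, crucially, the middle block $(M^*-\I I)^{-1}(M^*)^{-1}(M^*-\I I)$ collapses to $(M^*)^{-1}$ because all functions of $M^*$ mutually commute; the scalar factors $\tfrac12$ and $2$ cancel. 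What survives is exactly $(M-\varkappa)^{-1}(M^*-\varkappa)(M^*)^{-1}M$, the asserted multiplier.

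The main obstacle is not the algebra but the justification that every inverse appearing above is a well-defined bounded multiplier for almost every $k\in\mathbb R$, so that the formal identity is an identity of genuine multiplication operators on a dense set. This is inherited from the proof of Theorem \ref{thm:existence-completeness-wave-operators}: the nontangential boundary values of $(I+S)^{-1}$, $(I+S^*)^{-1}$, $(I+\chi_\varkappa^-(S-I))^{-1}$ and $(I+\chi_\varkappa^+(S^*-I))^{-1}$ exist and are bounded $\mathbb R$-a.e., which in turn secures the a.e. invertibility of $M^*$ and of $M-\varkappa$ in the final expression. One must also check that $\Sigma$ maps $\widetilde N_{\rm e}^0$ into itself, which is immediate from \eqref{last_formula}, whose output pair again satisfies the $\varkappa=0$ smoothness relation, so that post-composition with $GF$ is legitimate; the passage from this dense set to all of $L^2(E;-2\I(M-M^*))$ is then furnished by the unitarity asserted in Corollary \ref{cor_thm}.
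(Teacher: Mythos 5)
Your proposal is correct and follows essentially the same route as the paper: both extract $F\Sigma F^*$ from \eqref{last_formula} together with the $\varkappa=0$ smoothness relation \eqref{lyubimaya_formula}, substitute the factorisations of $S$ in terms of $M$ coming from \eqref{SviaM}, and then conjugate by $G$. The only (cosmetic) difference is that you perform the final conjugation by direct sandwiching with the multipliers $(M+\I I)^{\mp1}$, whereas the paper carries it out through an inner-product computation with the weight \eqref{weightform}; the algebra and the justification of a.e.\ invertibility via Theorem \ref{thm:existence-completeness-wave-operators} are the same.
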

\begin{proof}

Using the definition (\ref{F_def}) of the isometry $F$ along with the relationship (\ref{lyubimaya_formula}) between $\widetilde{g}$ and $g$ whenever
$P_K\binom{\widetilde{g}}{g}\in\Phi \widetilde{N}_{\rm e}^\varkappa$ with $\varkappa=0,$
we obtain from (\ref{last_formula}):
\begin{equation}
\label{scat1}
F\Sigma F^*=
\bigl(I+\chi_\varkappa^-(S-I)\bigr)^{-1}\bigl(I+\chi_\varkappa^+(S^*-I)\bigr)(I+S^*)^{-1}(I+S),
\end{equation}
where the right-hand side
represents the operator of multiplication by the corresponding function.

Furthermore, substituting the expression (\ref{S_definition}) for $S$ in terms of $M$ implies that $F\Sigma F^*$ is the operator of multiplication by
\[
(M+iI)(M-\varkappa)^{-1}(M^*-\varkappa)(M^*)^{-1}M(M+iI)
\]
in the space $L^2({\mathcal K}; I-S^*S).$ Using (\ref{weightform}), we now obtain the following identity for all $f, g\in L^2({\mathcal K}; I-S^*S):$
\[
\langle F\Sigma F^*f,g\rangle_{L^2({\mathcal K}; I-S^*S)}=\bigl\langle (I-S^*S)(M+iI)(M-\varkappa)^{-1}(M^*-\varkappa)(M^*)^{-1}M(M+iI)f, g\bigr\rangle
\]
\[
=\bigl\langle -2i (M^*-iI)^{-1}(M-M^*)(M+iI)^{-1}(M+iI)(M-\varkappa)^{-1}(M^*-\varkappa)(M^*)^{-1}M(M+iI)f, g\bigr\rangle
\]
\[
=\bigl\langle -2i(M-M^*)(M-\varkappa)^{-1}(M^*-\varkappa)(M^*)^{-1}M(M+iI)f, (M+iI)g\bigr\rangle,
\]
which is equivalent to (\ref{scat2}), in view of the definition of the operator $G.$
\end{proof}



\section*{Acknowledgements}

KDC is grateful for the financial support of
the Engineering and Physical Sciences Research Council: Grant EP/L018802/2 ``Mathematical foundations of metamaterials: homogenisation, dissipation and operator theory''. AVK has been partially supported by the RFBR grant 16-01-00443-a and the Russian Federation Government megagrant 14.Y26.31.0013. LOS has been partially supported by UNAM-DGAPA-PAPIIT IN105414 and SEP-CONACYT 254062.

We express deep gratitude to Professor Sergey Naboko for his reading the manuscript and making a number of valuable remarks.

We also thank the referee for a number of useful suggestions, which have helped us improve the manuscript.

\appendix
\noindent\textbf{Proof of Theorem~\ref{thm:rhyzhov2.5}.}

We prove Theorem~\ref{thm:rhyzhov2.5}(i). The proof of
Theorem~\ref{thm:rhyzhov2.5}(ii) is carried out along the same lines.

For any $(v_-,u,v_+)$ in the space $\mathscr{H}$ given in
(\ref{eq:wider-hilbert-pavlov}), consider the mappings
$\mathscr{F}_\pm:\mathscr{H}\to L^2(\reals,E)$
introduced in \cite[Sec.\,2.1]{MR2330831} following the corresponding
definitions in \cite{MR573902} and given by
\begin{align}
  \mathscr{F}_+(v_-,u,v_+)&=-\frac{1}{\sqrt{2\pi}}
\lim_{\epsilon\searrow 0}\alpha\Gamma_0
  (A_{\I I}-(\cdot-\I\epsilon)I)^{-1}u+S^*\hat{v}_-+\hat{v}_+
  \label{eq:f+def}\\
  \mathscr{F}_-(v_-,u,v_+)&=-\frac{1}{\sqrt{2\pi}}
 \lim_{\epsilon\searrow 0} \alpha\Gamma_0
  (A_{\I I}^*-(\cdot+\I\epsilon)I)^{-1}u+\hat{v}_-+S\hat{v}_+\,,
\label{eq:f-def}
\end{align}
where $\hat{v}_\pm$ are the Fourier transforms of $v_\pm\in
L^2(\reals_\pm,E)$ extended by zero to $L^2(\reals,E)$. Note that the
limits exist almost everywhere due to (\ref{eq:naboko-thm-1}).

According to \cite[Thm.\,2.3]{MR2330831}, if
$\binom{\widetilde{g}}{g}=\Phi h$, then
\begin{equation}
 \label{eq:action-f}
  \mathscr{F}_+ h=\widetilde{g}+S^*g\,,\qquad
  \mathscr{F}_- h=S\widetilde{g}+g\,.
\end{equation}
Therefore, for proving Theorem~\ref{thm:rhyzhov2.5}(i), one should
establish the validity of the identities:
\begin{equation}
 \label{eq:validating-equalities}
  \mathscr{F}_\pm(A_\varkappa-z I)^{-1}\Phi^{-1}\binom{\widetilde{g}}{g}
=\mathscr{F}_\pm\Phi^{-1}P_K\frac{1}{\cdot-z}
\binom{\widetilde{g}}
{g-\chi_\varkappa^+\Theta^{-1}_\varkappa(z)(\widetilde{g}+S^*g)(z)}
\end{equation}
for $z\in\complex_-\cap\rho(A_\varkappa)$. First we compute the left-hand-side of
\eqref{eq:validating-equalities}.
It follows from Lemma~\ref{lem:ryzhov-formulae}(\ref{eq:ii-kappa}),
(\ref{eq:kappa-ii}) that, for
$z,\lambda\in\complex_-\cap\rho(A_\varkappa)$ and $h\in\mathcal{H}$,
\begin{align*}
  \alpha\Gamma_0(A_{\I I}&-zI)^{-1}(A_\varkappa-\lambda I)^{-1}h\\
&=\Theta_\varkappa(z)\alpha\Gamma_0(A_\varkappa -z
I)^{-1}(A_\varkappa-\lambda I)^{-1}h\\
&=\frac{1}{z-\lambda}\Theta_\varkappa(z)\alpha\Gamma_0
\left[(A_\varkappa-z I)^{-1}-(A_\varkappa-\lambda I)^{-1}\right]h\\
&=\frac{1}{z-\lambda}\left[\alpha\Gamma_0(A_{\I I}-zI)^{-1}-
\Theta_\varkappa(z)\alpha\Gamma_0(A_\varkappa-\lambda
I)^{-1}\right]h\\
&=\frac{1}{z-\lambda}\left[\alpha\Gamma_0(A_{\I I}-zI)^{-1}-
\Theta_\varkappa(z)\Theta_\varkappa ^{-1}(\lambda)\alpha\Gamma_0(A_{\I I}-\lambda I)^{-1}\right]h\,.
\end{align*}
Let $z=k-\I\epsilon$ with $k\in\reals$, then it follows from the
computation above that
\begin{align*}
  &\lim_{\epsilon\searrow 0}\alpha\Gamma_0(A_{\I I}-(k-\I\epsilon)I)^{-1}
(A_\varkappa-\lambda I)^{-1}h\\ &=\lim_{\epsilon\searrow 0}
\frac{1}{(k-\I\epsilon)-\lambda}\left[\alpha\Gamma_0(A_{\I I}-(k-\I\epsilon)I)^{-1}-
\Theta_\varkappa(k-\I\epsilon)\Theta_\varkappa ^{-1}(\lambda)\alpha\Gamma_0(A_{\I I}-\lambda I)^{-1}\right]h\,.
\end{align*}
Substituting (\ref{eq:f+def}) into the last equality, one has
\begin{equation*}
  \mathscr{F}_+(A_\varkappa-\lambda I)^{-1}h=
 \frac{1}{\cdot-\lambda}\left[\mathscr{F}_+h-
\Theta_\varkappa(\cdot)\Theta_\varkappa ^{-1}(\lambda)\mathscr{F}_+h(\lambda)
\right]\,.
\end{equation*}
Hence, in view of \eqref{eq:action-f}, one concludes
\begin{equation}
\label{eq:f+on-resolvent}
  \mathscr{F}_+(A_\varkappa-\lambda I)^{-1}\Phi^{-1}\binom{\widetilde{g}}{g}=
 \frac{1}{\cdot-\lambda}\left[
 \widetilde{g}+S^*g-\Theta_\varkappa(\cdot)\Theta_\varkappa
   ^{-1}(\lambda)(\widetilde{g}+S^*g)(\lambda)\right]\,.
\end{equation}

On the basis of Lemma~\ref{lem:ryzhov-formulae}(\ref{eq:ii-kappa+}),
(\ref{eq:kappa-ii+}) and reasoning in the same fashion as was done to
obtain (\ref{eq:f+on-resolvent}), one verifies
\begin{equation}
  \label{eq:f-on-resolvent}
  \mathscr{F}_-(A_\varkappa-\lambda I)^{-1}\Phi^{-1}\binom{\widetilde{g}}{g}=
 \frac{1}{\cdot-\lambda}\left[
 S\widetilde{g}+g-\widehat{\Theta}_\varkappa(\cdot)\Theta_\varkappa
   ^{-1}(\lambda)(\widetilde{g}+S^*g)(\lambda)\right]\,.
\end{equation}

Let us focus on the right hand side of
\eqref{eq:validating-equalities}. Note that
\begin{align}
  &P_K\frac{1}{\cdot-z}
\binom{\widetilde{g}}
{g-\chi_\varkappa^+\Theta^{-1}_\varkappa(z)(\widetilde{g}+S^*g)(z)}\nonumber\\[3mm]
 &=\binom{\frac{\widetilde{g}}{\cdot-z}-P_+\frac{1}{\cdot-z}
  [\widetilde{g}+S^*g-S^*\chi_\varkappa^+\Theta^{-1}_\varkappa(z)(\widetilde{g}+S^*g)(z)]}{\frac{1}{\cdot-z}(g-\chi_\varkappa^+\Theta^{-1}_\varkappa(z)(\widetilde{g}+S^*g)(z))-P_-\frac{1}{\cdot-z}[S\widetilde{g}+g-\chi_\varkappa^+\Theta^{-1}_\varkappa(z)(\widetilde{g}+S^*g)(z)]}\nonumber\\[3mm]
&=\frac{1}{\cdot-z}\binom{\widetilde{g}-(\widetilde{g}+S^*g)(z) +
S^*(\cc{z})\chi_\varkappa^+\Theta^{-1}_\varkappa(z)(\widetilde{g}+S^*g)(z)}
{g-\chi_\varkappa^+\Theta^{-1}_\varkappa(z)(\widetilde{g}+S^*g)(z)}
\label{eq:rhs}
\end{align}
where (\ref{eq:pk-action}) is used in the first equality and in the
second the fact that if $f$ is a function in ${H}_-^2$, then, for any $z\in\complex_-$,
\begin{equation}
\label{eq:focus-pocus}
  P_+\left(\frac{f}{\cdot-z}\right)=P_+\left(\frac{f+f(z)-f(z)}{\cdot-z}\right)=P_+\left(\frac{f(z)}{\cdot-z}\right)=\frac{f(z)}{\cdot-z}\,.
\end{equation}
Now, apply $\mathscr{F}_+\Phi^{-1}$ to
(\ref{eq:rhs}) taking into account (\ref{eq:action-f}):
\begin{align*}
  &\mathscr{F}_+\Phi^{-1}\frac{1}{\cdot-z}\binom{\widetilde{g}-(\widetilde{g}+S^*g)(z) +
S^*(\overline{z})\chi_\varkappa^+\Theta^{-1}_\varkappa(z)(\widetilde{g}+S^*g)(z)}
{g-\chi_\varkappa^+\Theta^{-1}_\varkappa(z)(\widetilde{g}+S^*g)(z)}\\
&=\frac{1}{\cdot-z}[\widetilde{g}+S^*g-(\widetilde{g}+S^*g)(z)+
(S^*(\overline{z})-S^*)\chi_\varkappa^+\Theta^{-1}_\varkappa(z)(\widetilde{g}+S^*g)(z)]\\
&=\frac{1}{\cdot-z}[\widetilde{g}+S^*g-
(\Theta_\varkappa(z)-(S^*(\overline{z})-S^*)\chi_\varkappa^+)\Theta^{-1}_\varkappa(z)(\widetilde{g}+S^*g)(z)]\\
&=\frac{1}{\cdot-z}[\widetilde{g}+S^*g-\Theta(\cdot)\Theta^{-1}_\varkappa(z)(\widetilde{g}+S^*g)(z)].
\end{align*}
By combining the last equality with (\ref{eq:f+on-resolvent}), we have
established the first identity in (\ref{eq:validating-equalities}).

Now, if one applies $\mathscr{F}_-\Phi^{-1}$ to
(\ref{eq:rhs}), then, in view of (\ref{eq:action-f}), one has
\begin{align*}
  &\mathscr{F}_-\Phi^{-1}\frac{1}{\cdot-z}\binom{\widetilde{g}-(\widetilde{g}+S^*g)(z) +
S^*(\overline{z})\chi_\varkappa^+\Theta^{-1}_\varkappa(z)(\widetilde{g}+S^*g)(z)}
{g-\chi_\varkappa^+\Theta^{-1}_\varkappa(z)(\widetilde{g}+S^*g)(z)}\\
&=\frac{1}{\cdot-z}[S\widetilde{g}+g-S(\widetilde{g}+S^*g)(z)-
(I-SS^*(\overline{z}))\chi_\varkappa^+\Theta^{-1}_\varkappa(z)(\widetilde{g}+S^*g)(z)]\\
&=\frac{1}{\cdot-z}[S\widetilde{g}+g-
(S\Theta_\varkappa(z)+\chi_\varkappa^+-SS^*(\overline{z})\chi_\varkappa^+)\Theta^{-1}_\varkappa(z)(\widetilde{g}+S^*g)(z)]\\
&=\frac{1}{\cdot-z}[S\widetilde{g}+g-
(S\chi_\varkappa^-+\chi_\varkappa^-)\Theta^{-1}_\varkappa(z)(\widetilde{g}+S^*g)(z)]\\
&=\frac{1}{\cdot-z}[S\widetilde{g}+g-\widehat{\Theta}_\varkappa(\cdot)\Theta^{-1}_\varkappa(z)(\widetilde{g}+S^*g)(z)]
\end{align*}
Thus, after comparing this last equality with
(\ref{eq:f-on-resolvent}), we arrive at the second identity in
(\ref{eq:validating-equalities}).

\

\noindent\textbf{Proof of Theorem~\ref{lem:similar-to-naboko-thm-4}.}

Let us first show that the following inclusion holds
\begin{equation*}
 \mathfrak{N}^\varkappa_\pm\subset\left\{\binom{\widetilde{g}}{g}\in\mathfrak{H}:
  \Phi(A_{\varkappa}-z I)^{-1}\Phi^*P_K\binom{\widetilde{g}}{g}=
P_K\frac{1}{\cdot-z}\binom{\widetilde{g}}{g}\text{ for all } z\in\complex_\pm\right\}
\end{equation*}
Consider $z\in\complex_-\cap\rho(A_\varkappa)$.
  By (\ref{eq:pk-action}) and Theorem~\ref{thm:rhyzhov2.5}, one has
  \begin{align*}
  &\Phi(A_\varkappa-z I)^{-1}\Phi^{-1}P_K\binom{\widetilde{g}}{g}
=\Phi(A_\varkappa-z
I)^{-1}\Phi^{-1}\binom{\widetilde{g}-P_+(\widetilde{g}+S^*g)}{g-P_-(S\,\widetilde{g}+g)}\\[5mm]
&\!\!=P_K\frac{1}{\cdot-z}\binom{\widetilde{g}-P_+(\widetilde{g}+S^*g)}
{g-P_-(S\widetilde{g}+g)-\chi_\varkappa^+\Theta_\varkappa^{-1}(z)\left[\widetilde{g}-
P_+(\widetilde{g}+S^*g)+S^*(g-P_-(S\widetilde{g}+g))\right](z)}
 \end{align*}
 where
 \begin{equation*}
   \left[\widetilde{g}-
P_+(\widetilde{g}+S^*g)+S^*(g-P_-(S\widetilde{g}+g))\right](z)
 \end{equation*}
is to be understood as the analytic continuation into the lower
half-plane of the function
\begin{equation}
  \label{eq:function_in_h-}
  \widetilde{g}-
P_+(\widetilde{g}+S^*g)+S^*(g-P_-(S\widetilde{g}+g))=P_-(\widetilde{g}+S^*g)-S^*P_-(S\widetilde{g}+g),
\end{equation}
which is clearly an element of ${H}^2_-(E).$
Thus,
\begin{equation}
\label{eq:initial-expression}
  \Phi(A_\varkappa-z I)^{-1}\Phi^{-1}P_K\binom{\widetilde{g}}{g}
= P_K\frac{1}{\cdot-z}\binom{\widetilde{g}-P_+(\widetilde{g}+S^*g)}
{g-P_-(S\widetilde{g}+g)-
\gamma(z)}
\end{equation}
where
\begin{equation}
 \label{eq:kappa-def}
\gamma(z):=\chi_\varkappa^+\Theta_\varkappa^{-1}(z)\bigl(P_-(\widetilde{g}+S^*g)(z)-S^*P_-(S\widetilde{g}+g)(z)\bigr).
\end{equation}
The following lemma is needed to simplify the form of $\gamma(z)$.

\begin{lemma}
\label{final_lemma}
For all $\binom{\widetilde{g}}{g}\in\mathfrak{H}$ the following identity holds:
\[
\gamma(z)=-P_-(S\widetilde{g}+g)(z)\ \ \ \ \ \forall z\in{\mathbb C_-}.
\]
\end{lemma}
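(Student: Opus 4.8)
The plan is to fix $z\in\complex_-$ and to prove the asserted equality as a pointwise identity in $E$, exploiting a factorised form of $\Theta_\varkappa(z)$. First I would record, from \eqref{eq:alternate-for-theta} together with $\chi_\varkappa^++\chi_\varkappa^-=I$, that
\begin{equation*}
\Theta_\varkappa(z)=I+\bigl(S^*(\cc{z})-I\bigr)\chi_\varkappa^+=\chi_\varkappa^-+S^*(\cc{z})\chi_\varkappa^+,\qquad z\in\complex_-.
\end{equation*}
Abbreviate $a:=P_-(\widetilde{g}+S^*g)$ and $b:=P_-(S\widetilde{g}+g)$, both elements of $H^2_-(E)$, and let $a(\cdot),\,b(\cdot)$ be their analytic continuations into $\complex_-$. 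Since $z\mapsto S^*(\cc{z})=S(\cc{z})^*$ is a bounded $\mathcal{B}(E)$-valued analytic function on $\complex_-$ whose boundary values agree almost everywhere with the adjoint boundary function $S^*$, the product $S^*b$ lies in $H^2_-(E)$ and continues to $S^*(\cc{z})b(z)$. Hence the $H^2_-(E)$-function $a-S^*b$ occurring in \eqref{eq:kappa-def} continues to $a(z)-S^*(\cc{z})b(z)$, so that
\begin{equation*}
\gamma(z)=\chi_\varkappa^+\Theta_\varkappa(z)^{-1}\bigl(a(z)-S^*(\cc{z})b(z)\bigr).
\end{equation*}

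The computational core is the relation
\begin{equation*}
\chi_\varkappa^+a(z)+\chi_\varkappa^-b(z)=0,\qquad z\in\complex_-,
\end{equation*}
obtained by continuing into $\complex_-$ the $H^2_-(E)$-equality $\chi_\varkappa^+P_-(\widetilde{g}+S^*g)+\chi_\varkappa^-P_-(S\widetilde{g}+g)=0$, which is the $P_-$-projection identity recorded in \eqref{definition_curlyN} once one uses that the constant operators $\chi_\varkappa^\pm$ commute with $P_-$. Substituting $\chi_\varkappa^+a(z)=-\chi_\varkappa^-b(z)$ and $\chi_\varkappa^-a(z)=a(z)+\chi_\varkappa^-b(z)$ into the factorisation, I would obtain
\begin{equation*}
\Theta_\varkappa(z)\bigl(a(z)-b(z)\bigr)=\chi_\varkappa^-a(z)-\chi_\varkappa^-b(z)+S^*(\cc{z})\bigl(\chi_\varkappa^+a(z)-\chi_\varkappa^+b(z)\bigr)=a(z)-S^*(\cc{z})b(z).
\end{equation*}
Thus $\Theta_\varkappa(z)^{-1}\bigl(a(z)-S^*(\cc{z})b(z)\bigr)=a(z)-b(z)$, and therefore
\begin{equation*}
\gamma(z)=\chi_\varkappa^+\bigl(a(z)-b(z)\bigr)=\chi_\varkappa^+a(z)-\chi_\varkappa^+b(z)=-\chi_\varkappa^-b(z)-\chi_\varkappa^+b(z)=-b(z),
\end{equation*}
which is the claimed identity $\gamma(z)=-P_-(S\widetilde{g}+g)(z)$ for every $z\in\complex_-$.

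I expect the one genuinely substantive input to be the relation $\chi_\varkappa^+a+\chi_\varkappa^-b=0$, that is, the $P_-$-projection identity of \eqref{definition_curlyN} continued into $\complex_-$; it is this relation that powers both substitutions in the central display, the first producing $\Theta_\varkappa(z)(a(z)-b(z))=a(z)-S^*(\cc{z})b(z)$ and the second collapsing $\chi_\varkappa^+(a(z)-b(z))$ to $-b(z)$. The remaining ingredients are routine: the factorisation $\Theta_\varkappa(z)=\chi_\varkappa^-+S^*(\cc{z})\chi_\varkappa^+$, the commutation of the constant operators $\chi_\varkappa^\pm$ with $P_-$, and the identification of the analytic continuation of $S^*b$ with $S^*(\cc{z})b(z)$. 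Once these are in place the lemma follows at once.
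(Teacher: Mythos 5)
Your algebra is correct and, at its core, this is the same proof as the paper's: both arguments turn on one substantive input, namely the relation $\chi_\varkappa^+P_-(\widetilde{g}+S^*g)+\chi_\varkappa^-P_-(S\widetilde{g}+g)=0$ continued analytically into $\complex_-$. The difference is packaging. The paper substitutes the explicit formula \eqref{eq:theta-inverse} for $\Theta_\varkappa^{-1}(z)$, commutes $\chi_\varkappa^+$ through, and invokes the further identity $I+\I\chi_\varkappa^+\alpha(B_\varkappa-M(z))^{-1}\alpha=\bigl(I+\chi_\varkappa^+(S^*(\cc{z})-I)\bigr)^{-1}$ before making the same key substitution $\chi_\varkappa^+a(z)\mapsto-\chi_\varkappa^-b(z)$; you instead work entirely at the level of $S$, via the factorisation $\Theta_\varkappa(z)=\chi_\varkappa^-+S^*(\cc{z})\chi_\varkappa^+$ and the direct verification $\Theta_\varkappa(z)\bigl(a(z)-b(z)\bigr)=a(z)-S^*(\cc{z})b(z)$. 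Your route is tidier: it dispenses with the $M$-function resolvent identities altogether and needs only the invertibility of $\Theta_\varkappa(z)$, which is already implicit in \eqref{eq:kappa-def}.

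One point, however, needs correcting. You describe the core relation as ``the $P_-$-projection identity recorded in \eqref{definition_curlyN}''. But \eqref{definition_curlyN} is a \emph{definition}, not an identity: it is precisely the condition singling out the subspace $\mathfrak{N}^\varkappa_-$ inside $\mathfrak{H}$, and for a generic $\binom{\widetilde{g}}{g}\in\mathfrak{H}$ it fails. Indeed, your own computation makes this visible: for $\varkappa=0$ the claimed conclusion $\gamma(z)=-P_-(S\widetilde{g}+g)(z)$ is \emph{equivalent} to $P_-\bigl[(\widetilde{g}+S^*g)+(S\widetilde{g}+g)\bigr]=0$, which certainly does not hold for every element of $\mathfrak{H}$. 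So, strictly speaking, your argument proves the lemma under the hypothesis $\binom{\widetilde{g}}{g}\in\mathfrak{N}^\varkappa_-$, not ``for all $\binom{\widetilde{g}}{g}\in\mathfrak{H}$'' as the statement reads. To be fair, the paper's own proof makes exactly the same silent substitution in the fourth line of its chain of equalities, so the quantifier in the lemma is a slip in the paper itself; note that the lemma is only ever applied to elements of $\mathfrak{N}^\varkappa_-$ — immediately afterwards in the proof of Theorem~\ref{lem:similar-to-naboko-thm-4}, and again in the proof of Theorem~\ref{lem:on-smooth-vectors-other-form}, where $u\in\widetilde{N}^\varkappa_-$. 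Your writeup becomes fully correct (and in this respect improves on the paper) if you state the hypothesis $\binom{\widetilde{g}}{g}\in\mathfrak{N}^\varkappa_-$ explicitly at the point where you invoke the relation, rather than presenting it as valid on all of $\mathfrak{H}$.
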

\begin{proof}
\[
\chi_\varkappa^+\Theta_\varkappa^{-1}(z)\bigl(P_-(\widetilde{g}+S^*g)(z)-S^*(\cc{z})P_-(S\widetilde{g}+g)(z)\bigr)
\]
\[
=\chi_\varkappa^+\bigl(I+i\alpha(B_\varkappa-M(z))^{-1}\alpha\chi_\varkappa^+\bigr)\bigl(P_-(\widetilde{g}+S^*g)(z)-S^*(\overline{z})P_-(S\widetilde{g}+g)(z)\bigr)
\]
\[
=\bigl(I+i\chi_\varkappa^+\alpha(B_\varkappa-M(z))^{-1}\alpha\bigr)\chi_\varkappa^+\bigl(P_-(\widetilde{g}+S^*g)(z)-S^*(\overline{z})P_-(S\widetilde{g}+g)(z)\bigr)
\]
\[
=\bigl(I+\chi_\varkappa^+(S^*(\overline{z})-I)\bigr)^{-1}\bigl(\chi_\varkappa^+P_-(\widetilde{g}+S^*g)(z)-\chi_\varkappa^+S^*(\overline{z})P_-(S\widetilde{g}+g)(z)\bigr)
\]
\[
=\bigl(I+\chi_\varkappa^+(S^*(\overline{z})-I)\bigr)^{-1}\bigl(-\chi_\varkappa^-P_-(S\widetilde{g}+g)(z)-\chi_\varkappa^+S^*(\overline{z})P_-(S\widetilde{g}+g)(z)\bigr)
\]
\[
=\bigl(I+\chi_\varkappa^+(S^*(\overline{z})-I)\bigr)^{-1}\bigl(-\chi_\varkappa^--\chi_\varkappa^+S^*(\overline{z})\bigr)P_-(S\widetilde{g}+g)(z)=-P_-(S\widetilde{g}+g)(z),
\]
where we use the fact that
\[
I+i\chi_\varkappa^+\alpha(B_\varkappa-M(z))^{-1}\alpha=\bigl(I+\chi_\varkappa^+(S^*(\overline{z})-I)\bigr)^{-1},
\]
proved in a similar way to (\ref{eq:theta-inverse}).
\end{proof}


Therefore, for $ \binom{\widetilde{g}}{g}\in\mathfrak{N}_-^\varkappa$
the expression \eqref{eq:initial-expression} can be re-written as
\begin{align*}
  \Phi(A_\varkappa-z I)^{-1}\Phi^{-1}P_K\binom{\widetilde{g}}{g}&=
P_K\frac{1}{\cdot-z}\binom{\widetilde{g}-P_+(\widetilde{g}+S^*g)}
{g-P_-(S\widetilde{g}+g)+ P_-(S\widetilde{g} + g)(z)}\\[0.5em]
&=P_K\frac{1}{\cdot-z}\left[
\binom{\widetilde{g}}{g}-
\binom{P_+(\widetilde{g}+S^*g)}
{P_-(S\widetilde{g}+g)- P_-(S\widetilde{g} + g)(z)}
\right]
\end{align*}
One completes the proof by observing that
\begin{equation*}
  \frac{P_+(\widetilde{g}+S^*g)}{\cdot-z}\in H^2_+(E),\quad\quad\quad
  \frac{P_-(S\widetilde{g}+g)-P_-(S\widetilde{g} + g)(z)}{\cdot-z}\in H^2_-(E).
\end{equation*}
We have thus shown that
\begin{equation*}
 \mathfrak{N}^\varkappa_-\subset\left\{\binom{\widetilde{g}}{g}\in\mathfrak{H}:
  \Phi(A_{\varkappa}-z I)^{-1}\Phi^*P_K\binom{\widetilde{g}}{g}=
P_K\frac{1}{\cdot-z}\binom{\widetilde{g}}{g}\text{ for all } z\in\complex_-\right\}\,.
\end{equation*}
The inclusion
\begin{equation*}
 \mathfrak{N}^\varkappa_+\subset\left\{\binom{\widetilde{g}}{g}\in\mathfrak{H}:
  \Phi(A_{\varkappa}-z I)^{-1}\Phi^*P_K\binom{\widetilde{g}}{g}=
P_K\frac{1}{\cdot-z}\binom{\widetilde{g}}{g}\text{ for all } z\in\complex_+\right\}
\end{equation*}
is proved analogously.

To prove the converse inclusion, i.e.
  \begin{equation*}
  \left\{\binom{\widetilde{g}}{g}\in\mathfrak{H}:
  \Phi(A_{\varkappa}-z I)^{-1}\Phi^*P_K\binom{\widetilde{g}}{g}=
P_K\frac{1}{\cdot-z}\binom{\widetilde{g}}{g}\text{ for all } z\in\complex_\pm\right\}
\subset\mathfrak{N}^\varkappa_\pm
\end{equation*}
  one again follows the arguments of \cite[Thm.\,4]{MR573902}.
  According to (\ref{eq:initial-expression}), for all
  $z\in\complex_-\cap\rho(A_\varkappa)$, one has
  \begin{equation*}
    \Phi(A_\varkappa-z I)^{-1}\Phi^{-1}P_K\binom{\widetilde{g}}{g}
= P_K\frac{1}{\cdot-z}\binom{\widetilde{g}-P_+(\widetilde{g}+S^*g)}
{g-P_-(S\widetilde{g}+g)-\gamma(z)},
  \end{equation*}
where $\gamma(z)$ is defined in (\ref{eq:kappa-def}). Denoting $\widehat{\gamma}:=\gamma+P_-(S\widetilde{g}+g),$ it follows from (\ref{eq:pk-action}) that
\begin{align*}
  \Phi(A_\varkappa-z I)^{-1}\Phi^{-1}P_K\binom{\widetilde{g}}{g}
- P_K\frac{1}{\cdot-z}\binom{\widetilde{g}}{g}&= P_K\binom{0}{-\widehat{\gamma}(z)(\cdot-z)^{-1}}\\[0.5em]
&=\binom{P_+(S^*\widehat{\gamma}(z)(\cdot-z)^{-1})}{-\widehat{\gamma}(z)(\cdot-z)^{-1}
+ P_-(\widehat{\gamma}(z)(\cdot-z)^{-1})}
\end{align*}
Furthermore, in view of (\ref{eq:focus-pocus}), one has
\begin{equation*}
  P_+\left[\frac{S^*\widehat{\gamma}(z)}{\cdot-z}\right] = \frac{S^*(\cc{z})\widehat{\gamma}(z)}{\cdot-z}
\end{equation*}
and, clearly,
\begin{equation*}
  P_-\left[\frac{\widehat{\gamma}(z)}{\cdot-z}\right]=0\,.
\end{equation*}
Therefore
\begin{equation*}
  \Phi(A_\varkappa-z I)^{-1}\Phi^{-1}P_K\binom{\widetilde{g}}{g}
- P_K\frac{1}{\cdot-z}\binom{\widetilde{g}}{g}=
\binom{S^*(\cc{z})\widehat{\gamma}(z)(\cdot-z)^{-1}}
{-\widehat{\gamma}(z)(\cdot-z)^{-1}}\,.
\end{equation*}
Since
\begin{equation*}
  \Phi(A_{\varkappa}-z I)^{-1}\Phi^*P_K\binom{\widetilde{g}}{g}=
P_K\frac{1}{\cdot-z}\binom{\widetilde{g}}{g}\text{ for all } z\in\complex_-\,,
\end{equation*}
one has
\begin{equation*}
  \binom{S^*(\cc{z})\widehat{\gamma}(z)(\cdot-z)^{-1}}
{-\widehat{\gamma}(z)(\cdot-z)^{-1}}=0
\end{equation*}
which in its turn implies
\begin{equation*}
\bigl(S^*-S^*(\cc{z})\bigr)\widehat{\gamma}(z)(\cdot-z)^{-1}=0\,.
\end{equation*}
From this equality, by virtue of the assumption that $\ker(\alpha)=\{0\},$ 
one obtains that $\gamma(z)=0$ for
all $z\in\complex_-\cap\rho(A_\varkappa)$ (see details in the proof of \cite[Lem.\,4]{MR0500225}). 
Taking into account the definition of $\widehat{\gamma},$
one arrives at
\begin{equation*}
  \chi_\varkappa^-P_\pm(S\,\widetilde{g} +
      g)+\chi_\varkappa^+ P_\pm(\widetilde{g} + S^*g)=0\,.
\end{equation*}
The inclusion
  \begin{equation*}
  \left\{\binom{\widetilde{g}}{g}\in\mathfrak{H}:
  \Phi(A_{\varkappa}-z I)^{-1}\Phi^*P_K\binom{\widetilde{g}}{g}=
P_K\frac{1}{\cdot-z}\binom{\widetilde{g}}{g}\text{ for all } z\in\complex_+\right\}
\subset\mathfrak{N}^\varkappa_+
\end{equation*}
is proved in a similar way.
\\[.5cm]
\noindent\textbf{Proof of Theorem~\ref{lem:on-smooth-vectors-other-form}.}

To prove the inclusion
\begin{equation*}
\widetilde{N}_-^\varkappa\subset\bigl\{u\in\cH: \chi_\varkappa^+\alpha\Gamma_0(A_{\varkappa}-z
I)^{-1}u\in H^2_-(E)\bigr\}\,,
\end{equation*}
one has to show that $u\in\Phi^*P_K\mathfrak{N}^\varkappa_-$ implies
 $\chi_\varkappa^+\alpha\Gamma_0(A_{\varkappa}-z
I)^{-1}u\in H^2_-(E)$. By \eqref{eq:pk-action}, if $u=\Phi^*P_K\binom{\widetilde{g}}{g}$, then
\begin{equation*}
  \Phi u=\begin{pmatrix}
  \widetilde{g}-P_+(\widetilde{g}+S^*g)\\
  g-P_-(S\,\widetilde{g}+g)
\end{pmatrix}\,.
\end{equation*}
Thus, in view of the inclusion $\binom{\widetilde{g}}{g}\in K,$ it follows from \eqref{eq:action-f} that
\begin{align*}
   \mathscr{F}_+u&=\widetilde{g}-P_+(\widetilde{g}+S^*g)+S^*g-S^*P_-(S\widetilde{g}+g)\\
                &=(I-P_+)(\widetilde{g}+S^*g)-S^*P_-(S\widetilde{g}+g)\\
                &=P_-(\widetilde{g}+S^*g)-S^*P_-(S\widetilde{g}+g)\,.
\end{align*}
By analytic continuation of
$ \mathscr{F}_+u$  into the lower half-plane, taking into account (\ref{eq:f+def}), one arrives at
\begin{equation*}
  \alpha\Gamma_0(A_{\I I}-z I)^{-1}u=-\sqrt{2\pi}\bigl(P_-(\widetilde{g}+S^*g)(z) -
S^*(\cc{z})P_-(S\widetilde{g}+g)(z)\bigr)\quad\quad\forall z\in\complex_-.
\end{equation*}
Combining this with Lemma
\ref{lem:ryzhov-formulae}(\ref{eq:kappa-ii}), we write
\begin{equation*}
  \alpha\Gamma_0(A_{\varkappa}-zI)^{-1}u
=-\sqrt{2\pi}\Theta_\varkappa^{-1}(z)\bigl(P_-(\widetilde{g}+S^*g)(z) -
S^*(\cc{z})P_-(S\widetilde{g}+g)(z)\bigr).
\end{equation*}
Finally, using Lemma \ref{final_lemma} from the proof of Theorem \ref{lem:similar-to-naboko-thm-4} above, we obtain
\[
 \chi_\varkappa^+\alpha\Gamma_0(A_{\varkappa}-zI)^{-1}u=\sqrt{2\pi}P_-(S\widetilde{g}+g)(z),
\]

To demonstrate the converse inclusion
\begin{equation*}
\bigl\{u\in\cH: \chi_\varkappa^+\alpha\Gamma_0(A_{\varkappa}-z
I)^{-1}u\in H^2_-(E)\bigr\}\subset\widetilde{N}_-^\varkappa,
\end{equation*}
we show that, whenever $\chi_\varkappa^+\alpha\Gamma_0(A_{\varkappa}-zI)^{-1}u\in H^2_-(E),$ the vector
\[
\binom{\widetilde{g}}{g}=\Phi u-\frac{1}{2\pi}\binom{0}{\alpha\Gamma_0(A_{\varkappa}-zI)^{-1}u}
\]
satisfies
\[
P_-\left(\chi_\varkappa^+(\widetilde{g}+S^*g)+\chi_\varkappa^-(S\widetilde{g}+g)\right)=0,
\]
and hence $u=\Phi^*P_K\binom{\widetilde{g}}{g}\in \Phi^*P_K\mathfrak{N}^\varkappa_-=\widetilde{N}^\varkappa_{\rm e}.$
Indeed, introducing the notation
\[
\Phi u=:\binom{\widetilde{g_0}}{g_0},\quad\quad\quad\quad h^-:=\frac{1}{2\pi}\alpha\Gamma_0(A_{iI}-zI)^{-1}u,
\]
we have
\begin{equation}
P_-\left(\chi_\varkappa^+(\widetilde{g_0}+S^*(g_0+h^-))+\chi_\varkappa^-(S\widetilde{g_0}+g_0+h^-)\right)
\label{boundary_values}
\end{equation}
\[
=\chi_\varkappa^+(\widetilde{g_0}+S^*g_0)-P_+\chi_\varkappa^+(\widetilde{g_0}+S^*g_0)+P_-\chi_\varkappa^-(S\widetilde{g_0}+g_0)+
\bigl(I+\chi_\varkappa^+(S^*-I)\bigr)h^-
\]
\[
=\chi_\varkappa^+ \mathscr{F}_+u+\bigl(I+\chi_\varkappa^+(S^*-I)\bigr)h^-,
\]
By the analytic continuation into the lower half-plane and using Lemma \ref{lem:ryzhov-formulae}(i), it follows that (\ref{boundary_values}) represents the boundary value
on the real line of the function
\[
-\frac{1}{2\pi}\chi_\varkappa^+\alpha\Gamma_0(A_{iI}-zI)^{-1}u+\bigl(I+\chi_\varkappa^+(S^*(\overline{z})-I)\bigr)h^-(z)
\]
\begin{equation}
=-\frac{1}{2\pi}\chi_\varkappa^+\Theta_\varkappa(z)\alpha\Gamma_0(A_{\varkappa}-zI)^{-1}u+\bigl(I+\chi_\varkappa^+(S^*(\overline{z})-I)\bigr)h^-(z)
\label{penultimate}
\end{equation}
\begin{equation}
=\bigl(I+\chi_\varkappa^+(S^*(\overline{z})-I)\bigr)\biggl(h^-(z)-\frac{1}{2\pi}\chi_\varkappa^+\alpha\Gamma_0(A_{\varkappa}-zI)^{-1}u\biggr)=0,
\label{ultimate}
\end{equation}
where in order to pass from (\ref{penultimate}) to (\ref{ultimate}) we have used the fact that (see (\ref{eq:theta}))
\[
\chi_\varkappa^+\Theta_\varkappa(z)
=\bigl(I-\I\chi_\varkappa^+\alpha(B_{\I I}-M(z))^{-1}\alpha\bigr)\chi_\varkappa^+=\bigl(I+\chi_\varkappa^+(S^*(\overline{z})-I)\bigr)\chi_\varkappa^+,\ \ \ \ z\in{\mathbb C}_-.
\]
Hence, the expression (\ref{boundary_values}) vanishes, which concludes the proof.

The property
\begin{equation*}
\widetilde{N}_+^\varkappa=\bigl\{u\in\cH: \chi_\varkappa^-\alpha\Gamma_0(A_{\varkappa}-z
I)^{-1}u\in H^2_+(E)\bigr\}\,
\end{equation*}
is proved in a similar way.

\

\noindent\textbf{Proof of Proposition~\ref{lem:brothers-naboko}.}

  Suppose that $z\in\complex_+$. If
  \begin{equation*}
    \int_\reals\frac{d\mu(s)}{s-z}\in H_+^2\,,
  \end{equation*}
then, by \cite[Thm.\,5.19]{MR1307384}, there exists a
function $f\in L^2(\reals)$ such that
\begin{equation*}
  \int_\reals\frac{f(s)ds-d\mu(s)}{s-z}=0\,.
\end{equation*}
Fix a $z_0\in\complex_+$, then
\begin{align*}
  0=&\int_\reals\frac{f(s)ds-d\mu(s)}{s-z}-
\int_\reals\frac{f(s)ds-d\mu(s)}{s-z_0}\\
=&(z-z_0)\int_\reals\frac{f(s)ds-d\mu(s)}{(s-z)(s-z_0)}\,.
\end{align*}
Thus, one has
\begin{equation*}
  \int_\reals\frac{1}{s-z}\frac{f(s)ds-d\mu(s)}{s-z_0}=0\,,\quad\text{ for
    all } z\in\complex_+\setminus\{z_0\}\,,
\end{equation*}
where now $(s-z_0)^{-1}(f(s)ds-d\mu(s))$
is a complex measure on $\reals$. Further, we invoke to the upper
half-plane counterpart of the theorem by F. and M. Riesz obtained by
applying the conformal mapping from the unit circle onto the upper
half plane \cite[Chap.\,2,
Sec.\,A]{MR1669574}. This theorem implies that
$(s-z_0)^{-1}(f(s)dt-d\mu(s))$ is
absolutely continuous with respect to the Lebesgue measure
and, therefore, the same applies to $d\mu(s)$.

The case of  $H_-^2$ is treated likewise.

\def\cprime{$'$} \def\lfhook#1{\setbox0=\hbox{#1}{\ooalign{\hidewidth
  \lower1.5ex\hbox{'}\hidewidth\crcr\unhbox0}}}

\end{document}